\newtheorem{lemma}{Lemma}
\newtheorem{theorem}{Theorem}
\newtheorem{rem}{Remark}
\newtheorem{cor}{Corollary}
\newcommand{\betau}[1]{\beta^{#1}}
\newcommand{\covderiv}[1]{\nabla_{#1}}
\newcommand{\collision}[1]{\mathbb{C}\left[#1\right]}
\newcommand{\cudd}[3]{\Gamma^{#1}_{~{#2}{#3}}}
\newcommand{\f}[2]{\frac{#1}{#2}}
\newcommand{\fourvelocityEd}[1]{n_{#1}}
\newcommand{\fourvelocityEu}[1]{n^{#1}}
\newcommand{\fDG}{f_{\mbox{\tiny \rm{DG}}}}
\newcommand{\gDG}{g_{\mbox{\tiny \rm{DG}}}}
\newcommand{\fhDG}[1]{\hat{f}_{\mbox{\tiny DG},{#1}}}
\newcommand{\gdd}[2]{g_{#1#2}}
\newcommand{\gmud}[2]{\gamma^{#1}_{~#2}}
\newcommand{\gmdd}[2]{\gamma_{#1#2}}
\newcommand{\jacPubdtld}[2]{P^{\bar{#1}}_{~\tilde{#2}}}
\newcommand{\jacPutlddb}[2]{P^{\tilde{#1}}_{~\bar{#2}}}
\newcommand{\lambdadtlddtld}[2]{\lambda_{\tilde{#1}\tilde{#2}}}
\newcommand{\lambdautldutld}[2]{\lambda^{\tilde{#1}\tilde{#2}}}
\newcommand{\ld}[1]{l_{#1}}
\newcommand{\lu}[1]{l^{#1}}
\newcommand{\ldb}[1]{l_{\bar{#1}}}
\newcommand{\lub}[1]{l^{\bar{#1}}}
\newcommand{\mdet}{\sqrt{-g}}
\newcommand{\pderiv}[2]{\frac{\partial #1}{\partial #2}}
\newcommand{\pmdet}{\sqrt{\lambda}}
\newcommand{\pd}[1]{p_{#1}}
\newcommand{\pu}[1]{p^{#1}}
\newcommand{\pdb}[1]{p_{\bar{#1}}}
\newcommand{\pub}[1]{p^{\bar{#1}}}
\newcommand{\putld}[1]{p^{\tilde{#1}}}
\newcommand{\rPerp}{R}
\newcommand{\smdet}{\sqrt{\gamma}}
\newcommand{\tauderiv}[1]{\f{\partial #1}{\partial\tau}}
\newcommand{\tauderivc}[1]{\f{d#1}{d\tau}}
\newcommand{\tetubd}[2]{e^{\bar{#1}}_{~#2}}
\newcommand{\tetudb}[2]{e^{#1}_{~\bar{#2}}}
\newcommand{\tetradEpsilon}{E}
\newcommand{\tetradPhi}{\Phi}
\newcommand{\tetradTheta}{\Theta}
\newcommand{\vect}[1]{\boldsymbol{#1}}
\newcommand{\xu}[1]{x^{#1}}
\newcommand{\xub}[1]{x^{\bar{#1}}}
\newcommand{\sumx}{\sum_{i=1}^{d_x}}
\newcommand{\sump}{\sum_{i=1}^{d_p}}
\newcommand{\sumz}{\sum_{i=1}^{d_z}}
\newcommand{\prodz}{\prod_{i=1}^{d_z}}
\newcommand{\tV}{{\tilde{V}}}
\newcommand{\tK}{{\tilde{\bK}}}
\newcommand{\tQ}{{\tilde{\bQ}}}
\newcommand{\tS}{{\tilde{\bS}}}
\newcommand{\tz}{{\tilde{\vect{z}}}}
\newcommand{\zL}{z_{\mbox{\tiny \rm L}}}
\newcommand{\zH}{z_{\mbox{\tiny \rm H}}}
\newcommand{\rL}{r_{\mbox{\tiny \rm L}}}
\newcommand{\rH}{r_{\mbox{\tiny \rm H}}}
\newcommand{\RL}{\rPerp_{\mbox{\tiny \rm L}}}
\newcommand{\RH}{\rPerp_{\mbox{\tiny \rm H}}}
\newcommand{\muL}{\mu_{\mbox{\tiny \rm L}}}
\newcommand{\muH}{\mu_{\mbox{\tiny \rm H}}}
\newcommand{\PhiL}{\tetradPhi_{\mbox{\tiny \rm L}}}
\newcommand{\PhiH}{\tetradPhi_{\mbox{\tiny \rm H}}}
\newcommand{\eL}{\tetradEpsilon_{\mbox{\tiny \rm L}}}
\newcommand{\eH}{\tetradEpsilon_{\mbox{\tiny \rm H}}}
\newcommand{\p}{\partial}
\newcommand{\dt}{\Delta t}
\newcommand{\dz}{\Delta z}
\newcommand{\quand}{\quad \mbox{and} \quad}
\begin{document}
\begin{frontmatter}
\title{Bound-Preserving Discontinuous Galerkin Methods for Conservative Phase Space Advection in Curvilinear Coordinates \tnoteref{ldrd-support}}
\date{\today}
\author[ornl,utk-phys]{Eirik Endeve\corref{cor}}
\ead{endevee@ornl.gov}

\author[ornl,utk-math]{Cory D. Hauck}
\ead{hauckc@ornl.gov}

\author[ornl,utk-math]{Yulong Xing}
\ead{xingy@math.utk.edu}

\author[utk-phys]{Anthony Mezzacappa}
\ead{mezz@utk.edu}

\cortext[cor]{Corresponding author. Tel.:+1 865 576 6349; fax:+1 865 241 0381}

\tnotetext[ldrd-support]{This research is sponsored, in part, by the Laboratory Directed Research and Development Program of Oak Ridge National Laboratory (ORNL), managed by UT-Battelle, LLC for the U. S. Department of Energy under Contract No. De-AC05-00OR22725.  It used resources of the Oak Ridge Leadership Computing Facility at ORNL provided through the INCITE program and a Director's Discretionary allocation.  The research of the second author is supported in part by NSF under Grant No. 1217170. The research of the third author is supported in part by NSF grant DMS-1216454.}

\address[ornl]{Computational and Applied Mathematics Group,
				Oak Ridge National Laboratory,
				Oak Ridge, TN 37831 USA }

\address[utk-math]{Department of Mathematics,
				University of Tennessee
				Knoxville, TN 37996-1320}
				
\address[utk-phys]{Department of Physics and Astronomy,
				University of Tennessee
				Knoxville, TN 37996-1200}

\begin{abstract}
We extend the positivity-preserving method of Zhang \& Shu \citep{ZS2010a} to simulate the advection of neutral particles in phase space using curvilinear coordinates.  
The ability to utilize these coordinates is important for non-equilibrium transport problems in general relativity and also in science and engineering applications with specific geometries.  
The method achieves high-order accuracy using Discontinuous Galerkin (DG) discretization of phase space and strong stability-preserving, Runge-Kutta (SSP-RK) time integration.  
Special care in taken to ensure that the method preserves strict bounds for the phase space distribution function $f$; i.e., $f\in[0,1]$.  
The combination of suitable CFL conditions and the use of the high-order limiter proposed in \citep{ZS2010a} is sufficient to ensure positivity of the distribution function.  
However, to ensure that the distribution function satisfies the upper bound, the discretization must, in addition, preserve the divergence-free property of the phase space flow.  
Proofs that highlight the necessary conditions are presented for general curvilinear coordinates, and the details of these conditions are worked out for some commonly used coordinate systems (i.e., spherical polar spatial coordinates in spherical symmetry and cylindrical spatial coordinates in axial symmetry, both with spherical momentum coordinates).  
Results from numerical experiments --- including one example in spherical symmetry adopting the Schwarzschild metric --- demonstrate that the method achieves high-order accuracy and that the distribution function satisfies the maximum principle.  
\end{abstract}

\begin{keyword}
Boltzmann equation, 
Radiation transport, 
Hyperbolic conservation laws, 
Discontinuous Galerkin, 
Maximum principle, 
High order accuracy
\end{keyword}

\end{frontmatter}

\section{Introduction}
\label{sec:introduction}

In this paper, we design discontinuous Galerkin methods for the solution of the collisionless, \emph{conservative} Boltzmann equation
in general curvilinear coordinates
\begin{equation}
  \pderiv{f}{t}
  +\f{1}{\smdet}\sumx\pderiv{}{x^i}\big(\,\smdet\,F^{i}f\,\big) 
  +\f{1}{\pmdet}\sump\pderiv{}{p^{i}}\big(\,\pmdet\,G^{i}f\,\big)=0 
  \label{eq:ConservativeBoltzmannEquationCurvilinearIntro}
\end{equation}
that preserve, in the sense of local cell averages, the physical bounds on the distribution function $f=f(\vect{x},\vect{p},t)$.  
This function gives the density of particles with respect to the phase space measure $d\vect{x}\,d\vect{p}$.  
In Equation \eqref{eq:ConservativeBoltzmannEquationCurvilinearIntro}, $t\in\mathbb{R}^{+}$ represents time, and $x^{i}$ and $p^{i}$ are components of the position vector $\vect{x}\in\mathbb{R}^{d_x}$ and momentum vector $\vect{p}\in\mathbb{R}^{d_p}$, respectively.  
In general, $d_x=d_p=3$, but when imposing symmetries for simplified geometries, some dimensions may not need to be considered.
$F^{i}$ and $G^{i}$ are coefficients of the position space flux vector $\vect{F}f$ and the momentum space flux vector $\vect{G}f$, respectively, while ${\smdet}\ge0$ and ${\pmdet}\ge0$ are the determinants of the position space and momentum space metric tensors, respectively.
(See \ref{sec:equations} for more details. In particular, Equation \eqref{eq:ConservativeBoltzmannEquationCurvilinearIntro} is obtained from the conservative,  general relativistic Boltzmann equation in the limit of a \emph{time-independent} spacetime.)  
Equation \eqref{eq:ConservativeBoltzmannEquationCurvilinearIntro} must be supplemented with appropriate boundary and initial conditions which, at this point, are left unspecified.  

The upper and lower bounds on $f$ follow from the \emph{non-conservative} advection equation 
\begin{equation}\label{eq:AdvectiveBoltzmannEquationCurvilinearIntro}
 \pderiv{f}{t}+\sumx F^i\pderiv{f}{x^i}+ \sump G^i\pderiv{f}{p^i}=0, 
\end{equation}
which is formally equivalent to \eqref{eq:ConservativeBoltzmannEquationCurvilinearIntro} due to the divergence-free property of the phase space, or ``Liouville," flow
\begin{equation}
  \f{1}{\smdet}\sumx\pderiv{}{x^i}\big(\,\smdet\,F^{i}\,\big) 
  +\f{1}{\pmdet}\sump\pderiv{}{p^{i}}\big(\,\pmdet\,G^{i}\,\big)=0.
  \label{eq:divergenceFreePhaseSpaceFlowCurvilinearAbstract}
\end{equation}
Indeed, it is straightforward to show that \eqref{eq:AdvectiveBoltzmannEquationCurvilinearIntro} preserves the bounds of the initial and boundary data.  
(Here we assume that the distribution function $f(\vect{x},t,\vect{p})$ satisfies $f\in[0,1]~\forall t$.)  
We employ the conservative form for two major reasons: (1) it is mathematically convenient when discontinuities are present and (2) it leads naturally to numerical methods with conservative properties.  
The drawback is that preserving point-wise bounds on $f$ becomes non-trivial.  

Discontinuous Galerkin (DG) methods \citep[see e.g.,][and references therein]{cockburnShu_2001,cockburn_2003,hesthavenWarburton_2008} for phase space discretization are attractive for several reasons.  
First, they achieve high-order accuracy on a compact, local stencil so that data is only communicated with nearest neighbors, regardless of the formal order of accuracy.  
This leads to a high computation to communication ratio, and favorable parallel scalability on heterogeneous architectures \citep{klockner_etal_2009}.  
Second, they exhibit favorable properties when collisions are added to the right-hand side of \eqref{eq:ConservativeBoltzmannEquationCurvilinearIntro}.  
In particular, they recover the correct asymptotic behavior in the diffusion limit \citep{LM1989,A2001,guermondKanschat_2010}, which is characterized by frequent collisions with a material background and long time scales.  
To leverage these properties, it is important to preserve positivity in the phase space advection step since negative distribution functions are physically meaningless.  
In the case of fermions, $f$ is also bounded above (i.e., $f\le1$), which introduces Pauli blocking factors in the collision operator.  
Violation of these bounds can result in numerical difficulties due to nonlinearities that can come from material coupling \citep{MH2010}.  
Simply introducing a cutoff in the algorithm is unacceptable, since this results in loss of conservation --- a critical check on physical consistency.  

In this paper, we extend the approach introduced in \citep{ZS2010a} in order to preserve upper and lower bounds of scalar conservation laws.  
The approach has three basic ingredients.  
First, one expresses the update of the (approximate) cell average in a forward Euler step as a linear combination of conservative updates.  
This requires a quadrature representation of the current local polynomial approximation that calculates the cell average exactly.  
Second, a limiter is introduced which modifies the current polynomial approximation, making point-wise values satisfy the prescribed bound on the quadrature set while maintaining the cell average.  
These two steps ensure that the Euler update of the cell average satisfies the required bounds.  
The third and final step is to apply a Strong Stability-Preserving Runge-Kutta (SSP-RK) method \citep[e.g.,][]{gottlieb_etal_2001} which can be expressed as a convex combination of Euler steps and therefore preserves the same bounds as the Euler step.

The method from \citep{ZS2010a} has been extended and applied in many ways.  
Positivity-preserving DG and weighted essentially non-osciilatory (WENO) methods have been designed for convection-diffusion equations \citep{ZZS2013,Y2014}, the Euler equations with source terms \citep{ZS2011b}, the shallow water equations \citep{XZS2010}, multi-material flows \citep{CS2014}, the ideal MHD equations \citep{CLQX2013}, moment models for radiation transport \citep{OHF2012}, and PDEs involving global integral terms including a hierarchical size-structured population model \citep{ZZS2011}.  
The specific problem of maintaining a positive distribution function in phase space has been considered in \citep{CGP2012,QS2011,RS2011} for the case of Cartesian coordinates.  
In \citep{CGP2012}, the authors consider an Eulerian scheme for the Boltzmann-Poisson system with a linear collision operator.  
In \citep{QS2011,RS2011}, semi-Lagrangian schemes are used to approximate the Vlasov-Poisson system, which contains no collisions.  
In the current work we also ignore the effects of the collision operator, and consider the conservative phase space advection equation in \eqref{eq:ConservativeBoltzmannEquationCurvilinearIntro}.  
We enforce both the upper and lower bounds on $f$ for general curvilinear coordinates.  
This introduces some nontrivial differences.  In particular, 
\begin{enumerate}
\item The volume element in each computational phase space cell depends on the coordinates.  
This means that mass matrices can vary from cell to cell.  
It also complicates the quadrature needed for exact evaluation of the cell average.  
Finally, the balance between cell averages and fluxes that gives the proper bounds requires special treatment.  
These last two properties may lead to a reduced CFL condition.
\item The divergence-free property \eqref{eq:divergenceFreePhaseSpaceFlowCurvilinearAbstract} relies on a delicate balance between position space and momentum space divergences \citep[see e.g.,][]{cardall_etal_2013}.  
In the Cartesian case, each of these terms is individually zero, so that the balance between is not important. 
\end{enumerate}

This paper is organized as follows: in Section~\ref{sec:maximumPrinciple} we develop a high-order, bounded-preserving DG method for solving the conservative phase space advection equation given by \eqref{eq:ConservativeBoltzmannEquationCurvilinearIntro}.  
Details of the method are worked out for some commonly used phase space coordinates (i.e., spherical polar spatial coordinates in spherical symmetry --- including a general relativistic example adopting the Schwarzschild metric --- and cylindrical spatial coordinates in axial symmetry, both with spherical momentum coordinates).  
The limiter proposed in \citep{ZS2010a}, which ensures that point-wise values of $f$ satisfy the maximum principle, is briefly summarized in Section~\ref{sec:limiter}.  
Numerical results demonstrating that our high-order DG method satisfies the maximum principle for the specific cases considered in Section~\ref{sec:maximumPrinciple} are presented in Section~\ref{sec:numericalExamples}.  
We also evaluate the efficiency of the high-order DG methods for the conservative phase-space advection problem.
Summery and conclusions are given in Section~\ref{sec:conclusions}.  
This work is motivated by our objective to develop robust, high-order methods to simulate neutrino transport in core-collapse supernovae \citep[see e.g.,][for reviews]{mezzacappa_2005,kotake_etal_2006,janka_2012,burrows_2013}.  
Ultimately, this requires solving the general relativistic Boltzmann equation for the neutrino radiation field.  
Thus, for completeness (and to provide the proper context), in \ref{sec:equations} we list the conservative, general relativistic Boltzmann equation, as well as the limiting cases solved numerically in Section~\ref{sec:numericalExamples}.  
(These latter equations are derived directly from the former equation.)  
\section{Bound-Preserving Numerical Methods for Phase Space Advection}
\label{sec:maximumPrinciple}

In this section, we present the bound-preserving (BP) method for the transport equation \eqref{eq:ConservativeBoltzmannEquationCurvilinearIntro}.  
We first examine the general case with curvilinear phase space coordinates and identify necessary conditions for the BP property.   
We then consider specific examples using commonly adopted phase space coordinates and show how to enforce these conditions in each case.

\subsection{The General Case}
\label{sec:maximumPrincipleGeneral}

\subsubsection{Preliminaries}
\label{sec:maximumPrincipleGeneralPrelim}

We denote the phase-space coordinates and flux coefficients using $\vect{z} = \big(\vect{x},\vect{p}\big)$ and $\vect{H}=\big(\vect{F},\vect{G}\big)$, respectively, and let $\tau:=\sqrt{\gamma \lambda} \geq 0$.  
Then we rewrite \eqref{eq:ConservativeBoltzmannEquationCurvilinearIntro} in the compact form%
\footnote{It will be necessary later on to split the phase space back up into position space and momentum space parts.}
\begin{equation}
  \p_t f+ \frac{1}{\tau}\sumz \p_{z^i} (\,\tau\,H^{i}\,f\,) = 0. 
  \label{eq:ConservativeBoltzmannEquationCurvilinearCompact}
\end{equation}
Note that both $\tau$ and $\vect{H}$ depend on $\vect{z}$; i.e., $\tau \colon\bbR^{d_z}\to\bbR$ and $\vect{H} \colon \bbR^{d_z}\to\bbR^{d_z}$, where $d_z := d_x + d_p$,

We divide the phase space domain $D$ into a disjoint union $\cT$ of open elements $\bK$, so that $D = \cup_{\bK \in \cT} \operatorname{cl}(\bK)$.  We require that each element is a box in the logical coordinates
\begin{equation}
  \bK = \{\vect{z} : z^{i} \in K^{i} := (\zL^{i}, \zH^{i})\}.
  \label{eq:phaseSpaceElement}
\end{equation}
We use $V_{\bK}$ to denote the volume of the phase space cell
\begin{equation}
  V_{\bK} = \int_{\bK}dV, \quad\text{where}\quad dV = \tau \prodz dz^{i}.  
\end{equation}

The proof of the bound-preserving property (cf. Section \ref{sec:maximumPrincipleGeneralProof}) requires some analysis on the surface of the cell.  
For this reason we introduce, for each $i$, the decomposition $ \vz = \{\tz^{i},z^{i}\}$ along with the associated notations
\begin{equation}
  d\tV^{i} = \prod_{j \ne i} dz^{j}
  \quand
  \tK^{i} = \otimes_{j \ne i} K^{j}.
\end{equation}
In particular, $dV=\tau\,d\tV^{i}\,dz^{i}$ and $\bK=\tK^{i}\otimes K^{i}$.  
We also define $\dz^{i}=\zH^{i}-\zL^{i}$.  

The approximation space for the DG method, $\bbV^k$, is
\begin{equation}\label{ldg:vhk}
  \bbV^k=\{v : v\big|_{\bK} \in \bbQ^k(\bK), \, \, \forall\ \bK\in \cT \},
\end{equation}
where $\bbQ^{k}$ is the space of tensor products of one-dimensional polynomials of maximal degree $k$.  
Note that functions in $\bbV^k$ can be discontinuous across element interfaces. 
 
The semi-discrete DG problem is to find $\fDG \in C^1([0,\infty);\bbV^k)$ (which approximates $f$ in \eqref{eq:ConservativeBoltzmannEquationCurvilinearCompact}), such that
\begin{equation}\label{eq:SemidiscreteDG}
  \p_t \int_{\bK} \fDG\,v\,dV
  + \sumz \int_{\tK^{i}} \big(\,v\,\tau\,\widehat{H^{i}\fDG}\big|_{\zH^{i}}-v\,\tau\,\widehat{H^{i}\fDG}\big|_{\zL^{i}}\,\big)\,d\tV^{i} 
  - \sumz \int_\bK H^{i}\fDG\,\pderiv{v}{z^{i}}\,dV = 0, 
\end{equation}
for all $v \in \bbV^k$ and $\bK \in \cT$.  Here $\widehat{H^{i}\fDG}$ is a numerical flux approximating the phase-space flux on the $i$th surface of the phase-space element $\bK$. 
The upwind flux is utilized in this paper which, due to the curvilinear coordinates, must take into account the fact that $H^i$ depends on the phase space coordinates $\vect{z}$.  
For any $\vect{z} \in D$ and $v \in \bbV^k$, $\widehat{H^{i}v}\big|_{z^{i}} = \cH^i(v(z^{i,-},\tz^{i}),v(z^{i,+},\tz^{i});\vect{z})$, where for each $i$, the numerical flux function $\cH^{i}: \bbR^{2}\times\bbR^{d_{z}} \to \bbR$ is given by
\begin{align}
 \cH^i(a,b;\vect{\zeta}) 
  &=(0\vee H^{i}|_{\zeta^{i}})\,a+(0\wedge H^{i}|_{\zeta^{i}})\,b,
  \label{eq:generalUpwindFlux}
\end{align}%
and for any $a,b \in \bbR$, $a \vee b \equiv \max(a,b)$ and $a \wedge b \equiv \min(a,b)$.  
Superscripts $-/+$, e.g., in the arguments of $v(z^{i,-/+},\tz^{i})$, indicate that the function is evaluated to the immediate left/right of $z^{i}$.  
As in the Cartesian case, $\cH^i$ is non-decreasing in the first argument and non-increasing in the second. 

\subsubsection{Proof of Bound-Preserving (BP) Property}
\label{sec:maximumPrincipleGeneralProof}

The space $\bbV^k$ contains the constant functions, and the choice $v=1$ in \eqref{eq:SemidiscreteDG} gives 
\begin{equation}
  \p_t \bar{f}_\bK 
  +\frac{1}{V_\bK}\sumz \int_{\tK^{i}} \big(\,\tau\,\widehat{H^{i}\fDG}\big|_{\zH^{i}} - \tau\,\widehat{H^{i}\fDG}\big|_{\zL^{i}}\,\big)\,d\tV^{i} = 0,
  \label{eq:ConservativeBoltzmannEquationCurvilinearCompactFiniteVolume}
\end{equation}
where $\bar{f}_\bK$ is the cell average of $\fDG$ in $\bK$:
\begin{equation}
  \bar{f}_\bK
  =\f{1}{V_{\bK}}\int_\bK\fDG\,dV
  =\f{1}{V_{\bK}}\int_{\tK^{i}}\int_{K^{i}}\fDG\,\tau\,dz^{i}\,d\tV^{i} \quad (\forall i).
\end{equation}
In this paper, we use SSP-RK time integrators, which are convex combinations of forward Euler time steps \citep{gottlieb_etal_2001}.  
Thus, without loss of generality, we consider here only a forward Euler time step for $\bar{f}_\bK$; i.e., 
\begin{equation}
  \bar{f}_{\bK}^{n+1}
  =\bar{f}_{\bK}^{n}
    -\frac{\dt}{V_\bK}\sumz \int_{\tK^{i}} 
    \big(\,\tau\,\widehat{H^{i}\fDG^{n}}\big|_{\zH^{i}} 
    - \tau\,\widehat{H^{i}\fDG^{n}}\big|_{\zL^{i}}\,\big)
    \,d\tV^{i},
    \label{eq:averageUpdate}
\end{equation}
where $\dt=t^{n+1}-t^{n}$ is the time step.  

Sufficient conditions to ensure $\bar{f}_{\bK}^{n+1} \geq 0$ are given in Lemma \ref{lemma1} below.  
However, to ensure $\bar{g}_{\bK}^{n+1}\equiv1-\bar{f}_{\bK}^{n+1}$ is also non-negative, additional conditions on the numerical divergence are needed.  
These are stated in Lemma \ref{lem:divFree}.  
The positivity-preserving properties of the DG scheme follow from Lemma~\ref{lem:gammPositive}, while the bound-preserving properties are summarized in Theorem~\ref{thm:upperLowerBound}.  

\begin{lemma}\label{lemma1}
  Let $\{s_{i}\}_{i=1}^{d_{z}}$ be a set of positive constants (independent of $\vect{z}$ and $t$) satisfying $\sumz s_{i} =1$.  
  If for each $i\in \{1,\ldots,d_{z}\}$,
  \begin{equation}
    \Gamma^{i}[\fDG^{n}](\tz^{i}) := \int_{K^{i}} \fDG^{n}\,\tau\,dz^{i}
    -\frac{\dt}{s_{i}}\big(\,\tau\,\widehat{H^{i}\fDG^{n}}\big|_{\zH^{i}}-\tau\,\widehat{H^{i}\fDG^{n}}\big|_{\zL^{i}}\,\big)
    \ge0 \quad (\forall\tz^{i}\in\tK^{i}),
    \label{eq:positivityCondition}
  \end{equation}
  then $\bar{f}_{\bK}^{n+1}\ge0$.  
  Moreover, if $\tS^{i}\in\tK^{i}$ is a set of quadrature points where the corresponding quadrature integrates $\Gamma^{i}[\fDG^{n}]$ over $\tK^{i}$ (i.e., the integral in \eqref{eq:averageUpdateGamma}) exactly, then $\Gamma^{i}[\fDG^{n}]$ need only be non-negative on $\tS^{i}$.  
\end{lemma}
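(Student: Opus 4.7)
The plan is to exhibit the new cell average $\bar{f}_{\bK}^{n+1}$ as a convex combination of the quantities $\int_{\tK^{i}}\Gamma^{i}[\fDG^{n}]\,d\tV^{i}$ (one for each coordinate direction $i$), so that sign-positivity propagates immediately from the hypothesis.

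First I would rewrite $\bar{f}_{\bK}^{n}$ by invoking the identity $\sumz s_{i}=1$ and using the fact that, for every $i$, the cell average admits the representation
\begin{equation*}
  \bar{f}_{\bK}^{n}
  =\frac{1}{V_{\bK}}\int_{\tK^{i}}\!\int_{K^{i}}\fDG^{n}\,\tau\,dz^{i}\,d\tV^{i}.
\end{equation*}
This lets me split the ``old average'' term in \eqref{eq:averageUpdate} as $\sumz s_{i}\bar{f}_{\bK}^{n}$, with the $i$th summand written using the $i$th orientation of the volume integral.

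Next I would pair the $i$th piece of $\bar{f}_{\bK}^{n}$ with the $i$th surface contribution in \eqref{eq:averageUpdate}, pulling the factor $s_{i}/V_{\bK}$ out in front. After factoring the common $1/s_{i}$ into the flux differences (which is precisely why the hypothesis is phrased with $\dt/s_{i}$), the bracketed quantity becomes exactly $\Gamma^{i}[\fDG^{n}](\tz^{i})$ as defined in \eqref{eq:positivityCondition}, giving
\begin{equation*}
  \bar{f}_{\bK}^{n+1}
  =\frac{1}{V_{\bK}}\sumz s_{i}\int_{\tK^{i}}\Gamma^{i}[\fDG^{n}](\tz^{i})\,d\tV^{i}.
\end{equation*}
Because $s_{i}>0$ and $V_{\bK}>0$, the pointwise hypothesis $\Gamma^{i}[\fDG^{n}]\ge0$ on $\tK^{i}$ immediately yields $\bar{f}_{\bK}^{n+1}\ge0$, proving the first claim.

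For the ``moreover'' clause, I would simply replace each surface integral by the assumed exact quadrature rule,
\begin{equation*}
  \int_{\tK^{i}}\Gamma^{i}[\fDG^{n}](\tz^{i})\,d\tV^{i}
  =\sum_{\tz^{i}\in\tS^{i}} w_{\tz^{i}}\,\Gamma^{i}[\fDG^{n}](\tz^{i}),
\end{equation*}
and observe that standard tensor-product Gauss or Gauss--Lobatto quadratures (the natural choice on the box $\tK^{i}$) have strictly positive weights; then non-negativity of $\Gamma^{i}[\fDG^{n}]$ only at the nodes $\tS^{i}$ still forces the sum, and hence $\bar{f}_{\bK}^{n+1}$, to be non-negative. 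The main (and only) substantive step is the algebraic rearrangement in the second paragraph; the rest is bookkeeping. I do not anticipate a genuine obstacle, though I would be careful to state explicitly that the chosen surface quadrature must have positive weights, since the argument otherwise degenerates.

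\label{eq:averageUpdateGamma}
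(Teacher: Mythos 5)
Your proposal is correct and follows essentially the same route as the paper: the paper's proof consists precisely of deriving the convex-combination identity \eqref{eq:averageUpdateGamma} from \eqref{eq:averageUpdate} by splitting $\bar{f}_{\bK}^{n}=\sumz s_{i}\bar{f}_{\bK}^{n}$ and absorbing the $1/s_{i}$ into the flux differences, after which both claims follow immediately. Your added remark that the surface quadrature must have positive weights is a reasonable (and correct) piece of bookkeeping that the paper handles later when it introduces the rules $\tQ^{i}$.
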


\begin{proof}
  It is simple to show from \eqref{eq:averageUpdate} that
  \begin{equation}
    \bar{f}_{\bK}^{n+1} = \f{1}{V_{\bK}}\sumz s_{i}\int_{\tK^{i}}\Gamma^{i}[\fDG^{n}]\,d\tV^{i}.
    \label{eq:averageUpdateGamma}
  \end{equation}
  The result follows immediately.  
\end{proof}
\begin{rem}
  The motivation for the constants $s_i$ is the formula \eqref{eq:averageUpdateGamma}.  
  In order to maintain positivity, each of the terms $\Gamma^{i}[\fDG^{n}]$ will be controlled individually (cf. Lemma~\ref{lem:gammPositive} below.)
\end{rem}

\begin{lemma}\label{lem:divFree}
  Let $\gDG=1-\fDG$, and assume that the conditions for $\Gamma^{i}[\gDG^{n}]\ge0$ in Lemma~\ref{lemma1} hold.  
  Suppose that the divergence-free condition in \eqref{eq:divergenceFreePhaseSpaceFlowCurvilinearAbstract} is satisfied, i.e., that
  \begin{equation}
    \f{1}{V_{\bK}}\sumz\int_{\tK^{i}}\big(\,\tau\,H^{i}\big|_{\zH^{i}} - \tau\,H^{i}\big|_{\zL^{i}}\,\big)\,d\tV^{i}=0.
    \label{eq:divergenceFreePhaseSpaceFlowCurvilinearAbstractDiscrete}
  \end{equation}
  Then $\bar{g}_{\bK}^{n+1}\ge0$, which implies $\bar{f}_{\bK}^{n+1}\le1$.
\end{lemma}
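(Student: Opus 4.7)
The plan is to derive an evolution equation for $\bar{g}_\bK^{n+1}$ that has exactly the same structural form as \eqref{eq:averageUpdate} (with $f$ replaced by $g$), and then invoke Lemma~\ref{lemma1} directly. Since $\gDG = 1 - \fDG$ and cell-averaging is linear with $\bar{1}_\bK = 1$, we have $\bar{g}^n_\bK = 1 - \bar{f}^n_\bK$. Subtracting \eqref{eq:averageUpdate} from $1$ therefore gives an identity of the form
\begin{equation*}
  \bar{g}_{\bK}^{n+1} = \bar{g}_{\bK}^{n} + \frac{\dt}{V_\bK}\sumz \int_{\tK^{i}} \big(\,\tau\,\widehat{H^{i}\fDG^{n}}\big|_{\zH^{i}} - \tau\,\widehat{H^{i}\fDG^{n}}\big|_{\zL^{i}}\,\big)\,d\tV^{i}.
\end{equation*}

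The key algebraic step is to rewrite the $\fDG$-flux in terms of a $\gDG$-flux. From the explicit form of the upwind flux \eqref{eq:generalUpwindFlux}, linearity in the first two slots together with the pointwise identity $(0\vee H^{i}) + (0\wedge H^{i}) = H^{i}$ yields
\begin{equation*}
  \widehat{H^{i}\gDG^{n}}\big|_{\zeta^{i}} = H^{i}\big|_{\zeta^{i}} - \widehat{H^{i}\fDG^{n}}\big|_{\zeta^{i}}
\end{equation*}
at every interface $\zeta^{i}\in\{\zL^{i},\zH^{i}\}$. Substituting this into the previous expression splits the right-hand side into (i) a term involving only $\tau H^{i}$ at the cell faces and (ii) a term of the same form as in \eqref{eq:averageUpdate} but with $\widehat{H^{i}\gDG^{n}}$ in place of $\widehat{H^{i}\fDG^{n}}$.

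Now the discrete divergence-free hypothesis \eqref{eq:divergenceFreePhaseSpaceFlowCurvilinearAbstractDiscrete} eliminates the piece (i) exactly, leaving
\begin{equation*}
  \bar{g}_{\bK}^{n+1} = \bar{g}_{\bK}^{n} - \frac{\dt}{V_\bK}\sumz \int_{\tK^{i}} \big(\,\tau\,\widehat{H^{i}\gDG^{n}}\big|_{\zH^{i}} - \tau\,\widehat{H^{i}\gDG^{n}}\big|_{\zL^{i}}\,\big)\,d\tV^{i}.
\end{equation*}
This is \eqref{eq:averageUpdate} verbatim for $g$. Applying Lemma~\ref{lemma1} to $\gDG^n$ under the hypothesized positivity of $\Gamma^{i}[\gDG^{n}]$ gives $\bar{g}_{\bK}^{n+1}\ge0$, i.e., $\bar{f}_{\bK}^{n+1}\le1$.

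I do not expect a serious obstacle here: the entire argument is the observation that both the cell-average update and the upwind numerical flux are affinely linear in the underlying distribution, together with the consistency relation that makes the $f$-flux and $g$-flux sum to the bare advection coefficient $H^{i}$. The only real condition that must be imposed beyond Lemma~\ref{lemma1} is the discrete divergence-free property \eqref{eq:divergenceFreePhaseSpaceFlowCurvilinearAbstractDiscrete}, precisely because in curvilinear coordinates the cancellation of $\tau H^{i}|_{\zH^{i}} - \tau H^{i}|_{\zL^{i}}$ no longer occurs automatically at the interface-by-interface level. This is the genuinely new ingredient relative to the Cartesian setting and is the reason this lemma needs to be stated separately from Lemma~\ref{lemma1}.
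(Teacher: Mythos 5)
Your proposal is correct and follows essentially the same route as the paper's own proof: derive the flux consistency identity $\widehat{H^{i}\gDG^{n}} = H^{i} - \widehat{H^{i}\fDG^{n}}$ from the upwind formula, substitute into the cell-average update, cancel the bare $\tau H^{i}$ face terms using the discrete divergence-free condition, and then apply Lemma~\ref{lemma1} to $\gDG^{n}$. The only difference is cosmetic --- you spell out the computation behind the flux identity (via $(0\vee H^{i})+(0\wedge H^{i})=H^{i}$) that the paper dismisses as ``a direct calculation.''
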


\begin{proof}
  A direct calculation shows that for any $\vect{z} \in D$ and $v \in \bbV^k$,
  \begin{equation}
    \widehat{H^{i}(1-v)}\big|_{z^{i}} = H^i \big|_{z^{i}} - \widehat{H^{i}v}\big|_{z^{i}}.
  \end{equation}
  Thus, with $v = \fDG^{n}$, we find
  \begin{align}
  \bar{g}_{\bK}^{n+1}
  &= (1- \bar{f}_{\bK}^{n}) 
  + \frac{\dt}{V_\bK}\sumz \int_{\tK^{i}} \big(\,\tau\,\widehat{H^{i}\fDG^{n}}\big|_{\zH^{i}} - \tau\,\widehat{H^{i}\fDG^{n}}\big|_{\zL^{i}}\,\big)\,d\tV^{i} \nonumber \\
  &= \bar{g}_{\bK}^{n}
  -  \frac{\dt}{V_\bK}\sumz \int_{\tK^{i}} \big(\,\tau\,\widehat{H^{i}\gDG}\big|_{\zH^{i}} - \tau\,\widehat{H^{i}\gDG}\big|_{\zL^{i}}\,\big)\,d\tV^{i} \nonumber \\
  & \hspace{1.0cm} +  \frac{\dt}{V_\bK}\sumz \int_{\tK^{i}} \big(\,\tau\,H^i\big|_{\zH^{i}} - \tau\,H^i\big|_{\zL^{i}}\,\big)\,d\tV^{i}.
  \end{align}
  If the divergence-free condition in \eqref{eq:divergenceFreePhaseSpaceFlowCurvilinearAbstractDiscrete} holds, then the final term vanishes, leaving
  \begin{equation*}
    \bar{g}_{\bK}^{n+1}
    =\bar{g}_{\bK}^{n}
    -\f{\dt}{V_\bK}\sumz \int_{\tK^{i}} 
    		\big(\,\tau\,\widehat{H^{i}\gDG^{n}}\big|_{\zH^{i}} 
    		- \tau\,\widehat{H^{i}\gDG^{n}}\big|_{\zL^{i}}\,\big)
    \,d\tV^{i}
    =\f{1}{V_{\bK}}\sumz s_{i}\int_{\tK^{i}}\Gamma^{i}[\gDG^{n}]\,d\tV^{i}.
  \end{equation*}
  Since the conditions under Lemma~\ref{lemma1} hold, i.e., $\Gamma^{i}[\gDG^{n}]\ge0$, it follows that $\bar{g}_{\bK}^{n+1}\ge0$.
\end{proof}

We now proceed to find conditions for which \eqref{eq:positivityCondition} holds. 
To simplify notation, we temporarily drop the index $i$, setting $z = z^{i}$, $K = K^{i}$, $\Gamma=\Gamma^{i}$, etc. 
Let $\hat{Q}$ denote the $N$-point \emph{Gauss-Lobatto} quadrature rule on the interval $K=(\zL,\zH)$, with points
\begin{equation}\label{eq:QuadLoba}
  \hat{S}=\left\{\zL=\hat{z}_{1},\hat{z}_{2},\cdots,\hat{z}_{N-1},\hat{z}_{N}=\zH\right\}, 
\end{equation}
and weights $\hat{w}_{q} \in (0,1]$, normalized so that $\sum_{q} \hat{w}_{q} = 1$.  
(The hat is used to specifically denote the Gauss-Lobatto rule.)
This quadrature integrates polynomials in $z \in \bbR$ with degree up to $2N-3$ exactly.  
Thus if $\fDG\tau$ is such a polynomial,%
\footnote{In situations where $\tau$ is not a polynomial, one may use a polynomial approximation for $\tau$.}
then the integral of $\fDG$ is exact; i.e.,
\begin{equation}
  \label{eq:exactQuad}
  \int_{K} \fDG^{n}\,\tau\,dz = \hat{Q}[\fDG^{n}] \equiv
  \dz \sum_{q=1}^{N} \hat{w}_{q}\,\fhDG{q}^{n}\,\hat{\tau}_{q} 
\end{equation}
where $\fhDG{q}^{n} := \fDG^n(\hat{z}_q)$ and  $\hat{\tau}_{q} := \tau(\hat{z}_q)$. 

Using the quadrature rule \eqref{eq:exactQuad} and the numerical flux function $\cH$ in \eqref{eq:generalUpwindFlux}, we find
\begin{align}  \label{eq:positivityConditionRewritten}
  \frac{\Gamma[\fDG^{n}]}{\dz}
  &=
   \sum_{q=2}^{N-1}\hat{w}_{q}\,\fhDG{q}^{n}\,\hat{\tau}_{q}
  +\hat{\tau}_{1}\,\hat{w}_{1}\,\Phi_1(\fhDG{1}^{n,-},\fhDG{1}^{n,+})
  +\hat{\tau}_{N}\,\hat{w}_{N}\,\Phi_N(\fhDG{N}^{n,-},\fhDG{N}^{n,+}),
\end{align}  
where 
\begin{equation}\label{eq:Phi}
  \Phi_1(a,b)  = b + T_1\,\cH(a,b ;\hat{z}_1)  \quand
  \Phi_N(a,b)  = a - T_N\,\cH(a, b; \hat{z}_N), 
\end{equation}
with
\begin{equation}
  T_1 = \frac{\dt}{ \hat{w}_{1}\,s\,\dz} \quand T_N = \frac{\dt}{\hat{w}_{N}\,s\,\dz}.
\end{equation}  

\begin{lemma}\label{prop:PhiProperties}
  The functions $\Phi_1$ and $\Phi_N$ satisfy $\Phi_1(0,0) = \Phi_N(0,0)=0$.  
  In addition, the derivatives are
  \begin{align*}
    \pderiv{\Phi_{1}}{a}&=T_{1}\,(0\vee H(\hat{z}_1)), &\quad\
    \pderiv{\Phi_{N}}{a}&=1-T_{N}\,(0\vee H(\hat{z}_N)), \\
    \pderiv{\Phi_{1}}{b}&=1+T_{1}\,(0\wedge H(\hat{z}_1)), &\quad\
    \pderiv{\Phi_{N}}{b}&=-T_{N}\,(0\wedge H(\hat{z}_N)).  
  \end{align*}
\end{lemma}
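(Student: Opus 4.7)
The plan is to proceed by direct computation, since both claims reduce to evaluating the numerical flux function $\cH$ from \eqref{eq:generalUpwindFlux} and differentiating it. The key observation is that $\cH(a,b;\vect{\zeta})$ is affine in its first two arguments (with coefficients $(0\vee H^{i}|_{\zeta^{i}})$ and $(0\wedge H^{i}|_{\zeta^{i}})$ depending only on $\vect{\zeta}$), so both $\Phi_1$ and $\Phi_N$ are affine in $(a,b)$. This means the partial derivatives can be read off directly and there are no chain-rule subtleties.

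First, I would verify $\Phi_1(0,0) = \Phi_N(0,0) = 0$ by substituting $a=b=0$ into \eqref{eq:generalUpwindFlux}, which gives $\cH(0,0;\hat{z}_1) = \cH(0,0;\hat{z}_N) = 0$. Plugging into \eqref{eq:Phi} then yields $\Phi_1(0,0) = 0 + T_1\cdot 0 = 0$ and $\Phi_N(0,0) = 0 - T_N\cdot 0 = 0$.

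Next, I would compute the four partial derivatives. For $\Phi_1(a,b) = b + T_1\bigl[(0\vee H(\hat{z}_1))\,a + (0\wedge H(\hat{z}_1))\,b\bigr]$, differentiation in $a$ kills the leading $b$ and leaves $T_1(0\vee H(\hat{z}_1))$; differentiation in $b$ keeps the leading $1$ and adds $T_1(0\wedge H(\hat{z}_1))$. For $\Phi_N(a,b) = a - T_N\bigl[(0\vee H(\hat{z}_N))\,a + (0\wedge H(\hat{z}_N))\,b\bigr]$, differentiation in $a$ gives $1 - T_N(0\vee H(\hat{z}_N))$ and differentiation in $b$ gives $-T_N(0\wedge H(\hat{z}_N))$. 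These match the stated formulas.

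There is really no main obstacle here; this lemma is a setup step whose entire purpose is to record identities that will be used in the next lemma (presumably \textbf{Lemma~\ref{lem:gammPositive}}) to enforce monotonicity of $\Phi_1$ and $\Phi_N$ via a CFL-type bound on $T_1$ and $T_N$. The only thing worth being careful about is keeping the signs of the upwinding terms $(0\vee H)$ and $(0\wedge H)$ straight when differentiating through the affine combination in \eqref{eq:generalUpwindFlux}.
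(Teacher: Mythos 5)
Your proof is correct and is exactly the paper's argument: the paper simply states that the lemma follows by direct calculation from the definitions of $\Phi_{1}$, $\Phi_{N}$ in \eqref{eq:Phi} and of $\cH$ in \eqref{eq:generalUpwindFlux}, which is precisely the computation you carry out. Your observation that $\Phi_1$ and $\Phi_N$ are affine in $(a,b)$ makes the verification transparent and matches the intended use in Lemma~\ref{lem:gammPositive}.
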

\begin{proof}
The proof is a direct calculation using the definitions of $\Phi_{1}$ and $\Phi_{N}$ in \eqref{eq:Phi} and $\cH^{i}$ in \eqref{eq:generalUpwindFlux}.
\end{proof}

The following lemma establishes sufficient conditions for $\bar{f}_{\bK}^{n+1}\ge0$ due to \eqref{eq:averageUpdateGamma}.  
\begin{lemma}\label{lem:gammPositive}
  Suppose that
  \begin{enumerate}
    \item The quadrature rule $\hat{Q}$ integrates $\fDG^{n}\,\tau$ exactly.
    \item For all $\hat{z}_q \in \hat{S}$ and all $\tz\in\tK$, $\fDG^{n}(\hat{z}_q,\tz) \ge 0$.
    \item The time step $\dt$ is chosen such that 
    \begin{equation} \label{eq:ConditionDeltat1}
      1-T_1\,|0\wedge H(\hat{z}_{1},\tz)| \geq 0 \quand 1-T_N\,(0\vee H(\hat{z}_{N},\tz)) \geq 0
    \end{equation}
    for all $\tz\in\tK$. 
  \end{enumerate}
  Then $\Gamma[\fDG^{n}](\tz) \geq 0$.
\end{lemma}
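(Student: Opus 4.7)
The plan is to work from the quadrature representation \eqref{eq:positivityConditionRewritten}, which splits $\Gamma[\fDG^{n}]/\dz$ into three pieces: the interior Gauss--Lobatto sum over $q=2,\ldots,N-1$, and the two boundary contributions $\hat{\tau}_{1}\hat{w}_{1}\Phi_{1}(\fhDG{1}^{n,-},\fhDG{1}^{n,+})$ and $\hat{\tau}_{N}\hat{w}_{N}\Phi_{N}(\fhDG{N}^{n,-},\fhDG{N}^{n,+})$. Condition~1 is what turns \eqref{eq:positivityConditionRewritten} into an identity rather than an approximation, since the exactness of the Gauss--Lobatto rule on $\fDG^{n}\tau$ is precisely \eqref{eq:exactQuad}. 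With equality in hand, the interior sum is immediately nonnegative: each $\hat{w}_{q}\in(0,1]$, $\hat{\tau}_{q}\ge 0$ because $\tau=\sqrt{\gamma\lambda}\ge 0$, and $\fhDG{q}^{n}\ge 0$ by condition~2.

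The substantive step is to argue that $\Phi_{1}$ and $\Phi_{N}$ are nonnegative at their given arguments. Lemma~\ref{prop:PhiProperties} provides $\Phi_{1}(0,0)=\Phi_{N}(0,0)=0$ together with closed-form expressions for all four partial derivatives, so my plan is to verify that each derivative is nonnegative on the first quadrant and then invoke monotonicity to conclude $\Phi_{1}(a,b)\ge\Phi_{1}(0,0)=0$, and similarly for $\Phi_{N}$, for any $a,b\ge 0$. Two of the derivatives, $\partial_{a}\Phi_{1}=T_{1}(0\vee H(\hat{z}_{1}))$ and $\partial_{b}\Phi_{N}=-T_{N}(0\wedge H(\hat{z}_{N}))$, are nonnegative by inspection from the definitions of $\vee,\wedge$ and the positivity of $T_{1},T_{N}$. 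The remaining two, $\partial_{b}\Phi_{1}=1+T_{1}(0\wedge H(\hat{z}_{1}))=1-T_{1}|0\wedge H(\hat{z}_{1})|$ and $\partial_{a}\Phi_{N}=1-T_{N}(0\vee H(\hat{z}_{N}))$, are nonnegative exactly under the CFL-type inequalities \eqref{eq:ConditionDeltat1} of condition~3. The arguments $\fhDG{1}^{n,\pm}$ and $\fhDG{N}^{n,\pm}$ themselves lie in $[0,\infty)$ by applying condition~2 to the polynomial representatives in $\bK$ and in the two neighboring cells, whose traces supply the $-$ and $+$ interface values.

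Summing the three nonnegative contributions yields $\Gamma[\fDG^{n}](\tz)\ge 0$ for every $\tz\in\tK$. I do not expect a serious obstacle; the argument is essentially bookkeeping once the decomposition \eqref{eq:positivityConditionRewritten} and Lemma~\ref{prop:PhiProperties} are in hand. The main thing to get right is the sign matching between the upwind structure of the numerical flux \eqref{eq:generalUpwindFlux} and the one-sided CFL bounds in \eqref{eq:ConditionDeltat1}, so that the two derivatives which can a priori change sign are exactly the two controlled by condition~3. A second small point worth flagging is that condition~2 must be applied not only on $\bK$ itself but also on the two neighbors sharing the faces $z^{i}=\zL^{i}$ and $z^{i}=\zH^{i}$, in order to secure nonnegativity of both one-sided traces at each interface.
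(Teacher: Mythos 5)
Your proposal is correct and follows essentially the same route as the paper's proof: decompose $\Gamma[\fDG^{n}]/\dz$ via \eqref{eq:positivityConditionRewritten}, handle the interior Gauss--Lobatto sum with condition~2 and the nonnegativity of the weights and $\hat{\tau}_{q}$, and use the derivative signs from Lemma~\ref{prop:PhiProperties} together with the CFL bounds \eqref{eq:ConditionDeltat1} to conclude $\Phi_{1},\Phi_{N}\ge\Phi_{1}(0,0)=\Phi_{N}(0,0)=0$ by monotonicity. Your added remark that condition~2 must also hold in the neighboring cells to control both one-sided traces at the interfaces is a legitimate clarification of a point the paper leaves implicit, but it does not change the argument.
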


\begin{proof}
  To show that $\Gamma[\fDG^{n}](\tz) \ge0$, it is sufficient to show that each of the three terms in Equation \eqref{eq:positivityConditionRewritten} are non-negative.  
  The first term is non-negative by assumption $2$, and the fact that the quadrature weights are positive and $\hat{\tau}_{q}\ge0$.  
  To handle the second and third term in \eqref{eq:positivityConditionRewritten}, containing $\Phi_1$ and $\Phi_N$, respectively, we note that $\partial\Phi_{1}/\partial a$ and $\partial\Phi_{N}/\partial b$ are always non-negative while $\partial\Phi_{1}/\partial b$ and $\partial\Phi_{N}/\partial a$ are non-negative under the CFL constraint in \eqref{eq:ConditionDeltat1}.  
  Therefore
  \begin{equation*}
    0 = \Phi_{1}(0,0) \leq \Phi_1(\fhDG{1}^{n,-},\fhDG{1}^{n,+}) \quand
    0 = \Phi_{N}(0,0) \leq \Phi_N(\fhDG{N}^{n,-},\fhDG{N}^{n,+}).
  \end{equation*}
  Hence $\Gamma[\fDG^{n}] \ge 0$.  
\end{proof}
\begin{rem}
  The condition in \eqref{eq:ConditionDeltat1} must be satisfied for each phase space dimension---that is for all $i\in\{1,\ldots,d_z\}$ (cf. Lemma \ref{lemma1}).  
  In particular, it requires
  \begin{equation}
    \dt\le\min\Big[\,\f{1}{|0\wedge H^i(\hat{z}^i_{1},\tz^i)|},\f{1}{(0\vee H^i(\hat{z}^i_{N^i},\tz^i))}\,\Big]\,\hat{w}_{N^i}\,s_{i}\,\dz^{i}.
    \label{eq:generalCFL}
  \end{equation}
  for all $i\in\{1,\ldots,d_{z}\}$.  
  If $\tS^{i}\in\tK^{i}$ are the quadrature points of the quadrature used to integrate $\Gamma^{i}[\fDG^{n}]$ over $\tK$, then \eqref{eq:generalCFL} must hold for all $\tz^{i}\in\tS^{i}$.  
\end{rem}

We now return to \eqref{eq:averageUpdateGamma} and introduce a quadrature rule for evaluating $\bar{f}_{\bK}^{n+1}$.  
For each $i\in \{1,\ldots,d_z\}$, let $\tQ^{i} \colon C^0(\tK^i) \to \bbR$ be a quadrature rule with positive weights and points $\tS^{i} \subset \operatorname{cl}(\tK^{i})$.
Let $\hat{\bS}^i = \tS^{i} \otimes \hat{S}^i$ and define $\hat{\bQ}^i \colon C^0(\bK) \to \bbR$ by $\hat{\bQ}^i = \tQ^{i} \circ \hat{Q}^i$, where $\hat{Q}^i \colon C^0(K^i) \to \bbR$ is the Gauss-Lobatto quadrature rule with points $\hat{S}^i$.  
With the quadrature rule, we combine the previous results to establish the following theorem.  
\begin{theorem}\label{thm:upperLowerBound}
  Suppose that 
  \begin{enumerate}
    \item For all $i\in\{1,\ldots,d_{z}\}$, the Gauss-Lobatto quadrature rule $\hat{Q}^{i}$ is chosen such that $\eqref{eq:exactQuad}$ holds.  
    \item For all $i\in\{1,\ldots,d_{z}\}$, the quadrature rule $\tQ^{i}$ integrates $\Gamma^i[\fDG]$ over $\tK^{i}$ exactly, and preserves a discrete version of the divergence-free condition \eqref{eq:divergenceFreePhaseSpaceFlowCurvilinearAbstractDiscrete}; i.e.,  
    \begin{equation}
    \f{1}{V_{\bK}}\sumz \tQ^{i}\big(\,\tau\,H^{i}\big|_{\zH^{i}} - \tau\,H^{i}\big|_{\zL^{i}}\,\big)=0.  
    \label{eq:divergenceFreePhaseSpaceFlowCurvilinearAbstractDiscreteQuad}
  \end{equation}
    \item For all $i\in\{1,\ldots,d_z\}$ and all $\vz \in \hat{\bS}^i$, $0\le\fDG^{n}(\vz)\le1$.  
    \item The time step $\dt$ satisfies \eqref{eq:generalCFL} for all $i\in\{1,\ldots,d_z\}$ and all $\tz^{i}\in\tS^{i}$.  
  \end{enumerate}
  Then $0\le\bar{f}_{\bK}^{n+1}\le1$.
\end{theorem}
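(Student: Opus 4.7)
The plan is to prove the lower bound $\bar{f}_{\bK}^{n+1} \geq 0$ and the upper bound $\bar{f}_{\bK}^{n+1} \leq 1$ separately, in each case reducing the question to a pointwise application of Lemma~\ref{lem:gammPositive} at the quadrature nodes of $\tS^{i}$. The organizing observation is that assumption~2 lets us trade each integral over $\tK^{i}$ for a weighted sum with positive weights, so the sign of the average update is controlled by the sign of $\Gamma^{i}[\fDG^{n}]$, respectively $\Gamma^{i}[\gDG^{n}]$, at discrete points rather than in an $L^{1}$ sense.

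For the lower bound, I would begin from the identity \eqref{eq:averageUpdateGamma} of Lemma~\ref{lemma1}. By the exactness portion of assumption~2,
\begin{equation*}
V_{\bK}\,\bar{f}_{\bK}^{n+1} = \sumz s_{i}\,\tQ^{i}\!\left(\Gamma^{i}[\fDG^{n}]\right).
\end{equation*}
Since the $s_{i}$ and the weights of $\tQ^{i}$ are all positive, it suffices to verify $\Gamma^{i}[\fDG^{n}](\tz^{i}) \geq 0$ for each $i$ and each $\tz^{i} \in \tS^{i}$. At any such node the three hypotheses of Lemma~\ref{lem:gammPositive} are supplied directly by the theorem's assumptions: assumption~1 gives the Gauss--Lobatto exactness \eqref{eq:exactQuad} in the $z^{i}$ direction; assumption~3 gives $\fDG^{n}(\hat{z}^{i}_{q},\tz^{i}) \geq 0$ at every Gauss--Lobatto node $\hat{z}^{i}_{q}$, since by construction $\hat{\bS}^{i} = \tS^{i} \otimes \hat{S}^{i}$; and assumption~4 is precisely the CFL bound \eqref{eq:ConditionDeltat1}. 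Applying Lemma~\ref{lem:gammPositive} at each $\tz^{i}$ yields $\bar{f}_{\bK}^{n+1} \geq 0$.

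For the upper bound, I would pass to $\gDG^{n} := 1 - \fDG^{n}$ and target $\bar{g}_{\bK}^{n+1} \geq 0$. I would repeat the derivation in the proof of Lemma~\ref{lem:divFree}, but with $\tQ^{i}$ replacing every integral over $\tK^{i}$. Using the face identity $\widehat{H^{i}\gDG^{n}} = H^{i} - \widehat{H^{i}\fDG^{n}}$ together with the quadrature divergence-free condition \eqref{eq:divergenceFreePhaseSpaceFlowCurvilinearAbstractDiscreteQuad} from assumption~2 to cancel the pure-$H^{i}$ surface terms, one arrives at
\begin{equation*}
V_{\bK}\,\bar{g}_{\bK}^{n+1} = \sumz s_{i}\,\tQ^{i}\!\left(\Gamma^{i}[\gDG^{n}]\right).
\end{equation*}
The same pointwise argument then applies to $g$: at each $\tz^{i} \in \tS^{i}$, assumption~3 gives $\gDG^{n} = 1 - \fDG^{n} \in [0,1]$ at the Gauss--Lobatto nodes, and the CFL condition \eqref{eq:ConditionDeltat1} is unchanged under the substitution $f \mapsto g$ because it depends only on $H^{i}$ and not on the transported quantity, so assumption~4 again applies. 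Hence $\Gamma^{i}[\gDG^{n}] \geq 0$ on $\tS^{i}$ and $\bar{g}_{\bK}^{n+1} \geq 0$, i.e.\ $\bar{f}_{\bK}^{n+1} \leq 1$.

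The step I expect to require the most care is the passage from the integral-form divergence-free identity used inside Lemma~\ref{lem:divFree} to the quadrature-form identity \eqref{eq:divergenceFreePhaseSpaceFlowCurvilinearAbstractDiscreteQuad} actually granted by assumption~2. Making this substitution valid requires that the exactness of $\tQ^{i}$ on $\Gamma^{i}[\fDG^{n}]$ extends, by linearity and via $\gDG^{n} = 1 - \fDG^{n}$, to exactness on $\Gamma^{i}[\gDG^{n}]$; concretely, one uses exactness on $\int_{K^{i}}\tau\,dz^{i}$ (recovered by taking $\fDG \equiv 1$ in the cell-average part of $\Gamma^{i}$) and on the surface flux values of $\tau H^{i}$ at $\zL^{i}$ and $\zH^{i}$, both of which are implicit in assumption~2. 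Once this bookkeeping is in place, the remainder of the proof is a clean invocation of Lemmas~\ref{lemma1}, \ref{lem:divFree}, and \ref{lem:gammPositive}.
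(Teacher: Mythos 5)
Your proposal is correct and follows essentially the same route as the paper: Lemma~\ref{lem:gammPositive} applied at each $\tz^{i}\in\tS^{i}$ gives $\Gamma^{i}[\fDG^{n}]\ge0$ and $\Gamma^{i}[\gDG^{n}]\ge0$, the quadrature exactness of assumption~2 converts \eqref{eq:averageUpdateGamma} into a positive combination for the lower bound, and the argument of Lemma~\ref{lem:divFree} with the discrete divergence-free condition \eqref{eq:divergenceFreePhaseSpaceFlowCurvilinearAbstractDiscreteQuad} gives the upper bound. Your closing remark about extending the exactness of $\tQ^{i}$ from $\Gamma^{i}[\fDG^{n}]$ to $\Gamma^{i}[\gDG^{n}]$ makes explicit a bookkeeping point that the paper passes over silently, but it does not alter the argument.
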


\begin{proof}
  For each $i\in\{1,\ldots,d_z\}$, the fact that $\fDG^{n} \ge0$ on $\hat{\bS}^{i}$ implies, via Lemma \ref{lem:gammPositive}, that $\Gamma^{i}[\fDG] \geq 0$ on $\tS^{i}$. 
  Repeating the same argument with $\gDG^{n}\ge0$ shows that $\Gamma^i[\gDG^{n}] \geq 0$ as well; this fact will be used later in ensuring the upper bound on $\bar{f}_{\bK}^{n+1}$.  
  Returning to the lower bound, we compute $\bar{f}_{\bK}^{n+1}$ using \eqref{eq:averageUpdateGamma}
  \begin{align}
    \bar{f}_{\bK}^{n+1}
    = \frac{1}{V_{\bK}} \sumz s_{i}\,\tQ^i(\Gamma^{i}[\fDG]) \geq 0.
  \end{align}
  Here we have replaced the integral over $\tK^i$ in \eqref{eq:averageUpdate} by the quadrature rule $\tQ^i$, which by assumption~$2$ is exact.  
  This ensures the lower bound on $\bar{f}_{\bK}^{n+1}$.  
  To ensure the upper bound, we invoke Lemma~\ref{lem:divFree}, using \eqref{eq:divergenceFreePhaseSpaceFlowCurvilinearAbstractDiscreteQuad} and the fact that $\Gamma^{i}[\gDG^{n}] \geq 0$ to conclude that $\bar{g}_{\bK}^{n+1} = 1 - \bar{f}_{\bK}^{n+1} \geq 0$. 
  This concludes the proof.
\end{proof}

\begin{rem}
  It is important to note that the quadratures $\{\tQ^i\}_{i=1}^{d_z}$ must be used in the implementation of the method.  
  The Gauss-Lobatto quadrature $\{\hat{Q}^i\}_{i=1}^{d_z}$ need not be.  
  Rather, the latter is used to obtain the CFL condition in assumption~$4$ of Theorem~\ref{thm:upperLowerBound}.  
  In practice, a different quadrature $Q^i$ (typically Gauss-Legendre) is used to evaluate integrals over $K^i$.  
\end{rem}
\begin{rem}
  The first condition in assumption~$2$ of Theorem~\ref{thm:upperLowerBound} is actually stronger than necessary.  
  Indeed, each $\tQ^{i}$ needs only to integrate $\fDG^{n}$ over $\tK^{i}$ exactly; the surface integrals of the flux terms can be approximate, provided \eqref{eq:divergenceFreePhaseSpaceFlowCurvilinearAbstractDiscreteQuad} still holds.  
  In such cases, the numerical method yields a quadrature approximation of $\bar{f}_{\bK}^{n+1}$ that is still provably bound-preserving.  
  We maintain here the condition in assumption~$2$ for simplicity and note that it is, in fact, satisfied for the examples we consider in Section~\ref{sec:numericalExamples}.
\end{rem}

In the following subsections, we explore specific examples in more detail and examine the implication of \eqref{eq:ConditionDeltat1} on the time step in each case.  
In each of the examples, we specify the quadratures, i.e., $\tQ^{i}$, needed to integrate $\Gamma^{i}[\fDG^{n}]$ and satisfy the divergence-free condition in \eqref{eq:divergenceFreePhaseSpaceFlowCurvilinearAbstractDiscreteQuad}.  
The bound-enforcing limiter in \citep{ZS2010a}, which we discuss briefly in Section~\ref{sec:limiter}, is used to ensure assumption~$3$ in Theorem~\ref{thm:upperLowerBound}.  

\subsection{Spherical Symmetry, Flat Spacetime (1D~x+1D~p)}
\label{sec:maximumPrincipleSphericalSymmetry}

For a flat, spherically symmetric spacetime, adopting spherical polar phase space coordinates, the phase-space is $D = \{(r,\mu)\in\bbR^2 : r\geq 0,\, \mu \in [-1,1]\}$ and the collisionless Boltzmann equation (cf. \eqref{eq:ConservativeBoltzmannEquationSphericalSymmetryFlatApp}),
\begin{equation}
  \pderiv{f}{t}
  +\f{1}{r^{2}}\pderiv{}{r}\Big(\,r^{2}\,\mu\,f\,\Big)
  +\pderiv{}{\mu}\Big(\,\big(1-\mu^{2}\big)\,\f{1}{r}\,f\,\Big)=0,
  \label{eq:ConservativeBoltzmannEquationSphericalSymmetryFlat}
\end{equation}
takes the form of \eqref{eq:ConservativeBoltzmannEquationCurvilinearCompact} with $z^{1}=r$, $z^{2}=\mu$, $\tau=r^{2}$, $H^{1}=\mu\equiv H^{(r)}$, and $H^{2}=(1-\mu^2)/r\equiv H^{(\mu)}$.  
Here the position coordinate $r$ is the radial distance from the origin and the momentum coordinate $\mu$ is the cosine of the angle between the particle direction of flight and the radial direction.  
(See \ref{sec:equations} for further details.)  

For this case, the phase-space element is%
\footnote{To clarify the presentation, we modify slightly the general notation of the previous section, replacing indices $i$ with the appropriate coordinate names.  
For example, $K^{1}=K^{(r)}$ and $K^{2}=K^{(\mu)}$.  Similar modifications are made in the following sections.}
\begin{equation}
  \bK=\{(r,\mu) \in \bbR^2: r \in K^{(r)}:=(\rL,\rH),\, \mu\in K^{(\mu)}:=(\muL,\muH) \},
\end{equation}
and, for any $v\in\bbV^{k}$ the upwind numerical fluxes are given by
\begin{align*}
  \widehat{H^{(r)}v }(r,\mu) 
  &=\f{1}{2}\big(\mu+|\mu|\big)\,v(r^-,\mu)+\f{1}{2}\big(\mu-|\mu|\big)\,v(r^+,\mu),  \\
  \widehat{H^{(\mu)}v}(r,\mu)
  &=\f{1}{r}(1-\mu^{2})\,v(r,\mu^{-}). 
\end{align*}
Then, for any $(r,\mu) \in D$ and any $v \in \bbV^k$, the DG method is as follows: 
\textit{Find $\fDG \in \bbV^k$ such that}
\begin{align}
  &
  \int_{\bK}\p_{t}\fDG\,v\,r^2drd\mu
  -\int_{\bK}H^{(r)}\fDG\,\p_{r}v\,r^{2}drd\mu
  -\int_{\bK}H^{(\mu)}\fDG\,\p_{\mu}v\,r^2 drd\mu \nonumber \\
  &\hspace{12pt}
  +\rH^2\int_{K^{(\mu)}}\widehat{H^{(r)}\fDG }(\rH,\mu)\,v(\rH^{-},\mu)\,d\mu
  -\rL^2\int_{K^{(\mu)}}\widehat{H^{(r)}\fDG }(\rL,\mu)\,v(\rL^{+},\mu)\,d\mu \nonumber \\
  &\hspace{12pt}
  +\int_{K^{(r)}}\widehat{H^{(\mu)}\fDG }(r,\muH)\,v(r,\muH^{-})\,r^2dr
  -\int_{K^{(r)}}\widehat{H^{(\mu)}\fDG }(r,\muL)\,v(r,\muL^{+})\,r^2dr = 0, 
  \label{eq:ConservativeBoltzmannSphericalSymmetryDG}
\end{align}
\textit{for all $v \in \bbV^{k}$ and all $\bK\in D$.} 
In particular, the update for the cell-average is
\begin{align}
  \bar{f}_{\bK}^{n+1}
  &=\bar{f}_{\bK}^{n}
  -\f{\Delta t}{V_{\bK}}
  \Big\{\,
    \rH^{2}\int_{K^{(\mu)}}\widehat{H^{(r)}\fDG^{n}}(\rH,\mu)\,d\mu
    -\rL^{2}\int_{K^{(\mu)}}\widehat{H^{(r)}\fDG^{n}}(\rL,\mu)\,d\mu \nonumber \\
    &\hspace{72pt}
    +\int_{K^{(r)}}\widehat{H^{(\mu)}\fDG^{n}}(r,\muH)\,r^{2}dr
    -\int_{K^{(r)}}\widehat{H^{(\mu)}\fDG^{n}}(r,\muL)\,r^{2}dr
  \,\Big\},
  \label{eq:averageUpdateSphericalSymmetry}
\end{align}
where $V_{\bK}=\int_{\bK}r^{2}dr\,d\mu$.  

To satisfy the first two conditions in Theorem~\ref{thm:upperLowerBound}, we define the quadratures
\begin{equation}
  \hat{\bQ}^{(r)}=\tilde{\bQ}^{(r)} \circ \hat{Q}^{(r)}
  \quand
  \hat{\bQ}^{(\mu)}=\tilde{\bQ}^{(\mu)} \circ \hat{Q}^{(\mu)}
  \label{eq:quadratureSphericalSymmetry}
\end{equation}
where $\tilde{\bQ}^{(r)}=Q^{(\mu)}$ and $\tilde{\bQ}^{(\mu)}=Q^{(r)}$, and $Q^{(r)}$ and $Q^{(\mu)}$ are $L^{(r)}$- and $L^{(\mu)}$-point Gauss-Legendre quadratures on $K^{(r)}$ and $K^{(\mu)}$, respectively.  
We denote the Gaussian weights with $\{w_{\alpha}\}_{\alpha=1}^{L}$.  
Similarly, $\hat{Q}^{(r)}$ and $\hat{Q}^{(\mu)}$ are $N^{(r)}$- and $N^{(\mu)}$-point Gauss-Lobatto quadratures on $K^{(r)}$ and $K^{(\mu)}$, respectively.  
The sets of quadrature points associated with $\hat{\bQ}^{(r)}$ and $\hat{\bQ}^{(\mu)}$ are denoted
\begin{equation}
  \hat{\bS}^{(r)}=\tilde{\bS}^{(r)}\otimes\hat{S}^{(r)} \quand
  \hat{\bS}^{(\mu)}=\tilde{\bS}^{(\mu)}\otimes\hat{S}^{(\mu)},
  \label{eq:quadratureSetSphericalSymmetry}
\end{equation}
respectively, where $\tilde{\bS}^{(r)}=S^{(\mu)}$ and $\tilde{\bS}^{(\mu)}=S^{(r)}$, and
\begin{equation}\label{gauss}
  S^{(r)}
  = \{r_{\alpha}:\alpha=1,\cdots ,L^{(r)}\} \quand
  S^{(\mu)}
  = \{\mu_{\alpha}:\alpha=1,\cdots ,L^{(\mu)}\}
\end{equation}
are the Gaussian quadrature points on $K^{(r)}$ and $K^{(\mu)}$, respectively.  
Similarly, 
\begin{equation}\label{gaussLobatto}
  \hat{S}^{(r)}
  = \{\hat{r}_{\alpha}:\alpha=1,\cdots ,N^{(r)}\} \quand
  \hat{S}^{(\mu)}
  = \{\hat{\mu}_{\alpha}:\alpha=1,\cdots ,N^{(\mu)}\}
\end{equation}
are the Gauss-Lobatto quadrature points.  
The Gauss-Lobatto weights, $\{\hat{w}_{\alpha}\}_{\alpha=1}^{N}$, are normalized such that $\sum_{\alpha}\hat{w}_{\alpha}=1$.  
We note that the Gauss-Lobatto quadrature is only introduced to derive the CFL conditions needed to ensure the BP properties of the scheme.  
In the numerical scheme for $\fDG$, we use the Gauss-Legendre quadrature $\bQ=Q^{(r)}\circ Q^{(\mu)}$ to compute the volume integrals in \eqref{eq:ConservativeBoltzmannSphericalSymmetryDG}.  
Gaussian quadratures $Q^{(r)}$ and $Q^{(\mu)}$ are also used to evaluate the flux integrals.  

It is straightforward to show that the discretization for the cell-average in \eqref{eq:averageUpdateSphericalSymmetry} satisfies the divergence-free condition in \eqref{eq:divergenceFreePhaseSpaceFlowCurvilinearAbstractDiscrete} exactly; i.e.,
\begin{align}
  &\f{1}{V_{\bK}}
  \Big\{\,
    \rH^{2}\int_{K^{(\mu)}}H^{(r)}(\rH,\mu)\,d\mu-\rL^{2}\int_{K^{(\mu)}}H^{(r)}(\rL,\mu)\,d\mu \nonumber \\
  &\hspace{36pt}
    +\int_{K^{(r)}}H^{(\mu)}(r,\muH)\,r^{2}dr-\int_{K^{(r)}}H^{(\mu)}(r,\muL)\,r^{2}dr
  \,\Big\}=0, 
  \label{eq:divergenceFreeSphericalSymmetry}
\end{align}
provided $L^{(r)}$, $L^{(\mu)}\ge1$.

To ensure the numerical solutions to \eqref{eq:ConservativeBoltzmannEquationSphericalSymmetryFlat} satisfy the maximum principle, we need to prove the conditions in Theorem \ref{thm:upperLowerBound}, which we state in the following Corollary
\begin{cor}
  Let the update for the cell average be given by \eqref{eq:averageUpdateSphericalSymmetry}.  
  Consider the quadratures in \eqref{eq:quadratureSphericalSymmetry} with $N^{(r)}\ge(k+5)/2$, $L^{(r)},N^{(\mu)}\ge(k+3)/2$, and $L^{(\mu)}\ge(k+1)/2$.  
  Let the polynomial $\fDG^n\in\bbV^{k}$ satisfy $0\le\fDG^{n}\le1$ in the set of quadrature points
  \begin{equation}
    S=\hat{\bS}^{(r)}\,\cup\,\hat{\bS}^{(\mu)}.  
    \label{eq:quadratureSetSphericalSymmetry}
  \end{equation}
  Let the time step $\dt$ satisfy the CFL condition 
  \begin{equation}
    \dt\leq\hat{w}_{N^{(r)}}\,s_{1}\,\Delta r/|\mu_{\alpha}|
    \quand
    \dt\leq\hat{w}_{N^{(\mu)}}\,s_{2}\,r_{\alpha}\,\Delta\mu/(1-\muH^{2}).  
    \label{eq:CFLsphericalSymmetry}
  \end{equation}
  It follows that $0\leq\bar{f}_{\bK}^{n+1}\leq1$.  
\end{cor}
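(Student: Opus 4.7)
The plan is to verify the four hypotheses of Theorem~\ref{thm:upperLowerBound}---using the weakened version of assumption~2 noted in the second remark following the theorem---in the concrete setting $\tau=r^{2}$, $H^{(r)}=\mu$, $H^{(\mu)}=(1-\mu^{2})/r$. The stated lower bounds on $N^{(r)}$, $N^{(\mu)}$, $L^{(r)}$, $L^{(\mu)}$ are tailored to the polynomial-degree counts required by assumptions~1 and~2, and the CFL constraints in \eqref{eq:CFLsphericalSymmetry} arise from specializing the general bound \eqref{eq:generalCFL}.

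First, for assumption~1 I would note that $\fDG^{n}\tau=\fDG^{n}\,r^{2}$ has degree at most $k+2$ in $r$ and $k$ in $\mu$, since $\fDG^{n}\in\bbV^{k}$ is a tensor product of one-dimensional polynomials of maximal degree $k$ and $\tau$ is $\mu$-independent. An $N$-point Gauss-Lobatto rule integrates polynomials of degree up to $2N-3$ exactly, so the conditions $N^{(r)}\ge(k+5)/2$ and $N^{(\mu)}\ge(k+3)/2$ are precisely sufficient.

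For assumption~2 I would invoke the weakened form from the remark: $\tilde{\bQ}^{i}$ need only integrate the first (cell-average) term of $\Gamma^{i}[\fDG^{n}]$ exactly as a function of $\tz^{i}$, provided the discrete divergence-free identity \eqref{eq:divergenceFreePhaseSpaceFlowCurvilinearAbstractDiscreteQuad} holds. Integrating $\fDG^{n}r^{2}$ in $r$ over $K^{(r)}$ yields a polynomial of degree $k$ in $\mu$, for which $L^{(\mu)}\ge(k+1)/2$ suffices; integrating in $\mu$ over $K^{(\mu)}$ yields a polynomial of degree $k+2$ in $r$, for which $L^{(r)}\ge(k+3)/2$ suffices. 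For the discrete divergence-free identity I would use the exact cancellation already established in \eqref{eq:divergenceFreeSphericalSymmetry}: the two surface integrands, proportional to $(\rH^{2}-\rL^{2})\mu$ in $\mu$ and to $(\muL^{2}-\muH^{2})r$ in $r$, are linear in the respective quadrature variable, so the Gauss-Legendre rules $Q^{(\mu)}$ and $Q^{(r)}$ integrate them exactly and the quadrature sum vanishes by the same algebraic cancellation.

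Assumption~3 is given. For assumption~4 I would specialize \eqref{eq:generalCFL} dimension-by-dimension. Since $H^{(r)}=\mu$ is independent of $r$, $\max(|0\wedge\mu|,0\vee\mu)=|\mu|$ for every $\mu\in S^{(\mu)}$, which yields the first inequality in \eqref{eq:CFLsphericalSymmetry}. Since $H^{(\mu)}=(1-\mu^{2})/r\ge 0$ throughout $K^{(\mu)}$, the negative-part contribution vanishes and only the upwind endpoint $\muH$ enters; the bound then reduces to $\dt\le\hat{w}_{N^{(\mu)}}\,s_{2}\,r\,\Delta\mu/(1-\muH^{2})$ for every $r\in S^{(r)}$, which is the second inequality. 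With all four hypotheses verified, Theorem~\ref{thm:upperLowerBound} delivers $0\le\bar{f}_{\bK}^{n+1}\le 1$. The main subtlety is the degree accounting in assumption~2: without appealing to the weakened remark one would be forced to $L^{(\mu)}\ge(k+2)/2$ to integrate the flux term $\mu\,\fDG^{n}$ (degree $k+1$ in $\mu$) exactly, and only the weakened form recovers the sharper bound $L^{(\mu)}\ge(k+1)/2$ stated in the corollary.
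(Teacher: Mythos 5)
Your proposal is correct and follows essentially the same route as the paper's own proof: verify the hypotheses of Theorem~\ref{thm:upperLowerBound} for $\tau=r^{2}$, $H^{(r)}=\mu$, $H^{(\mu)}=(1-\mu^{2})/r$, with the quadrature orders fixed by polynomial-degree counting, the divergence-free identity reduced to exact integration of linear integrands, and the CFL conditions obtained by specializing \eqref{eq:generalCFL} dimension by dimension. Your explicit appeal to the weakened form of assumption~2 --- needed because the radial flux term has degree $k+1$ in $\mu$ and a Gauss--Legendre rule with $L^{(\mu)}\ge(k+1)/2$ points does not integrate it exactly for odd $k$ --- is a genuine subtlety that the paper's proof passes over silently, but it does not change the structure of the argument.
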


\begin{proof}
  With the quadratures in \eqref{eq:quadratureSphericalSymmetry}, evaluation of the current cell-average gives
  \begin{equation}
    \f{V_{\bK}\,\bar{f}^{n}_{\bK}}{\Delta r\,\Delta\mu}
    =s_{1}\sum_{\alpha,\beta\in\hat{\bS}^{(r)}}\hat{w}_{\alpha}\,w_{\beta}\,\fDG(\hat{r}_{\alpha},\mu_{\beta})\,\hat{r}_{\alpha}^{2}
    +s_{2}\sum_{\alpha,\beta\in\hat{\bS}^{(\mu)}}w_{\alpha}\,\hat{w}_{\beta}\,\fDG(r_{\alpha},\hat{\mu}_{\beta})\,r_{\alpha}^{2},
  \end{equation}
  which is exact, and non-negative since $0\le\fDG^{n}\le1$ in $S$.  
  The quadratures also evaluate the integrals in \eqref{eq:divergenceFreeSphericalSymmetry} exactly, so that the divergence-free condition holds.  
  To compute the bound-preserving CFL conditions, we consider the $r$ and $\mu$ dimensions independently.  
  In the $r$ dimension we have $|0\wedge H^{(r)}(\rL,\mu_{\alpha})|=|0\wedge\mu_{\alpha}|$ and $(0\vee H^{(r)}(\rH,\mu_{\alpha}))=(0\vee\mu_{\alpha})$, so that
  \begin{align}
    \pderiv{\Phi_{1}^{(r)}}{b}
    &=1-\f{\dt}{\hat{w}_{1}\,s_{1}\,\Delta r}\,|0\wedge\mu_{\alpha}|, \\
    \pderiv{\Phi_{N^{(r)}}^{(r)}}{a}
    &=1-\f{\dt}{\hat{w}_{N^{(r)}}\,s_{1}\,\Delta r}\,(0\vee\mu_{\alpha}),
  \end{align}
  which are non-negative provided the first condition in \eqref{eq:CFLsphericalSymmetry} holds.  
  Next, we consider the $\mu$ dimension.  
  Since $H^{(\mu)}=(1-\mu^2)/r \geq 0$, we have $|0\wedge H^{(\mu)}(r_{\alpha},\muL)|=0$ and $(0\vee H^{(\mu)}(r_{\alpha},\muH))=(1-\muH^{2})/r_{\alpha}$, so that $\partial\Phi_{1}^{(\mu)}/\partial b=1$ and
  \begin{equation}
    \pderiv{\Phi_{N^{(\mu)}}^{(\mu)}}{a}
    =1-\f{\Delta t}{\hat{w}_{N^{(\mu)}}\,s_{2}\,\Delta\mu}\f{(1-\muH^{2})}{r_{\alpha}}, 
  \end{equation}
  which are non-negative if the second condition in \eqref{eq:CFLsphericalSymmetry} holds.  
  Therefore, the CFL condition in \eqref{eq:generalCFL} becomes \eqref{eq:CFLsphericalSymmetry}.  
  It follows that $0\le\bar{f}_{\bK}^{n+1}\le1$.  
\end{proof}

\subsection{Spherical Symmetry, Curved Spacetime (1D~x+2D~p)}
\label{sec:maximumPrincipleSphericalSymmetryGR}

In this section, we consider the spherically symmetric problem in curved spacetime.  
Adopting spherical polar phase space coordinates, the phase space is $D=\{(r,\mu,\tetradEpsilon)\in\bbR^{3} : r\ge0\,, \mu \in [-1,1]\,, \tetradEpsilon\ge0\}$ and the collisionless Boltzmann equation (cf. \eqref{eq:ConservativeBoltzmannEquationSphericalSymmetryGRApp}),
\begin{align}
  &
  \f{1}{\alpha}\pderiv{f}{t}
  +\f{1}{\alpha\,\psi^{6}\,r^{2}}\pderiv{}{r}\Big(\,\alpha\,\psi^{4}\,r^{2}\,\mu\,f\,\Big)
  -\f{1}{\tetradEpsilon^{2}}\pderiv{}{\tetradEpsilon}
  \Big(\,\tetradEpsilon^{3}\,\f{1}{\psi^{2}\,\alpha}\pderiv{\alpha}{r}\,\mu\,f\,\Big) \nonumber \\
  & \hspace{12pt}
  +\pderiv{}{\mu}
  \Big(\,\big(1-\mu^{2}\big)\,\psi^{-2}\,
    \Big\{\,
      \f{1}{r}
      +\f{1}{\psi^{2}}\pderiv{\psi^{2}}{r}
      -\f{1}{\alpha}\pderiv{\alpha}{r}
    \,\Big\}\,f
  \,\Big)
  =0, 
  \label{eq:ConservativeBoltzmannEquationSphericalSymmetryGR}
\end{align}
can be written in the form of \eqref{eq:ConservativeBoltzmannEquationCurvilinearCompact} with $z^{1}=r$, $z^{2}=\mu$, $z^{3}=\tetradEpsilon$, $\tau=\psi^{6}\,r^{2}\,\tetradEpsilon^{2}$, $H^{1}=\f{\alpha}{\psi^{2}}\mu\equiv H^{(r)}$, $H^{2}=\f{\alpha}{\psi^{2}\,r}\big(1-\mu^{2}\big)\,\Psi\equiv H^{(\mu)}$, and $H^{3}=-\tetradEpsilon \f{1}{\psi^{2}}\pderiv{\alpha}{r}\,\mu\equiv H^{(\tetradEpsilon)}$, where we have defined
\begin{equation}
  \Psi=1+r\,\p_{r}\ln\psi^{2}-r\,\p_{r}\ln\alpha.
\end{equation}
As in the previous section, the position coordinate $r$ is the radial distance from the origin and the momentum coordinate $\mu$ is the cosine of the angle between the particle propagation direction and the radial direction.  
In addition, $\tetradEpsilon$ is the particle energy, while $\alpha(r)\ge0$ is the lapse function and $\psi(r)\ge0$ is the conformal factor.  
(See \ref{sec:equations} for further details.)  
Equation \eqref{eq:ConservativeBoltzmannEquationSphericalSymmetryGR} reduces to \eqref{eq:ConservativeBoltzmannEquationSphericalSymmetryFlat} in the case of a flat spacetime ($\alpha=\psi=1$).  

The phase space element is now
\begin{equation}
  \bK=\{(r,\mu,\tetradEpsilon)\in\bbR^{3} : r\in K^{(r)},\, \mu\in K^{(\mu)},\, \tetradEpsilon\in K^{(\tetradEpsilon)}:=(\eL,\eH)\},
\end{equation}
and, for any $v\in\bbV^{k}$ the upwind numerical fluxes are given by
\begin{align*}
  \widehat{H^{(r)}v}(r,\mu,\tetradEpsilon)
  &=\f{\alpha}{\psi^{2}}
  \Big\{\,
    \f{1}{2}\big(\mu+|\mu|\big)\,v(r^{-},\mu,\tetradEpsilon)+\f{1}{2}\big(\mu-|\mu|\big)\,v(r^{+},\mu,\tetradEpsilon)
  \,\Big\}, \\
  \widehat{H^{(\mu)}v}(r,\mu,\tetradEpsilon)
  &=\f{\alpha}{\psi^{2}\,r}\,(1-\mu^{2})\,
  \Big\{\,
    \f{1}{2}\big(\Psi+|\Psi|\big)\,v(r,\mu^{-},\tetradEpsilon) \nonumber \\
    &\hspace{108pt}
    +\f{1}{2}\big(\Psi-|\Psi|\big)\,v(r,\mu^{+},\tetradEpsilon)
  \,\Big\}, \\
  \widehat{H^{(\tetradEpsilon)}v}(r,\mu,\tetradEpsilon)
  &=-\f{\tetradEpsilon}{\psi^{2}}
  \Big\{\,
    \f{1}{2}\big(\p_{r}\alpha\,\mu-|\p_{r}\alpha\,\mu|\big)\,v(r,\mu,\tetradEpsilon^{-}) \nonumber \\
    &\hspace{66pt}
    +\f{1}{2}\big(\p_{r}\alpha\,\mu+|\p_{r}\alpha\,\mu|\big)\,v(r,\mu,\tetradEpsilon^{+})
  \,\Big\}.
\end{align*}
(Note the sign difference on the energy flux term.)  
Then, for any $(r,\mu,\tetradEpsilon) \in D$ and any $v \in \bbV^{k}$, the DG method is as follows:
\textit{Find $\fDG \in \bbV^{k}$ such that}
\begin{align}
  &
  \int_{\bK}\p_{t}\fDG\,v\,dV
  -\int_{\bK}H^{{r}}\fDG\,\p_{r}v\,dV
  -\int_{\bK}H^{(\mu)}\fDG\,\p_{\mu}v\,dV
  -\int_{\bK}H^{(\tetradEpsilon)}\fDG\,\p_{\tetradEpsilon}v\,dV \nonumber \\
  & \hspace{12pt}
  + \int_{\tK^{(r)}}\widehat{H^{(r)}\fDG}(\rH,\mu,\tetradEpsilon)\,v(\rH^{-},\mu,\tetradEpsilon)\,\tau(\rH,\tetradEpsilon)\,d\tV^{(r)} \nonumber \\
  & \hspace{48pt}
  - \int_{\tK^{(r)}}\widehat{H^{(r)}\fDG}(\rL,\mu,\tetradEpsilon)\,v(\rL^{+},\mu,\tetradEpsilon)\,\tau(\rL,\tetradEpsilon)\,d\tV^{(r)} \nonumber \\
   & \hspace{12pt}
  + \int_{\tK^{(\mu)}}\widehat{H^{(\mu)}\fDG}(r,\muH,\tetradEpsilon)\,v(r,\muH^{-},\tetradEpsilon)\,\tau(r,\tetradEpsilon)\,d\tV^{(\mu)} \nonumber \\
  & \hspace{48pt}
  - \int_{\tK^{(\mu)}}\widehat{H^{(\mu)}\fDG}(r,\muL,\tetradEpsilon)\,v(r,\muL^{+},\tetradEpsilon)\,\tau(r,\tetradEpsilon)\,d\tV^{(\mu)} \nonumber \\
  & \hspace{12pt}
  + \int_{\tK^{(\tetradEpsilon)}}\widehat{H^{(\tetradEpsilon)}\fDG}(r,\mu,\eH)\,v(r,\mu,\eH^{-})\,\tau(r,\eH)\,d\tV^{(\tetradEpsilon)} \nonumber \\
  & \hspace{48pt}
  - \int_{\tK^{(\tetradEpsilon)}}\widehat{H^{(\tetradEpsilon)}\fDG}(r,\mu,\eL)\,v(r,\mu,\eL^{+})\,\tau(r,\eL)\,d\tV^{(\tetradEpsilon)}=0, 
  \label{eq:ConservativeBoltzmannSphericalSymmetryGRDG}
\end{align}
\textit{for all $v \in \bbV^{k}$ and all $\bK\in\cT$}.  
In \eqref{eq:ConservativeBoltzmannSphericalSymmetryGRDG}, we have defined phase-space volume and ``area" elements
\begin{equation}
  dV=\tau\,dr\,d\mu\,d\tetradEpsilon,\quad
  d\tV^{(r)}=d\mu\,d\tetradEpsilon,\quad
  d\tV^{(\mu)}=dr\,d\tetradEpsilon,\quad
  d\tV^{(\tetradEpsilon)}=dr\,d\mu,
\end{equation}
and subelements
\begin{equation}
  \tK^{(r)}=K^{(\mu)}\otimes K^{(\tetradEpsilon)},\quad
  \tK^{(\mu)}=K^{(r)}\otimes K^{(\tetradEpsilon)},\quad
  \tK^{(\tetradEpsilon)}=K^{(r)}\otimes K^{(\mu)}.  
\end{equation}
The update for the cell-average in \eqref{eq:averageUpdate} becomes
\begin{align}
  \bar{f}_{\bK}^{n+1}
  &=\bar{f}_{\bK}^{n}
  -\f{\Delta t}{V_{\bK}}
  \Big\{\,
    \psi^{6}(\rH)\,\rH^{2}\int_{\tK^{(r)}}\widehat{H^{(r)}\fDG^{n}}(\rH,\mu,\tetradEpsilon)\,\tetradEpsilon^{2}\,d\tV^{(r)} \nonumber \\
    &\hspace{96pt}
    -\psi^{6}(\rL)\,\rL^{2}\int_{\tK^{(r)}}\widehat{H^{(r)}\fDG^{n}}(\rL,\mu,\tetradEpsilon)\,\tetradEpsilon^{2}\,d\tV^{(r)} \nonumber \\
    &\hspace{72pt}
    +\int_{\tK^{(\mu)}}\widehat{H^{(\mu)}\fDG^{n}}(r,\muH,\tetradEpsilon)\,\psi^{6}(r)\,r^{2}\,\tetradEpsilon^{2}\,d\tV^{(\mu)} \nonumber \\
    &\hspace{96pt}
    -\int_{\tK^{(\mu)}}\widehat{H^{(\mu)}\fDG^{n}}(r,\muL,\tetradEpsilon)\,\psi^{6}(r)\,r^{2}\,\tetradEpsilon^{2}\,d\tV^{(\mu)} \nonumber \\
    &\hspace{72pt}
    +\eH^{2}\int_{\tK^{(\tetradEpsilon)}}\widehat{H^{(\tetradEpsilon)}\fDG^{n}}(r,\mu,\eH)\,\psi^{6}(r)\,r^{2}\,d\tV^{(\tetradEpsilon)} \nonumber \\
    &\hspace{96pt}
    -\eL^{2}\int_{\tK^{(\tetradEpsilon)}}\widehat{H^{(\tetradEpsilon)}\fDG^{n}}(r,\mu,\eL)\,\psi^{6}(r)\,r^{2}\,d\tV^{(\tetradEpsilon)}
  \,\Big\},
  \label{eq:averageUpdateSphericalSymmetryGR}
\end{align}
where $V_{\bK}=\int_{\bK}\psi^{6}\,r^{2}dr\,d\mu\,\tetradEpsilon^{2}d\tetradEpsilon$.  

To satisfy the first two conditions in Theorem~\ref{thm:upperLowerBound}, we define the quadratures
\begin{equation}
  \hat{\bQ}^{(r)}
  =\tilde{\bQ}^{(r)} \circ \hat{Q}^{(r)}, \quad
  \hat{\bQ}^{(\mu)}
  =\tilde{\bQ}^{(\mu)} \circ \hat{Q}^{(\mu)}, \quand
  \hat{\bQ}^{(\tetradEpsilon)}
  =\tilde{\bQ}^{(\tetradEpsilon)} \circ \hat{Q}^{(\tetradEpsilon)},
  \label{eq:quadratureSphericalSymmetryGR}
\end{equation}
where $\tilde{\bQ}^{(r)}=Q^{(\mu)} \circ Q^{(\tetradEpsilon)}$, $\tilde{\bQ}^{(\mu)}=Q^{(r)} \circ Q^{(\tetradEpsilon)}$, and $\tilde{\bQ}^{(\tetradEpsilon)}=Q^{(r)} \circ Q^{(\mu)}$, and, in addition to the quadratures defined in Section~\ref{sec:maximumPrincipleSphericalSymmetry}, $Q^{(\tetradEpsilon)}$ is an $L^{(\tetradEpsilon)}$-point Gauss-Legendre quadrature on $K^{(\tetradEpsilon)}$.  
Similarly, $\hat{Q}^{(\tetradEpsilon)}$ is an $N^{(\tetradEpsilon)}$-point Gauss-Lobatto quadrature on $K^{(\tetradEpsilon)}$.  
The quadrature points associated with $\hat{\bQ}^{(r)}$, $\hat{\bQ}^{(\mu)}$, and $\hat{\bQ}^{(\tetradEpsilon)}$ are
\begin{equation}
  \hat{\bS}^{(r)}=\tS^{(r)}\otimes\hat{S}^{(r)}, \quad
  \hat{\bS}^{(\mu)}=\tS^{(\mu)}\otimes\hat{S}^{(\mu)},\quand
  \hat{\bS}^{(\tetradEpsilon)}=\tS^{(\tetradEpsilon)}\otimes\hat{S}^{(\tetradEpsilon)}.  
\end{equation}
We let $S^{(r)}$ and $S^{(\mu)}$ be the Gaussian sets defined in \eqref{gauss}, and $\hat{S}^{(r)}$ and $\hat{S}^{(\mu)}$ the Gauss-Lobatto sets defined in \eqref{gaussLobatto}.  
In a similar manner, we let
\begin{equation}
  S^{(\tetradEpsilon)}
  =\{\tetradEpsilon_{\alpha}:\alpha=1,\ldots,L^{(\tetradEpsilon)}\}\quand
  \hat{S}^{(\tetradEpsilon)}
  =\{\hat{\tetradEpsilon}_{\alpha}:\alpha=1,\ldots,N^{(\tetradEpsilon)}\}
\end{equation}
be Gaussian and Gauss-Lobatto quadrature points on $K^{(\tetradEpsilon)}$ with associated weights $\{w_{\alpha}\}_{\alpha=1}^{L^{(\tetradEpsilon)}}$ and $\{\hat{w}_{\alpha}\}_{\alpha=1}^{N^{(\tetradEpsilon)}}$, respectively, normalized so that $\sum_{\alpha}w_{\alpha}=\sum_{\alpha}\hat{w}_{\alpha}=1$.  
Then we let 
\begin{equation}
  \tS^{(r)}=S^{(\mu)} \otimes S^{(\tetradEpsilon)}, \quad
  \tS^{(\mu)}=S^{(r)} \otimes S^{(\tetradEpsilon)}, \quand
  \tS^{(\tetradEpsilon)}=S^{(r)} \otimes S^{(\mu)}.  
\end{equation}
Again, the Gauss-Lobatto quadrature is only introduced to derive the CFL conditions needed to ensure the BP properties.  
For implementation, we use the Gauss-Legendre quadrature $\bQ=Q^{(r)} \circ Q^{(\mu)} \circ Q^{(\tetradEpsilon)}$ to compute the volume integrals in \eqref{eq:ConservativeBoltzmannSphericalSymmetryGRDG}.  
Gaussian quadratures $\tilde{\bQ}^{(r)}$, $\tilde{\bQ}^{(\mu)}$, and $\tilde{\bQ}^{(\tetradEpsilon)}$ are also used to evaluate the flux integrals.  

For this problem, the divergence-free condition in \eqref{eq:divergenceFreePhaseSpaceFlowCurvilinearAbstractDiscrete} becomes
\begin{align}
  \f{1}{V_{\bK}}
  \Big\{\,
    &\psi^{6}(\rH)\,\rH^{2}\int_{\tK^{(r)}}H^{(r)}(\rH,\mu,\tetradEpsilon)\,\tetradEpsilon^{2}\,d\tV^{(r)} \nonumber \\
    &\hspace{36pt}
    -\psi^{6}(\rL)\,\rL^{2}\int_{\tK^{(r)}}H^{(r)}(\rL,\mu,\tetradEpsilon)\,\tetradEpsilon^{2}\,d\tV^{(r)} \nonumber \\
    &+\int_{\tK^{(\mu)}}H^{(\mu)}(r,\muH,\tetradEpsilon)\,\psi^{6}(r)\,r^{2}\,\tetradEpsilon^{2}\,d\tV^{(\mu)} \nonumber \\
    &\hspace{36pt}
    -\int_{\tK^{(\mu)}}H^{(\mu)}(r,\muL,\tetradEpsilon)\,\psi^{6}(r)\,r^{2}\,\tetradEpsilon^{2}\,d\tV^{(\mu)} \nonumber \\
    &+\eH^{2}\int_{\tK^{(\tetradEpsilon)}}H^{(\tetradEpsilon)}(r,\mu,\eH)\,\psi^{6}(r)\,r^{2}\,d\tV^{(\tetradEpsilon)} \nonumber \\
    &\hspace{36pt}
    -\eL^{2}\int_{\tK^{(\tetradEpsilon)}}H^{(\tetradEpsilon)}(r,\mu,\eL)\,\psi^{6}(r)\,r^{2}\,d\tV^{(\tetradEpsilon)}
  \,\Big\} = 0.
  \label{eq:divergenceFreeSphericalSymmetryGR}
\end{align}
In Equation~\eqref{eq:ConservativeBoltzmannSphericalSymmetryGRDG}, we approximate the derivatives $\p_{r}\alpha$ and $\p_{r}\psi^{4}$ in $K^{(r)}$ with polynomials and compute $\alpha$ and $\psi^{4}$ from
\begin{equation}
  \alpha(r)=\alpha(\rL)+\int_{\rL}^{r}\p_{r}\alpha(r')\,dr'\quad\mbox{and}\quad
  \psi^{4}(r)=\psi^{4}(\rL)+\int_{\rL}^{r}\p_{r}\psi^{4}(r')\,dr',
\end{equation}
where the Gaussian quadrature rule is used to evaluate the integrals exactly.  
With this choice, it is straightforward to show that the discretization satisfies the divergence-free condition \eqref{eq:divergenceFreeSphericalSymmetryGR}, provided $L^{(\mu)}\ge1$, $L^{(\tetradEpsilon)}\ge2$, while $L^{(r)}$ depends on the degree of the polynomials approximating $\p_{r}\alpha$ and $\p_{r}\psi^{4}$.  

To ensure the numerical solutions to \eqref{eq:ConservativeBoltzmannEquationSphericalSymmetryGR} satisfy the maximum principle, we need to prove the conditions in Theorem~\ref{thm:upperLowerBound}, which we state in the following Corollary
\begin{cor}
  Let the update for the cell average be given by \eqref{eq:averageUpdateSphericalSymmetryGR}.  
  Consider the quadratures in \eqref{eq:quadratureSphericalSymmetryGR} with $N^{(r)}\ge(k+k_{\psi}+5)/2$, $L^{(r)}\ge(k+k_{\psi}+3)/2$, $N^{(\mu)}\ge(k+3)/2$, $L^{(\mu)}\ge(k+1)/2$, $N^{(\tetradEpsilon)}\ge(k+5)/2$, and $L^{(\tetradEpsilon)}\ge(k+3)/2$, where $k_{\psi}$ is the degree of the polynomial used to approximate $\psi^{6}(r)$.  
  (We also let the degree of the polynomials approximating $\p_{r}\alpha$ and $\p_{r}\psi^{4}$ be equal to $k_{\psi}$.)
  Let the polynomial $\fDG^n\in\bbV^{k}$ satisfy $0\le\fDG^{n}\le1$ in the quadrature set
  \begin{equation}
    S
    =\hat{\bS}^{(r)}\,\cup\,\hat{\bS}^{(\mu)}\,\cup\,\hat{\bS}^{(\tetradEpsilon)}.  
    \label{eq:quadratureSetSphericalSymmetryGR}
  \end{equation}
  Let the time step $\dt$ satisfy
  \begin{align}
    \dt
    &\le\min\big(\psi^{2}(\rL)/\alpha(\rL),\psi^{2}(\rH)/\alpha(\rH)\big)\,\hat{w}_{N^{(r)}}\,s_{1}\,\Delta r/|\mu_{\alpha}|, 
    \label{eq:CFLSphericalSymmetryRadius} \\
    \dt
    &\le\f{\hat{w}_{N^{(\mu)}}\,s_{2}\,r_{\alpha}\,\Delta\mu\,\psi^{2}(r_{\alpha})/\alpha(r_{\alpha})}{|\Psi(r_{\alpha})|\,\max\big((1-\muL^{2}),(1-\muH^{2})\big)}, 
    \label{eq:CFLSphericalSymmetryAngle} \\
    \dt
    &\le\f{\hat{w}_{N^{(\tetradEpsilon)}}\,s_{3}\,\Delta\tetradEpsilon\,\psi^{2}(r_{\alpha})}{|\p_{r}\alpha(r_{\alpha})\,\mu_{\beta}|\,\eH}.  
    \label{eq:CFLSphericalSymmetryEnergy}
  \end{align}
  It follows that $0 \leq \overline{f}_{\bK}^{n+1} \leq 1$.  
\end{cor}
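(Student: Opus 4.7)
The plan is to verify, one by one, the four hypotheses of Theorem~\ref{thm:upperLowerBound} for the spherically symmetric, curved-spacetime discretization in \eqref{eq:averageUpdateSphericalSymmetryGR}, and then deduce the three CFL bounds \eqref{eq:CFLSphericalSymmetryRadius}--\eqref{eq:CFLSphericalSymmetryEnergy} by applying Lemma~\ref{lem:gammPositive} in each of the $r$, $\mu$, and $\tetradEpsilon$ directions separately.

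First I would check the exactness of the Gauss-Lobatto quadratures $\hat{Q}^{(r)}$, $\hat{Q}^{(\mu)}$, $\hat{Q}^{(\tetradEpsilon)}$ as required by assumption~$1$ of Theorem~\ref{thm:upperLowerBound}. For this step I count polynomial degrees in each variable: the integrand of the cell average is $\fDG\,\psi^{6}\,r^{2}\,\tetradEpsilon^{2}$, which has degree $k+k_{\psi}+2$ in $r$, degree $k$ in $\mu$, and degree $k+2$ in $\tetradEpsilon$. Since an $N$-point Gauss-Lobatto rule is exact for polynomials of degree up to $2N-3$, the stated lower bounds on $N^{(r)}$ and $N^{(\tetradEpsilon)}$ (and on $L^{(\mu)}$, $L^{(\tetradEpsilon)}$ for the transverse Gauss-Legendre factors) are precisely what is needed. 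The same degree-counting argument, applied to the integrand of $\Gamma^{i}[\fDG]$ on $\tK^{i}$, justifies the choices for $\tQ^{(r)}$, $\tQ^{(\mu)}$, $\tQ^{(\tetradEpsilon)}$ in assumption~$2$.

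The main obstacle, and the genuinely new ingredient beyond the flat case of Section~\ref{sec:maximumPrincipleSphericalSymmetry}, is verifying the discrete divergence-free condition \eqref{eq:divergenceFreeSphericalSymmetryGR} that also appears in assumption~$2$. The delicate point is that the three surface terms (radial, angular, energy) are not individually zero; they cancel only because the derivatives $\p_{r}\alpha$ and $\p_{r}\psi^{4}$ that appear in $H^{(\mu)}$ and $H^{(\tetradEpsilon)}$ are consistent with the values of $\alpha(\rL), \alpha(\rH), \psi^{4}(\rL), \psi^{4}(\rH)$ used in the radial surface flux. By approximating $\p_{r}\alpha$ and $\p_{r}\psi^{4}$ by polynomials of degree $k_{\psi}$ and recovering $\alpha(r)$ and $\psi^{4}(r)$ via the exact Gauss-Legendre integration of these polynomials on $K^{(r)}$, the fundamental theorem of calculus continues to hold at the discrete level. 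One then expands each of the six surface integrals in \eqref{eq:divergenceFreeSphericalSymmetryGR} using the product of Gauss-Legendre rules in the transverse variables, separates out the $\mu$- and $\tetradEpsilon$-integrations (both of which are elementary and exact under $L^{(\mu)}\ge1$, $L^{(\tetradEpsilon)}\ge2$), and checks that the remaining radial combinations cancel in pairs by this compatibility. The choice of $k_{\psi}$ only enters through the radial polynomial degrees and is what fixes the lower bounds on $N^{(r)}$ and $L^{(r)}$.

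Having secured assumptions~$1$ and~$2$ and assumed assumption~$3$, only the CFL bound \eqref{eq:generalCFL} of Theorem~\ref{thm:upperLowerBound} remains. For this I apply Lemma~\ref{lem:gammPositive} separately in each dimension. In the $r$ direction $H^{(r)}=(\alpha/\psi^{2})\,\mu$, and at the cell endpoints the upwind split gives $|0\wedge H^{(r)}(\rL,\mu,\tetradEpsilon)|=(\alpha/\psi^{2})(\rL)\,|0\wedge\mu|$ and $(0\vee H^{(r)}(\rH,\mu,\tetradEpsilon))=(\alpha/\psi^{2})(\rH)\,(0\vee\mu)$, neither of which depends on $\tetradEpsilon$; taking the worst case over endpoints and the Gauss-Legendre node $\mu_{\alpha}\in S^{(\mu)}$ yields \eqref{eq:CFLSphericalSymmetryRadius}. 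In the $\mu$ direction, $H^{(\mu)}=(\alpha/(\psi^{2}r))(1-\mu^{2})\,\Psi$ can have either sign through $\Psi$, and evaluating at $\mu=\muL,\muH$ and the quadrature nodes $(r_{\alpha},\tetradEpsilon_{\beta})\in\tS^{(\mu)}$ gives \eqref{eq:CFLSphericalSymmetryAngle}, the factor $(1-\muL^{2})\vee(1-\muH^{2})$ arising because the product $(1-\mu^{2})$ must be bounded uniformly on $K^{(\mu)}$ using its values at the two endpoints. In the $\tetradEpsilon$ direction, $H^{(\tetradEpsilon)}=-(\tetradEpsilon/\psi^{2})\,\p_{r}\alpha\,\mu$ is linear in $\tetradEpsilon$; since its sign depends only on $\p_{r}\alpha\,\mu$ and its magnitude is bounded by $(\eH/\psi^{2})\,|\p_{r}\alpha\,\mu|$, the worst case at the Gauss-Legendre nodes $(r_{\alpha},\mu_{\beta})\in\tS^{(\tetradEpsilon)}$ produces \eqref{eq:CFLSphericalSymmetryEnergy}. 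All four hypotheses of Theorem~\ref{thm:upperLowerBound} then hold and the conclusion $0\le\bar{f}_{\bK}^{n+1}\le1$ follows.
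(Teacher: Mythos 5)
Your proposal is correct and follows essentially the same route as the paper: verify the hypotheses of Theorem~\ref{thm:upperLowerBound} (quadrature exactness by degree counting, the discrete divergence-free condition via the compatibility of the reconstructed $\alpha$ and $\psi^{4}$ with their polynomial derivative approximations, and the point-wise bounds on $S$), then obtain the three CFL conditions by applying Lemma~\ref{lem:gammPositive} dimension by dimension with the upwind splittings of $H^{(r)}$, $H^{(\mu)}$, and $H^{(\tetradEpsilon)}$ evaluated at the cell surfaces. The only cosmetic quibble is your phrasing that $(1-\mu^{2})$ is ``bounded uniformly on $K^{(\mu)}$ by its endpoint values'' --- the max over $(1-\muL^{2})$ and $(1-\muH^{2})$ appears simply because the flux is evaluated only at the two surfaces $\muL$ and $\muH$, exactly as in the paper.
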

\begin{proof}
  With the quadratures in \eqref{eq:quadratureSphericalSymmetryGR}, evaluation of the current cell-average gives
  \begin{align}
    \f{V_{\bK}\,\bar{f}_{\bK}^{n}}{\Delta r\,\Delta\mu\,\Delta\tetradEpsilon}
      &
      =s_{1}\sum_{\alpha,\beta,\gamma\in\hat{\bS}^{(r)}}
      \hat{w}_{\alpha}\,w_{\beta}\,w_{\gamma}\,\fDG^{n}(\hat{r}_{\alpha},\mu_{\beta},\tetradEpsilon_{\gamma})\,\psi^{6}(\hat{r}_{\alpha})\,\hat{r}_{\alpha}^{2}\,\tetradEpsilon_{\gamma}^{2} \nonumber \\
      &\hspace{12pt}
      +s_{2}\sum_{\alpha,\beta,\gamma\in\hat{\bS}^{(\mu)}}
      w_{\alpha}\,\hat{w}_{\beta}\,w_{\gamma}\,\fDG^{n}(r_{\alpha},\hat{\mu}_{\beta},\tetradEpsilon)\,\psi^{6}(r_{\alpha})\,r_{\alpha}^{2}\,\tetradEpsilon_{\gamma}^{2} \nonumber \\
      &\hspace{12pt}
      +s_{3}\sum_{\alpha,\beta,\gamma\in\hat{\bS}^{(\tetradEpsilon)}}
      w_{\alpha}\,w_{\beta}\,\hat{w}_{\tetradEpsilon}\,\fDG^{n}(r_{\alpha},\mu_{\beta},\hat{\tetradEpsilon}_{\gamma})\,\psi^{6}(r_{\alpha})\,r_{\alpha}^{2}\,\hat{\tetradEpsilon}_{\gamma}^{2},
  \end{align}
  which is exact, and non-negative since $0\le\fDG^{n}\le1$ in $S$.  
  The divergence-free condition holds since the quadratures in \eqref{eq:quadratureSphericalSymmetryGR} evaluate the integrals in \eqref{eq:divergenceFreeSphericalSymmetryGR} exactly.  
  To compute the bound-preserving CFL conditions, we consider the three dimensions ($r$, $\mu$, and $\tetradEpsilon$) independently.  
  Defining $\widehat{\alpha}=\alpha/\psi^{2}>0$, we have in the radial dimension $|0\wedge H^{(r)}(\rL,\mu_{\alpha},\tetradEpsilon_{\beta})|=\widehat{\alpha}(\rL)\,|0\wedge\mu_{\alpha}|$, and $(0\vee H^{(r)}(\rH,\mu_{\alpha},\tetradEpsilon_{\beta}))=\widehat{\alpha}(\rH)\,(0\vee\mu_{\alpha})$, so that
  \begin{align}
    \pderiv{\Phi_{1}^{(r)}}{b}
    &=1-\f{\Delta t}{\hat{w}_{1}\,s_{1}\,\Delta r}\,\widehat{\alpha}(\rL)\,|0\wedge\mu_{\alpha}|, \\
    \pderiv{\Phi_{N^{(r)}}^{(r)}}{a}
    &=1-\f{\Delta t}{\hat{w}_{N^{(r)}}\,s_{1}\,\Delta r}\,\widehat{\alpha}(\rH)\,(0\vee\mu_{\alpha}),
  \end{align}
  which are non-negative if \eqref{eq:CFLSphericalSymmetryRadius} holds. 
  Therefore, the CFL condition in \eqref{eq:generalCFL} becomes \eqref{eq:CFLSphericalSymmetryRadius}. 
  In the $\mu$ dimension, $|0\wedge H^{(\mu)}(r_{\alpha},\muL,\tetradEpsilon_{\beta})|=(1-\muL^{2})\,\widehat{\alpha}(r_{\alpha})/r_{\alpha}\,|0\wedge\Psi(r_{\alpha})|$ and $(0\vee H^{(\mu)}(r_{\alpha},\muH,\tetradEpsilon_{\beta}))=(1-\muL^{2})\,\widehat{\alpha}(r_{\alpha})/r_{\alpha}\,(0\vee\Psi(r_{\alpha}))$, which give
  \begin{align}
    \pderiv{\Phi_{1}^{(\mu)}}{b}
    &=1-\f{\Delta t}{\hat{w}_{1}\,s_{2}\,\Delta\mu}\,\f{(1-\muL^{2})\,\widehat{\alpha}(r_{\alpha})}{r_{\alpha}}\,|0\wedge\Psi(r_{\alpha})|, \\
    \pderiv{\Phi_{N^{(\mu)}}^{(\mu)}}{a}
    &=1-\f{\Delta t}{\hat{w}_{N^{(\mu)}}\,s_{2}\,\Delta\mu}\,\f{(1-\muH^{2})\,\widehat{\alpha}(r_{\alpha})}{r_{\alpha}}\,(0\vee\Psi(r_{\alpha})), 
  \end{align}
  which are non-negative if \eqref{eq:CFLSphericalSymmetryAngle} holds, so that the CFL condition in \eqref{eq:generalCFL} becomes \eqref{eq:CFLSphericalSymmetryAngle}.  
  Finally, in the $\tetradEpsilon$ dimension we have $|0\wedge H^{(\tetradEpsilon)}(r_{\alpha},\mu_{\beta},\eL)|=\eL\,|0\wedge-\p_{r}\alpha(r_{\alpha})\,\mu_{\beta}|/\psi^{2}(r_{\alpha})$ and $(0\vee H^{(\tetradEpsilon)}(r_{\alpha},\mu_{\beta},\eH))=\eH\,(0\vee-\p_{r}\alpha(r_{\alpha})\,\mu_{\beta})/\psi^{2}(r_{\alpha})$, so that
  \begin{align}
    \pderiv{\Phi_{1}^{(\tetradEpsilon)}}{b}
    &=1-\f{\dt}{\hat{w}_{1}\,s_{3}\,\Delta\tetradEpsilon}\,\f{\eL}{\psi^{2}(r_{\alpha})}\,|0\wedge-\p_{r}\alpha(r_{\alpha})\,\mu_{\beta}|, \\
    \pderiv{\Phi_{N^{(\tetradEpsilon)}}^{(\tetradEpsilon)}}{a}
    &=1-\f{\Delta t}{\hat{w}_{N^{(\tetradEpsilon)}}\,s_{3}\,\Delta\tetradEpsilon}\,\f{\eH}{\psi^{2}(r_{\alpha})}\,(0\vee-\p_{r}\alpha(r_{\alpha})\,\mu_{\beta}),
  \end{align}
  which are non-negative if \eqref{eq:CFLSphericalSymmetryEnergy} holds, so that the CFL condition in \eqref{eq:generalCFL} becomes \eqref{eq:CFLSphericalSymmetryEnergy}.  
  It follows that $0\le\bar{f}_{\bK}^{n+1}\le1$.  
\end{proof}

\subsection{Axial Symmetry, Flat Spacetime (2D~x+2D~p)}
\label{sec:maximumPrincipleAxialSymmetry}

For a flat, axially symmetric spacetime, adopting cylindrical spatial coordinates and spherical momentum coordinates, the phase space is $D = \{(\rPerp,z,\mu,\tetradPhi)\in\bbR^{4} : \rPerp\ge0,\,z\in\bbR,\,\mu\in[-1,1],\,\tetradPhi\in[0,\pi]\}$ and the collision-less Boltzmann equation (cf. \eqref{eq:ConservativeBoltzmannEquationAxialSymmetryFlatApp}),
\begin{align}
  \pderiv{f}{t}
  +\f{1}{\rPerp}\pderiv{}{\rPerp}\Big(\rPerp\,\sqrt{1-\mu^{2}}\,\cos\tetradPhi\,f\Big)
  +\pderiv{}{z}\Big(\mu\,f\Big)
  -\f{1}{\rPerp}\pderiv{}{\tetradPhi}
  \Big(
    \sqrt{1-\mu^{2}}\,\sin\tetradPhi\,f
  \Big)
  =0,
  \label{eq:ConservativeBoltzmannEquationAxialSymmetryFlat}
\end{align}
can be rewritten in the form of \eqref{eq:ConservativeBoltzmannEquationCurvilinearCompact} with $z^{1}=\rPerp$, $z^{2}=z$, $z^{3}=\mu$, $z^{4}=\tetradPhi$, $\tau=\rPerp$, $H^{1}=\sqrt{1-\mu^{2}}\,\cos\tetradPhi\equiv H^{(\rPerp)}$, $H^{2}=\mu\equiv H^{(z)}$, $H^{3}=0$, and $H^{4}=-\sqrt{1-\mu^{2}}\,\sin\tetradPhi/\rPerp\equiv H^{(\tetradPhi)}$.  
Here the position coordinates $\rPerp$ and $z$ are the distance from the $z$-axis and the distance along the $z$-axis, respectively.  
The momentum coordinate $\mu$ is the cosine of the angle between the particle direction of flight and the $z$ direction, and $\tetradPhi$ is the angle between the projected particle direction of flight and the $\rPerp$ direction.  
(See \ref{sec:equations} for further details.)  

The phase space element is now
\begin{align}
  \bK
  &=\{(\rPerp,z,\mu,\tetradPhi)\in\bbR^{4} :
  \rPerp\in K^{(\rPerp)}:=(\RL,\RH),
  z\in K^{(z)}:=(\zL,\zH), \nonumber \\
  &\hspace{48pt}
  \mu\in K^{(\mu)}:=(\muL,\muH),
  \tetradPhi\in K^{(\tetradPhi)}:=(\PhiL,\PhiH)\},
\end{align}
and, for all $v\in\bbV^{k}$ the upwind numerical fluxes are given by
\begin{align*}
  \widehat{H^{(\rPerp)}v}(\rPerp,z,\mu,\tetradPhi)
  &=\sqrt{1-\mu^{2}}\,
  \Big\{\,
    \f{1}{2}\big(\cos\tetradPhi+|\cos\tetradPhi|\big)\,v(\rPerp^{-},z,\mu,\tetradPhi) \nonumber \\
    &\hspace{68pt}
    +\f{1}{2}\big(\cos\tetradPhi-|\cos\tetradPhi|\big)\,v(\rPerp^{+},z,\mu,\tetradPhi)
  \,\Big\}, \\
  \widehat{H^{(z)}v}(\rPerp,z,\mu,\tetradPhi)
  &=\f{1}{2}\big(\mu+|\mu|\big)\,v(\rPerp,z^{-},\mu,\tetradPhi)+\f{1}{2}\big(\mu-|\mu|\big)\,v(\rPerp,z^{+},\mu,\tetradPhi),  \\
  \widehat{H^{(\tetradPhi)}v}(\rPerp,z,\mu,\tetradPhi)
  &=-\sqrt{1-\mu^{2}}\,\sin\tetradPhi\,v(\rPerp,z,\mu,\tetradPhi^{+}) /\rPerp.  
\end{align*}
(Note that $\sin\tetradPhi\ge0$ in axial symmetry since $\tetradPhi\in[0,\pi]$, and $\sqrt{1-\mu^{2}}\ge0$ since $\mu\in[-1,1]$.)  
Then, for any $(\rPerp,z,\mu,\tetradPhi) \in D$ and any $v \in \bbV^{k}$, the DG method is as follows:
\textit{Find $\fDG \in \bbV^{k}$ such that}
\begin{align}
  &
  \int_{\bK}\p_{t}\fDG\,v\,dV
  -\int_{\bK}H^{(\rPerp)}\fDG\,\p_{\rPerp}v \,dV
  -\int_{\bK}H^{(z)}\fDG\,\p_{z}v\,dV 
  -\int_{\bK}H^{(\tetradPhi)}\fDG\,\p_{\tetradPhi}v\,dV \nonumber \\
  &\hspace{12pt}
  + \RH\int_{\tK^{(\rPerp)}}\widehat{H^{(\rPerp)}\fDG}(\RH,z,\mu,\tetradPhi) \,v(\RH^{-},z,\mu,\tetradPhi)\,d\tV^{(\rPerp)} \nonumber \\
  &\hspace{48pt}
   - \RL\int_{\tK^{(\rPerp)}}\widehat{H^{(\rPerp)}\fDG}(\RL,z,\mu,\tetradPhi) \,v(\RL^{+},z,\mu,\tetradPhi)\,d\tV^{(\rPerp)} \nonumber \\
  &\hspace{12pt}
  + \int_{\tK^{(z)}}\widehat{H^{(z)}\fDG}(\rPerp,\zH,\mu,\tetradPhi) \,v(\rPerp,\zH^{-},\mu,\tetradPhi)\,\rPerp\, d\tV^{(z)} \nonumber \\
  &\hspace{48pt}
  - \int_{\tK^{(z)}}\widehat{H^{(z)}\fDG}(\rPerp,\zL,\mu,\tetradPhi) \,v(\rPerp,\zL^{-},\mu,\tetradPhi)\,\rPerp\, d\tV^{(z)}   \nonumber \\
  &\hspace{12pt}
  + \int_{\tK^{(\tetradPhi)}}\widehat{H^{(\tetradPhi)}\fDG}(\rPerp,z,\mu,\PhiH) \,v(\rPerp,z,\mu,\PhiH^{-})\,\rPerp\, d\tV^{(\tetradPhi)} \nonumber \\
  &\hspace{48pt}
  - \int_{\tK^{(\tetradPhi)}}\widehat{H^{(\tetradPhi)}\fDG}(\rPerp,z,\mu,\PhiL) \,v(\rPerp,z,\mu,\PhiL^{+})\,\rPerp\, d\tV^{(\tetradPhi)} 
  = 0,  \label{eq:ConservativeBoltzmannAxialSymmetryDG}
\end{align}
\textit{for all $v \in \bbV^{k}$ and all $\bK\in\cT$.}  
In \eqref{eq:ConservativeBoltzmannAxialSymmetryDG}, we have defined phase-space volume element
\begin{equation}
  dV=\rPerp\,d\rPerp\,dz\,d\mu\,d\tetradPhi, 
\end{equation}
``area" elements
\begin{equation}
  d\tV^{(\rPerp)}=dz\,d\mu\,d\tetradPhi, 
  \quad
  d\tV^{(z)}=d\rPerp\,d\mu\,d\tetradPhi,
  \quad
  d\tV^{(\tetradPhi)}=d\rPerp\,dz\,d\mu, 
\end{equation}
and subelements
\begin{equation}
  \tK^{(\rPerp)}=K^{(z)}\otimes K^{(\mu)}\otimes K^{(\tetradPhi)},\,
  \tK^{(z)}=K^{(\rPerp)}\otimes K^{(\mu)}\otimes K^{(\tetradPhi)},\,
  \tK^{(\tetradPhi)}=K^{(\rPerp)}\otimes K^{(z)}\otimes K^{(\mu)}.  
\end{equation}
In particular, in axial symmetry, the update for the cell-averaged distribution function in \eqref{eq:averageUpdate} becomes
\begin{align}
  \bar{f}_{\bK}^{n+1}
  &=\bar{f}_{\bK}^{n}
  -\f{\Delta t}{V_{\bK}}
  \Big\{\,
    \RH\int_{\tK^{(\rPerp)}}\widehat{H^{(\rPerp)}\fDG^{n}}(\RH,z,\mu,\tetradPhi)\,d\tV^{(\rPerp)} \nonumber \\
    & \hspace{96pt}
    -\RL\int_{\tK^{(\rPerp)}}\widehat{H^{(\rPerp)}\fDG^{n}}(\RL,z,\mu,\tetradPhi)\,d\tV^{(\rPerp)} \nonumber \\
    & \hspace{72pt}
    +\int_{\tK^{(z)}}\widehat{H^{(z)}\fDG^{n}}(\rPerp,\zH,\mu,\tetradPhi)\,\rPerp\,d\tV^{(z)} \nonumber \\
    & \hspace{96pt}
    -\int_{\tK^{(z)}}\widehat{H^{(z)}\fDG^{n}}(\rPerp,\zL,\mu,\tetradPhi)\,\rPerp\,d\tV^{(z)} \nonumber \\
    & \hspace{72pt}
    +\int_{\tK^{(\tetradPhi)}}\widehat{H^{(\tetradPhi)}\fDG^{n}}(\rPerp,z,\mu,\PhiH)\,\rPerp\,d\tV^{(\tetradPhi)} \nonumber \\
    & \hspace{96pt}
    -\int_{\tK^{(\tetradPhi)}}\widehat{H^{(\tetradPhi)}\fDG^{n}}(\rPerp,z,\mu,\PhiL)\,\rPerp\,d\tV^{(\tetradPhi)}
  \,\Big\},
  \label{eq:averageUpdateAxialSymmetry}
\end{align}
where $V_{\bK}=\int_{\bK}\rPerp\,d\rPerp\,dz\,d\mu\,d\tetradPhi$.  

To satisfy the first two conditions in Theorem~\ref{thm:upperLowerBound}, we define the quadratures
\begin{equation}
  \hat{\bQ}^{(\rPerp)}
  =\tQ^{(\rPerp)} \circ \hat{Q}^{(\rPerp)}, \quad
  \hat{\bQ}^{(z)}
  =\tQ^{(z)} \circ \hat{Q}^{(z)}, \quand
  \hat{\bQ}^{(\tetradPhi)}
  =\tQ^{(\tetradPhi)} \circ \hat{Q}^{(\tetradPhi)},
  \label{eq:quadratureAxialSymmetry}
\end{equation}
where $\tQ^{(\rPerp)}=Q^{(z)}\circ Q^{(\mu)}\circ Q^{(\tetradPhi)}$, $\tQ^{(z)}=Q^{(\rPerp)}\circ Q^{(\mu)}\circ Q^{(\tetradPhi)}$, and $\tQ^{(\tetradPhi)}=Q^{(\rPerp)}\circ Q^{(z)}\circ Q^{(\mu)}$.  
Analogous to the previous sections, $Q^{(\rPerp)}$, $Q^{(z)}$, $Q^{(\mu)}$, and $Q^{(\tetradPhi)}$ are $L^{(\rPerp)}$-, $L^{(z)}$-, $L^{(\mu)}$-, and $L^{(\tetradPhi)}$-point Gauss-Legendre quadratures on $K^{(\rPerp)}$, $K^{(z)}$, $K^{(\mu)}$, and $K^{(\tetradPhi)}$, respectively.  
Similarly, $\hat{Q}^{(\rPerp)}$, $\hat{Q}^{(z)}$, and $\hat{Q}^{(\tetradPhi)}$ denote $N^{(\rPerp)}$-, $N^{(z)}$-, and $N^{(\tetradPhi)}$-point Gauss-Lobatto quadratures.  
The quadrature points associated with \eqref{eq:quadratureAxialSymmetry} are
\begin{equation}
  \hat{\bS}^{(\rPerp)}
  =\tS^{(\rPerp)}\otimes\hat{S}^{(\rPerp)}, \quad
  \hat{\bS}^{(z)}
  =\tS^{(z)}\otimes\hat{S}^{(z)}, \quand
  \hat{\bS}^{(\tetradPhi)}
  =\tS^{(\tetradPhi)}\otimes\hat{S}^{(\tetradPhi)},
\end{equation}
where the Gauss-Lobatto quadrature points are
\begin{align}
  \hat{S}^{(\rPerp)}
  &=\{\hat{\rPerp}_{\alpha} : \alpha = 1,\ldots,N^{(\rPerp)}\}, \\
  \hat{S}^{(z)}
  &=\{\hat{z}_{\alpha} : \alpha = 1,\ldots,N^{(z)}\}, \\
  \hat{S}^{(\tetradPhi)}
  &=\{\hat{\tetradPhi}_{\alpha} : \alpha = 1,\ldots,N^{(\tetradPhi)}\}, 
\end{align}
with weights $\{\hat{w}_{\alpha}\}_{\alpha=1}^{N^{(\rPerp)}}$, $\{\hat{w}_{\alpha}\}_{\alpha=1}^{N^{(z)}}$, and $\{\hat{w}_{\alpha}\}_{\alpha=1}^{N^{(\tetradPhi)}}$, normalized so that $\sum_{\alpha}\hat{w}_{\alpha}=1$.  
Moreover,
\begin{equation}
  \tS^{(\rPerp)} = S^{(z)} \otimes S^{(\mu)} \otimes S^{(\tetradPhi)},\,
  \tS^{(z)} = S^{(\rPerp)} \otimes S^{(\mu)} \otimes S^{(\tetradPhi)},\,
  \tS^{(\tetradPhi)} = S^{(\rPerp)} \otimes S^{(z)} \otimes S^{(\mu)},
\end{equation}
where the Gaussian quadrature points are
\begin{align*}
  S^{(\rPerp)}&=\{\rPerp_{\alpha} : \alpha=1,\ldots,L^{(\rPerp)}\}, & S^{(z)}&=\{z_{\alpha} : \alpha=1,\ldots,L^{(z)}\}, \\
  S^{(\mu)}&=\{\mu_{\alpha} : \alpha=1,\ldots,L^{(\mu)}\}, & S^{(\tetradPhi)}&=\{\tetradPhi_{\alpha} : \alpha=1,\ldots,L^{\tetradPhi}\},
\end{align*}
with associated weights $\{w_{\alpha}\}_{\alpha=1}^{N^{(\rPerp)}}$, $\{w_{\alpha}\}_{\alpha=1}^{N^{(z)}}$, $\{w_{\alpha}\}_{\alpha=1}^{N^{(\mu)}}$, and $\{w_{\alpha}\}_{\alpha=1}^{N^{(\tetradPhi)}}$; all normalized so that $\sum_{\alpha}w_{\alpha}=1$.  

For the axially symmetric problem, the divergence-free condition in \eqref{eq:divergenceFreePhaseSpaceFlowCurvilinearAbstractDiscrete} becomes
\begin{align}
  &\f{1}{V_{\bK}}
  \Big\{\,
    \RH\int_{\tK^{(\rPerp)}}H^{(\rPerp)}(\RH,z,\mu,\tetradPhi)\,d\tV^{(\rPerp)}
    -\RL\int_{\tK^{(\rPerp)}}H^{(\rPerp)}(\RL,z,\mu,\tetradPhi)\,d\tV^{(\rPerp)} \nonumber \\
    &
    +\int_{\tK^{(z)}}H^{(z)}(\rPerp,\zH,\mu,\tetradPhi)\,\rPerp\,d\tV^{(z)}
    -\int_{\tK^{(z)}}H^{(z)}(\rPerp,\zL,\mu,\tetradPhi)\,\rPerp\,d\tV^{(z)} \nonumber \\
    &
    +\int_{\tK^{(\tetradPhi)}}H^{(\tetradPhi)}(\rPerp,z,\mu,\PhiH)\,\rPerp\,d\tV^{(\tetradPhi)}
    -\int_{\tK^{(\tetradPhi)}}H^{(\tetradPhi)}(\rPerp,z,\mu,\PhiL)\,\rPerp\,d\tV^{(\tetradPhi)}
  \,\Big\} \nonumber \\
  &=
  \f{\Delta\rPerp\,\Delta z}{V_{\bK}}\int_{K^{(\mu)}}\sqrt{1-\mu^{2}}\,d\mu\,
  \Big\{\,
    \int_{K^{(\tetradPhi)}}\cos\tetradPhi\,d\tetradPhi
    -\big(\sin\PhiH-\sin\PhiL\big)
  \,\Big\}=0.
  \label{eq:divergenceFreeAxialSymmetry}
\end{align}
On the right-hand side of \eqref{eq:divergenceFreeAxialSymmetry}, the integral over the cosine emanates from the flux in the $\rPerp$ dimension, and is not exact if the Gauss-Legendre quadrature is used, and the terms inside the curly brackets cancel only to the accuracy of the quadrature rule.  
However, in the DG scheme we evaluate the integrals over $\tilde{\bK}^{(\rPerp)}$ in \eqref{eq:ConservativeBoltzmannAxialSymmetryDG}, containing the cosine, by performing an integration by parts, which leads to exact cancellation.  
(The integral of $\sqrt{1-\mu^{2}}$ over $K^{(\mu)}$ is also not exact with the Gauss-Legendre quadrature.  
However, this term appears in the exact same way for the $\rPerp$ and $\tetradPhi$ dimension fluxes, and cancel when the integration by parts discussed above is used.)
Thus, the discretization satisfies the divergence-free condition \eqref{eq:divergenceFreeAxialSymmetry}, provided only $L^{(\rPerp)}\ge1$.  

To ensure the numerical solutions to \eqref{eq:ConservativeBoltzmannEquationAxialSymmetryFlat} satisfy the maximum principle, we need to prove the conditions in Theorem~\ref{thm:upperLowerBound}, which we state in the following Corollary
\begin{cor}
  Let the update for the cell average be given by \eqref{eq:averageUpdateAxialSymmetry}.  
  Consider the quadratures in \eqref{eq:quadratureAxialSymmetry} with $N^{(\rPerp)}\ge(k+4)/2$, $N^{(z)},N^{(\tetradPhi)}\ge(k+3)/2$, and $L^{(\rPerp)}\ge(k+2)/2$, $L^{(z)},L^{(\mu)},L^{(\tetradPhi)}\ge(k+1)/2$.  
  Let the polynomial $\fDG^{n}\in\bbV^{k}$ satisfy $0\le\fDG^{n}\le1$ in the quadrature set
  \begin{equation}
    S=\hat{\bS}^{(\rPerp)}\,\cup\,\hat{\bS}^{(z)}\,\cup\,\hat{\bS}^{(\tetradPhi)}.
  \end{equation}
  Let the time step $\dt$ satisfy the CFL condition 
  \begin{equation}
    \frac{\dt}{\Delta\rPerp}
    \leq\frac{\hat{w}_{N^{(\rPerp)}}\,s_{1}}{\sqrt{1-\mu_{\beta}^{2}}\,|\cos\tetradPhi_{\gamma}|},\,
    \frac{\dt}{\Delta z}
    \leq\frac{\hat{w}_{N^{(z)}}\,s_{2}}{|\mu_{\beta}|},\,
    \frac{\dt}{\rPerp_{\alpha}\,\Delta\tetradPhi}
    \leq\frac{\hat{w}_{N^{(\tetradPhi)}}\,s_{3}}{\sqrt{1-\mu_{\gamma}^{2}}\sin\PhiL}.
    \label{eq:CFLaxialSymmetry}
  \end{equation}
  It follows that $0\le\bar{f}_{\bK}^{n+1}\le1$.  
\end{cor}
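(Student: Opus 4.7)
The plan is to verify, one by one, the four hypotheses of Theorem~\ref{thm:upperLowerBound}, following the template already established in the two preceding corollaries for the spherically symmetric cases. The structure is essentially the same; what changes is (i) the bookkeeping of which factors in the integrand are polynomial in each logical variable, and (ii) the careful handling of the $\tetradPhi$ direction, where the flux $H^{(\tetradPhi)}$ has a definite sign.

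First, I would check assumption~1 by counting polynomial degrees. Since $\tau=\rPerp$ and $\fDG^{n}\in\bbV^{k}$, the integrand $\fDG^{n}\,\tau$ is a polynomial of degree at most $k+1$ in $\rPerp$ and at most $k$ in $z$, $\mu$, $\tetradPhi$. An $N$-point Gauss--Lobatto rule is exact on polynomials of degree $\le 2N-3$, and an $L$-point Gauss--Legendre rule is exact on polynomials of degree $\le 2L-1$; the stated lower bounds $N^{(\rPerp)}\ge(k+4)/2$, $N^{(z)},N^{(\tetradPhi)}\ge(k+3)/2$, $L^{(\rPerp)}\ge(k+2)/2$, and $L^{(z)},L^{(\mu)},L^{(\tetradPhi)}\ge(k+1)/2$ are precisely what is needed for $\hat{\bQ}^{(\rPerp)}$, $\hat{\bQ}^{(z)}$, $\hat{\bQ}^{(\tetradPhi)}$ to evaluate the current cell average exactly. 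Evaluation of $\bar{f}_{\bK}^{n}$ through the weighted-sum formula then gives a non-negative number by the hypothesis $0\le\fDG^{n}\le1$ on $S$.

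Second, I would invoke the discrete divergence-free property \eqref{eq:divergenceFreeAxialSymmetry}, which has already been argued in the paragraph immediately preceding the corollary: the $\rPerp$-surface integral brings in the non-polynomial factor $\cos\tetradPhi$, but performing integration by parts on the volume term containing $H^{(\rPerp)}$ converts this contribution into a form that cancels exactly against the $\tetradPhi$-surface terms. The same mechanism disposes of the non-polynomial $\sqrt{1-\mu^{2}}$ factor shared by the $\rPerp$ and $\tetradPhi$ fluxes. Thus assumption~2 of Theorem~\ref{thm:upperLowerBound} holds with the quadratures in \eqref{eq:quadratureAxialSymmetry}, provided only $L^{(\rPerp)}\ge1$, which is implied by the stated bound.

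Third, I would derive the CFL condition dimension by dimension using Lemma~\ref{prop:PhiProperties}. Note that there is no constraint from the $\mu$ direction since $H^{(\mu)}\equiv0$. In the $\rPerp$ direction, $|H^{(\rPerp)}(\cdot,z,\mu_{\beta},\tetradPhi_{\gamma})|\le\sqrt{1-\mu_{\beta}^{2}}\,|\cos\tetradPhi_{\gamma}|$ controls both $\partial\Phi_{1}^{(\rPerp)}/\partial b$ and $\partial\Phi_{N^{(\rPerp)}}^{(\rPerp)}/\partial a$, giving the first bound in \eqref{eq:CFLaxialSymmetry}. In the $z$ direction, $|H^{(z)}|=|\mu_{\beta}|$ gives the middle bound. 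The $\tetradPhi$ direction is where I expect the most care is required: because $H^{(\tetradPhi)}=-\sqrt{1-\mu^{2}}\,\sin\tetradPhi/\rPerp\le0$ on all of $K^{(\tetradPhi)}\subset[0,\pi]$, one has $(0\vee H^{(\tetradPhi)}(\cdot,\PhiH))=0$ automatically, so only the lower-face contribution at $\PhiL$ is active. Evaluating $|0\wedge H^{(\tetradPhi)}(\rPerp_{\alpha},z,\mu_{\gamma},\PhiL)|=\sqrt{1-\mu_{\gamma}^{2}}\,\sin\PhiL/\rPerp_{\alpha}$ yields the third inequality. Assumption~3 holds by hypothesis, so Theorem~\ref{thm:upperLowerBound} delivers $0\le\bar{f}_{\bK}^{n+1}\le1$.

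The main obstacle is bookkeeping rather than any new idea: one must be confident that the integration-by-parts trick that saves the divergence-free condition in the presence of $\cos\tetradPhi$ and $\sqrt{1-\mu^{2}}$ is genuinely compatible with the choice of quadrature sets, and one must correctly identify that the definite sign of $H^{(\tetradPhi)}$ collapses one of the two candidate CFL constraints in the $\tetradPhi$ dimension.
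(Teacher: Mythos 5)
Your proposal is correct and follows essentially the same route as the paper's own proof: verify exactness and non-negativity of the quadrature evaluation of $\bar{f}_{\bK}^{n}$, appeal to the integration-by-parts argument for the discrete divergence-free condition \eqref{eq:divergenceFreeAxialSymmetry}, and derive the CFL bounds dimension by dimension via Lemma~\ref{prop:PhiProperties}, with the definite sign of $H^{(\tetradPhi)}$ making $\partial\Phi_{N^{(\tetradPhi)}}^{(\tetradPhi)}/\partial a=1$ so that only the $\PhiL$ face constrains $\dt$. The explicit degree counting you supply for assumption~1 is consistent with the quadrature orders stated in the corollary and merely makes explicit what the paper leaves implicit.
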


\begin{proof}
  With the quadratures in \eqref{eq:quadratureAxialSymmetry}, evaluation of the current cell-average gives
  \begin{align}
    \f{V_{\bK}\,\bar{f}_{\bK}^{n}}{\Delta\rPerp\,\Delta z\,\Delta\mu\,\Delta\tetradPhi}
    &
    =s_{1}\sum_{\alpha,\beta,\gamma,\delta\in\hat{\bS}^{(\rPerp)}}
    \hat{w}_{\alpha}\,w_{\beta}\,w_{\gamma}\,w_{\delta}\,\fDG^{n}(\hat{\rPerp}_{\alpha},z_{\beta},\mu_{\gamma},\tetradPhi_{\delta})\,\hat{\rPerp}_{\alpha} \nonumber \\
    &\hspace{12pt}
    +s_{2}\sum_{\alpha,\beta,\gamma,\delta\in\hat{\bS}^{(z)}}
    w_{\alpha}\,\hat{w}_{\beta}\,w_{\gamma}\,w_{\delta}\,\fDG^{n}(\rPerp_{\alpha},\hat{z}_{\beta},\mu_{\gamma},\tetradPhi_{\delta})\,\rPerp_{\alpha} \nonumber \\
    &\hspace{12pt}
    +s_{3}\sum_{\alpha,\beta,\gamma,\delta\in\hat{\bS}^{(\tetradPhi)}}
    w_{\alpha}\,w_{\beta}\,w_{\gamma}\,\hat{w}_{\delta}\,\fDG^{n}(\rPerp_{\alpha},z_{\beta},\mu_{\gamma},\hat{\tetradPhi}_{\delta})\,\rPerp_{\alpha}, 
  \end{align}
  which is exact, and non-negative since $0\le\fDG^{n}\le1$ in $S$.  
  The divergence-free condition in \eqref{eq:divergenceFreeAxialSymmetry} holds exactly, since integration by parts is used for the integral with the cosine.  
  To compute the bound-preserving CFL conditions, we consider the $\rPerp$, $z$, and $\tetradPhi$ dimensions independently.  
  In the $\rPerp$ dimension we have $|0 \wedge H^{(\rPerp)}(\RL,z_{\alpha},\mu_{\beta},\tetradPhi_{\gamma})|=\sqrt{1-(\mu_{\beta}^{2})}\,|0\wedge\cos\tetradPhi_{\gamma}|$ and $(0 \vee H^{(\rPerp)}(\RH,z_{\alpha},\mu_{\beta},\tetradPhi_{\gamma}))=\sqrt{1-(\mu_{\beta}^{2})}\,(0\vee\cos\tetradPhi_{\gamma})$ so that
  \begin{align}
    \pderiv{\Phi_{1}}{b}
    &=1-\f{\dt}{\hat{w}_{1}\,s_{1}\,\Delta\rPerp}\,\sqrt{1-\mu_{\beta}^{2}}\,|0\wedge\cos\tetradPhi_{\gamma}|, \\
    \pderiv{\Phi_{N^{(\rPerp)}}}{a}
    &=1-\f{\dt}{\hat{w}_{N^{(\rPerp)}}\,s_{1}\,\Delta\rPerp}\,\sqrt{1-\mu_{\beta}^{2}}\,(0\vee\cos\tetradPhi_{\gamma}),
  \end{align}
  which are non-negative provided the first condition in \eqref{eq:CFLaxialSymmetry} holds.  
  In the $z$ dimension we have $|0 \wedge H^{(z)}(\rPerp_{\alpha},\zL,\mu_{\beta},\tetradPhi_{\gamma})|=|0\wedge\mu_{\beta}|$ and $(0 \vee H^{(z)}(\rPerp_{\alpha},\zH,\mu_{\beta},\tetradPhi_{\gamma}))=(0\vee\mu_{\beta})$ so that
  \begin{equation}
    \pderiv{\Phi_{1}}{b}=1-\f{\dt}{\hat{w}_{1}\,s_{2}\,\Delta z}\,|0\wedge\mu_{\beta}|
    \quand
    \pderiv{\Phi_{N^{(z)}}}{a}=1-\f{\dt}{\hat{w}_{N^{(z)}}\,s_{2}\,\Delta z}\,(0\vee\mu_{\beta}),
  \end{equation}
  which are non-negative provided the second condition in \eqref{eq:CFLaxialSymmetry} holds.  
  Finally, in the $\tetradPhi$ dimension we have $|0 \wedge H^{(\tetradPhi)}(\rPerp_{\alpha},z_{\beta},\mu_{\gamma},\PhiL)|=\sqrt{1-\mu_{\gamma}^{2}}\,\sin\PhiL/\rPerp_{\alpha}$ and $(0 \vee H^{(\tetradPhi)}(\rPerp_{\alpha},z_{\beta},\mu_{\gamma},\PhiH))=0$, which give
  \begin{equation}
    \pderiv{\Phi_{1}^{(\tetradPhi)}}{b}
    =1-\f{\dt}{\hat{w}_{N^{(\tetradPhi)}}\,s_{3}\,\Delta\tetradPhi}\,\f{\sqrt{1-\mu_{\gamma}^{2}}}{\rPerp_{\alpha}}\,\sin\PhiL
  \end{equation}
  and $\partial\Phi_{N^{(\tetradPhi)}}^{(\tetradPhi)}/\partial a=1$, which are non-negative provided the third condition in \eqref{eq:CFLaxialSymmetry} holds.  
  It follows that $0\le\bar{f}_{\bK}^{n+1}\le1$.  
\end{proof}

\section{Bound-Enforcing Limiter for the DG Scheme}
\label{sec:limiter}

Bound-preserving DG methods for the conservative phase space advection problem were developed in Section \ref{sec:maximumPrinciple}.  
The numerical method is designed to preserve the physical bounds of the cell averaged distribution function (i.e., $0\le\bar{f}_{\bK}\le1$), provided sufficiently accurate quadratures $\{\hat{\bQ}^{i}\}_{i=1}^{d_{z}}$ are specified, specific CFL conditions are satisfied, \emph{and} that the polynomial approximating the distribution function inside a phase space element $\bK$ at time $t^{n}$ is bounded in a set of quadrature points (denoted $S$; cf. assumption~$3$ in Theorem~\ref{thm:upperLowerBound}).  
In the DG method, we use the limiter proposed by Zhang \& Shu in \citep{ZS2010a} to enforce bounds on the distribution function.  
Then the DG scheme ensures that the cell averaged distribution, obtained by solving the \emph{conservative} phase space advection problem, satisfies the following maximum principle: 
if for some initial time $t=t^{n}$ we have $0\le\fDG^{n}\le1$ in a finite set of quadrature points $S\in\bK$, then $0\le\bar{f}_{\bK}^{n+1}\le1$ for all $\bK\in\cT$.  

In addition to the quadratures and CFL conditions (cf. Section \ref{sec:maximumPrinciple}), we must ensure that the polynomial approximating the distribution function inside a phase space element satisfies $\fDG^{n}\in[0,1]$ in $S$.  
To this end, we use the limiter suggested by \citep{ZS2010a} \citep[see also][for a review]{ZS2011} and replace the polynomial $\fDG^{n}(\vect{z})$ with the ``limited'' polynomial
\begin{equation}
  \tilde{f}_{\mbox{\tiny DG}}^{n}(\vect{z})=\vartheta\,\fDG^{n}(\vect{z})+(\,1-\vartheta\,)\,\bar{f}_{\bK}^{n}, 
  \label{eq:limitedPolynomial}
\end{equation}
where the limiter $\vartheta$ is given by
\begin{equation}
  \vartheta=\min\Big\{\Big|\f{M-\bar{f}_{\bK}^{n}}{M_{S}-\bar{f}_{\bK}^{n}}\Big|,\Big|\f{m-\bar{f}_{\bK}^{n}}{m_{S}-\bar{f}_{\bK}^{n}}\Big|,1\Big\}, 
  \label{eq:limiter}
\end{equation}
with $m=0$ and $M=1$, and
\begin{equation}
  M_{S}=\max_{\vect{z} \in S}\fDG^{n}(\vect{z}), \qquad m_{S}=\min_{\vect{z} \in S}\fDG^{n}(\vect{z}), 
\end{equation}
and $S$ represents the finite set of quadrature points in $\bK$; cf. \eqref{eq:quadratureSetSphericalSymmetry}, \eqref{eq:quadratureSetSphericalSymmetryGR}, and \eqref{eq:quadratureAxialSymmetry}.  

It has been shown \citep{ZS2010a,ZS2011} that the ``linear scaling limiter" in \eqref{eq:limitedPolynomial}-\eqref{eq:limiter} maintains uniform high order of accuracy.  
Also, note that the limiting procedure is conservative since it preserves the cell averaged distribution function; i.e.,
\begin{equation}
  \f{1}{V_{\bK}}\int_{\bK}\tilde{f}_{\mbox{\tiny DG}}^{n}\,dV=\bar{f}_{\bK}^{n}.  
\end{equation}
\section{Numerical Examples}
\label{sec:numericalExamples}

In this section we present numerical results that are obtained with the bound-preserving DG method for each of the cases discussed in detail in Sections \ref{sec:maximumPrincipleSphericalSymmetry}-\ref{sec:maximumPrincipleAxialSymmetry}.  
In addition to the bound-preserving properties, we also demonstrate high order of accuracy for smooth problems, as well as other aspects of solving the phase space advection problem in curvilinear coordinates with the DG method (e.g., errors near the origin in spherical and axial symmetry, and ray effects in axial symmetry). 

For first-, second-, and third-order spatial discretization we employ tensor product polynomial bases (cf. Section \ref{sec:maximumPrincipleGeneral}), constructed by forming the tensor product of one-dimensional piecewise polynomials of degree up to $k=0$, $1$, and $2$, respectvely, which we refer to as DG(0), DG(1), and DG(2), respectively.  
We use Legendre polynomials in each dimensions.  

For the explicit time stepping we use the forward Euler method (FE), or the strong stability preserving Runge-Kutta (SSP-RK) methods \citep[e.g.,][]{gottlieb_etal_2001} for second (RK2) or third (RK3) order temporal accuracy.  
Thus, schemes with overall first, second and third order formal accuracy will be referred to as DG(0)+FE, DG(1)+RK2, and DG(2)+RK3, respectively.  

\subsection{1D~x+1D~p}
\label{sec:numericalExamplesSphericalSymmetry}

The numerical tests in this section involve the spherically symmetric phase space in flat spacetime, using spherical polar position and momentum coordinates.  
That is, we solve Equation (\ref{eq:ConservativeBoltzmannEquationSphericalSymmetryFlat}) for $\fDG^{n}(r,\mu)$ at discrete time levels $t^{n}$.  

\subsubsection{Test with Smooth Analytic Solution}

First we consider a smooth test problem involving both the position and angle coordinates.  
An analytical solution to Equation (\ref{eq:ConservativeBoltzmannEquationSphericalSymmetryFlat}) is given by
\begin{equation}
  f(r,\mu,t)=\exp\big(\,r\,\mu-t\,\big).  
  \label{eq:freeStreamingExponential1D1D}
\end{equation}
This test is of purely academic interest with little practical value, but it is very useful for evaluating the accuracy of the DG method.  
It is similar to the one considered in \citep{machorro_2007} for a steady-state problem with a non-zero right-hand side.  

The computational domain $D=\{(r,\mu)\in\bbR^{2} : r\in[1,3],\,\mu\in[-1,1]\}$ is divided into $N_{r}\times N_{\mu}$ elements, using $N_{r}$ radial zones and $N_{\mu}$ angular zones.
We use the  analytical solution to specify incoming radiation on the boundary $\partial D$ and simulate the evolution from $t=0$ to $t=1$, using the bound-preserving CFL conditions given in Equation~\eqref{eq:CFLsphericalSymmetry}, with $s_{1}=s_{2}=1/2$ to set the time step.  
To evaluate the accuracy and the convergence rates of the different DG schemes, we compute the $L^{1}$-error norm
\begin{equation}
  E^{1}=\f{1}{V_{D}}\sum_{\bK\in\cT}\int_{\bK}|\fDG^{n}(r,\mu)-f(r,\mu,t^{n})|\,dV,
  \label{eq:errorL1}
\end{equation}
at $t^{n}=1$ for various grid resolutions; each using $N_{r}=N_{\mu}$.  
(The integral in \eqref{eq:errorL1} is computed with $3$-point Gaussian quadratures in the $r$ and $\mu$ dimensions, and $V_{D}=52/3$.)  
$L^{1}$ and $L^{\infty}$ errors, and associated convergence rates, for DG(0)+FE, DG(1)+RK2, and DG(2)+RK3 schemes are listed in Table \ref{tab:convergenceFreeStreamingExponential1D1D}.  
\begin{table}
  \begin{center}
  \caption{$L^{1}$, $L^{\infty}$ error norms and convergence rates for the smooth 1D~x+1D~p test.  
  \label{tab:convergenceFreeStreamingExponential1D1D}}
  \begin{tabular}{ccccc}
    \midrule
    & \multicolumn{4}{c}{DG(0)+FE} \\
    \cmidrule(r){2-5}  
    $N_{r}$ & $L^{1}$ Error & Rate & $L^{\infty}$ Error & Rate \\
    \midrule
    8        & $1.68\times10^{-1}$ &       $-$ & $1.71$                      &   $-$ \\  
    16      & $8.37\times10^{-2}$ & $1.01$ & $1.12$                      & 0.60 \\  
    32      & $4.18\times10^{-2}$ & $1.00$ & $6.57\times10^{-1}$ & 0.77 \\ 
    64      & $2.09\times10^{-2}$ & $1.00$ & $3.57\times10^{-1}$ & 0.88 \\  
    128    & $1.05\times10^{-2}$ & $1.00$ & $1.86\times10^{-1}$ & 0.94 \\ 
    256    & $5.24\times10^{-3}$ & $1.00$ & $9.52\times10^{-2}$ & 0.97 \\  
    512    & $2.62\times10^{-3}$ & $1.00$ & $4.81\times10^{-2}$ & 0.98 \\  
    1024  & $1.31\times10^{-3}$ & $1.00$ & $2.42\times10^{-2}$ & 0.99 \\  
    2048  & $6.55\times10^{-4}$ & $1.00$ & $1.21\times10^{-2}$ & 1.00 \\  
    \midrule
    & \multicolumn{4}{c}{DG(1)+RK2} \\
    \cmidrule(r){2-5}
    $N_{r}$ & $L^{1}$ Error & Rate & $L^{\infty}$ Error & Rate \\
    \midrule
    8       & $1.54\times10^{-2}$ &       $-$ & $1.30\times10^{-1}$ &   $-$ \\
    16     & $3.98\times10^{-3}$ & $1.95$ & $4.47\times10^{-2}$ & 1.54 \\
    32     & $1.02\times10^{-3}$ & $1.96$ & $1.53\times10^{-2}$ & 1.54 \\ 
    64     & $2.62\times10^{-4}$ & $1.97$ & $4.73\times10^{-3}$ & 1.70 \\
    128   & $6.68\times10^{-5}$ & $1.97$ & $1.35\times10^{-3}$ & 1.81 \\ 
    256   & $1.69\times10^{-5}$ & $1.98$ & $3.66\times10^{-4}$ & 1.88 \\
    512   & $4.26\times10^{-6}$ & $1.99$ & $9.69\times10^{-5}$ & 1.92 \\
    1024 & $1.07\times10^{-6}$ & $1.99$ & $2.52\times10^{-5}$ & 1.94 \\
    2048 & $2.68\times10^{-7}$ & $2.00$ & $6.47\times10^{-6}$ & 1.96 \\
    \midrule
    & \multicolumn{4}{c}{DG(2)+RK3} \\
    \cmidrule(r){2-5}
    $N_{r}$ & $L^{1}$ Error & Rate & $L^{\infty}$ Error & Rate \\
    \midrule
    8       & $4.49\times10^{-4}$   &      $-$  & $4.02\times10^{-3}$ &   $-$ \\
    16     & $6.84\times10^{-5}$   & $2.72$ & $8.63\times10^{-4}$ & 2.22 \\
    32     & $9.83\times10^{-6}$   & $2.80$ & $1.41\times10^{-4}$ & 2.61 \\
    64     & $1.34\times10^{-6}$   & $2.88$ & $2.01\times10^{-5}$ & 2.81 \\
    128   & $1.81\times10^{-7}$   & $2.88$ & $2.87\times10^{-6}$ & 2.81 \\
    256   & $2.85\times10^{-8}$   & $2.67$ & $5.88\times10^{-7}$ & 2.29 \\
    512   & $3.91\times10^{-9}$   & $2.87$ & $1.59\times10^{-7}$ & 1.88 \\
    1024 & $5.01\times10^{-10}$ & $2.96$ & $2.80\times10^{-8}$ & 2.51 \\
    2048 & $6.41\times10^{-11}$ & $2.97$ & $4.94\times10^{-9}$ & 2.50 \\
    \midrule
  \end{tabular}
  \end{center}
\end{table}
These results confirm the expected order of accuracy for the different schemes.  

We have also computed some results where the computational domain extends to $r=0$.  
For practical purposes we set $r\ge1$ in the convergence study above to avoid small time steps; cf. Equation (\ref{eq:CFLsphericalSymmetry}).  
However, for many applications the origin must the included in the computational domain, even though this may introduce significant numerical errors.  
In particular, \citep[][]{machorro_2007} discuss inaccuracies near $r=0$ in the numerical solution to the transport equation in spherical symmetry, which appear in the form of a ``flux-dip."%
\footnote{In the context of finite volume methods for hydrodynamics, see \citep{monchmeyerMuller_1989,blondinLufkin_1993} for discussions on numerical errors associated with including the origin in spherical polar coordinates.} 
To investigate inaccuracies near $r=0$, we solve \eqref{eq:ConservativeBoltzmannEquationSphericalSymmetryFlat} using the models with $N_{r}=N_{\mu}=16$ in Table~\ref{tab:convergenceFreeStreamingExponential1D1D}, but with the computational domain given by $D=\{(r,\mu)\in\bbR^{2} : r\in[0,2],\,\mu\in[-1,1]\}$.  

\begin{figure}
  \centering
  \includegraphics[scale=0.2]{./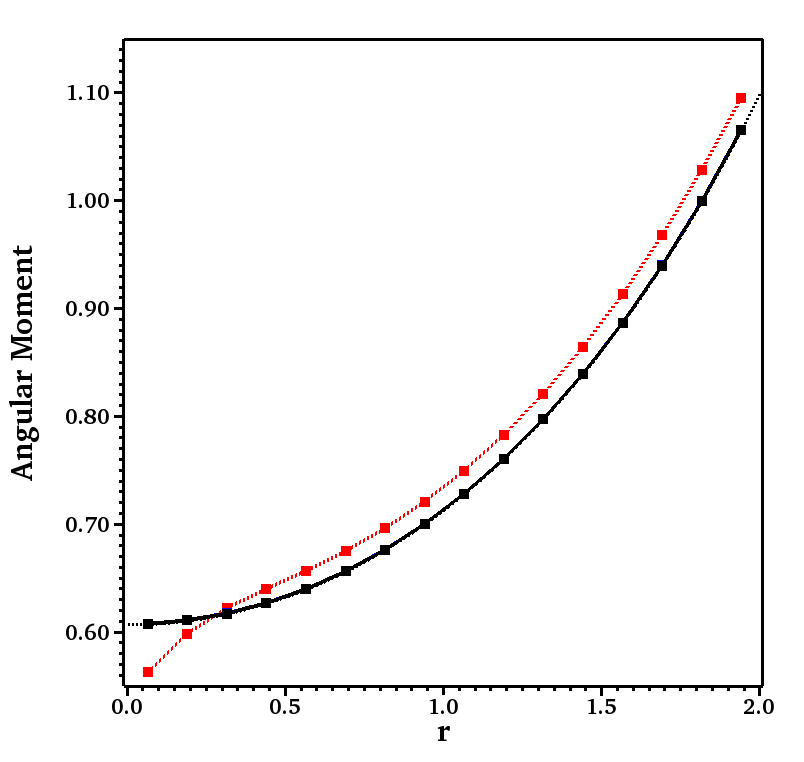}
  \caption{Plot of the zeroth angular moment of the distribution function in \eqref{eq:zerothMoment1D} versus radius at $t=0.5$ in a test where the computational domain extends down to $r=0$.  
  Results obtained with a $16\times16$ mesh are plotted for DG(0)+FE (dotted red), DG(1)+RK2 (dashed blue), DG(2)+RK3 (solid black).  
  The analytical solution is also plotted (dotted black line).}
  \label{fig:FreeStreamingExponential1D1D_AngularMoment}
\end{figure}

In Figure \ref{fig:FreeStreamingExponential1D1D_AngularMoment} we plot the ``zeroth" angular moment of the distribution function; i.e.,
\begin{equation}
  \rho(r,t^{n})=\f{1}{2}\int_{-1}^{1}\fDG^{n}(\mu,r)\,d\mu, 
  \label{eq:zerothMoment1D}
\end{equation}
versus radius at $t^{n}=0.5$.  
The results were obtained with DG(0)+FE (dotted red line), DG(1)+RK2 (dashed blue line), and DG(2)+RK3 (solid black line) using a $16\times16$ mesh.  
(The analytical solution is plotted with the dotted black line).  
The results obtained with DG(0)+FE appear to be offset by a constant factor from the analytical solution for $r\gtrsim0.4$.  
However, we observe a ``dip" in the numerical result inside $r\simeq0.4$, and the error is largest in the innermost cell.  
The results obtained with DG(1)+RK2 and DG(2)+RK3 are indistinguishable on the scale chosen for the plot, and follow the analytical solution well.  
Moreover, they do not show any sign of increased error near the origin.  
These results are consistent with those reported in \citep{machorro_2007}.  

\subsubsection{Radiating Sphere Test}
\label{subsubsec:radiating_sphere}

Next we include a test with discontinuous solutions.  
We consider a radiating sphere with radius $R_{0}=1$ centered at $r=0$ (see Figure \ref{fig:FreeStreaming1D1D_diagram}).  
(A version of this test was also considered in \citep{pons_etal_2000}; cf. their TEST~3.)
The sphere radiates steadily and isotropically at the surface --- which coincides with our inner boundary --- into a near vacuum ($f\ll1$).  
For $r>R_{0}$, once a steady state has been established in $D$, the distribution function becomes more and more ``forward-peaked" with increasing radius; i.e., its support is contained within the cone with opening angle $\tetradTheta_{\mbox{\tiny m}}$ which satisfies
\begin{equation}
  \cos\tetradTheta_{\mbox{\tiny m}}(r)=\mu_{\mbox{\tiny m}}(r)=\sqrt{1-\big(R_{0}/r\big)^{2}}.  
  \label{eq:muVersusRadius}
\end{equation}
As $r\to\infty$, the distribution function approaches a delta function in angle cosine, centered on $\mu=1$.  
We solve this problem on the computational domain $D=\{(r,\mu)\in\bbR^{2} : r\in[1,3],\,\mu\in[-1,1]\}$, and we initialize the test with an isotropic background by setting the distribution function to $\fDG=f_{0}=10^{-6}$ everywhere inside the domain.  
\begin{figure}
  \centering
  \includegraphics[scale=0.4]{./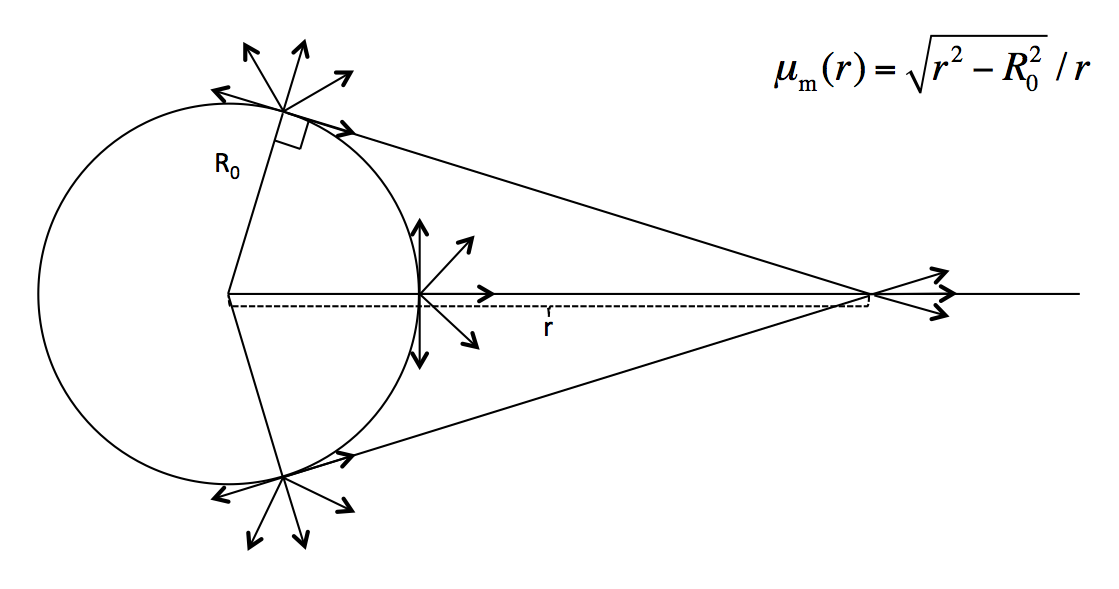}
  \caption{Geometry of the radiating sphere test.}
  \label{fig:FreeStreaming1D1D_diagram}
\end{figure}
We also keep $\fDG=f_{0}$ for the incoming radiation at the outer radial boundary, while at the inner radial boundary $r=R_{0}$, we set $\fDG=1$ for the incoming radiation. 
For $t>0$, a radiation front propagates through the domain.  
After a steady state is reached, the boundary defined by $\mu_{\mbox{\tiny m}}$ separates regions where $\fDG=1$ and $\fDG=f_{0}$ (dashed line in Figure~\ref{fig:FreeStreaming1D1D_pseudoColor}).  

\begin{figure}
  \centering
  \includegraphics[scale=0.3]{./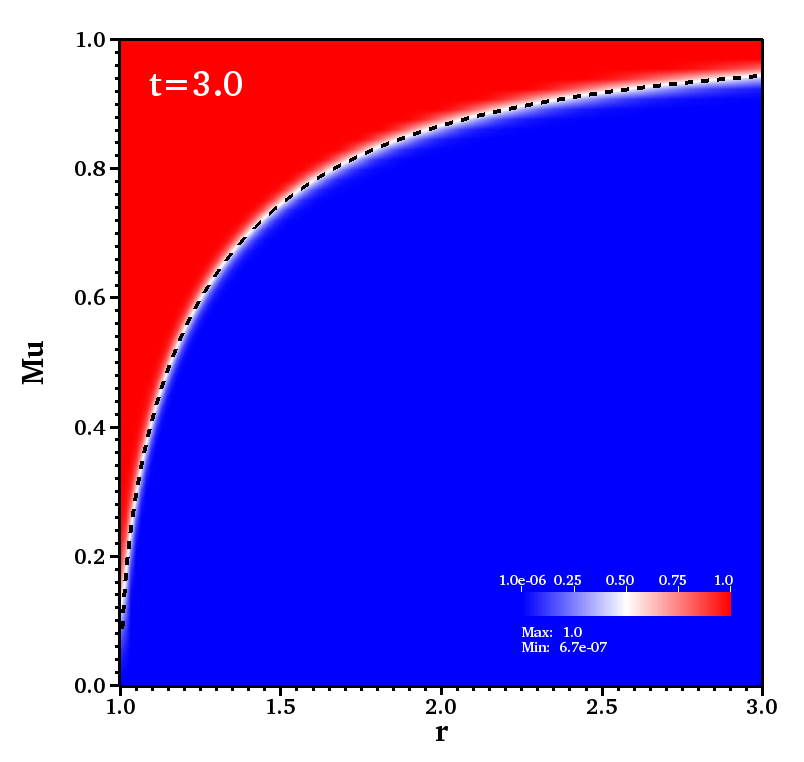}
  \caption{Color plot of the distribution function $\fDG(r,\mu)$ at $t=3.0$, obtained with the second-order scheme DG(1)+RK2 using $N_{r}\times N_{\mu}=128\times128$ cells.}
  \label{fig:FreeStreaming1D1D_pseudoColor}
\end{figure}
In Figure \ref{fig:FreeStreaming1D1D_pseudoColor} we plot the distribution function versus $r$ and $\mu$ at $t=3$.  
The numerical results were obtained with the second-order scheme DG(1)+RK2 using $128\times128$ cells.  
The DG method maintains the sharp boundary between the two regions, and $\fDG\in[0,1]$ over the entire computational domain.  
In the figure, we also plot $\mu_{\mbox{\tiny m}}$ versus $r$ (cf. Eq. (\ref{eq:muVersusRadius}); dashed line), which shows that the numerical result agrees well with the geometric considerations in Figure \ref{fig:FreeStreaming1D1D_diagram}.  

\begin{figure}
  \centering
  \begin{tabular}{cc}
    \includegraphics[scale=0.2]{./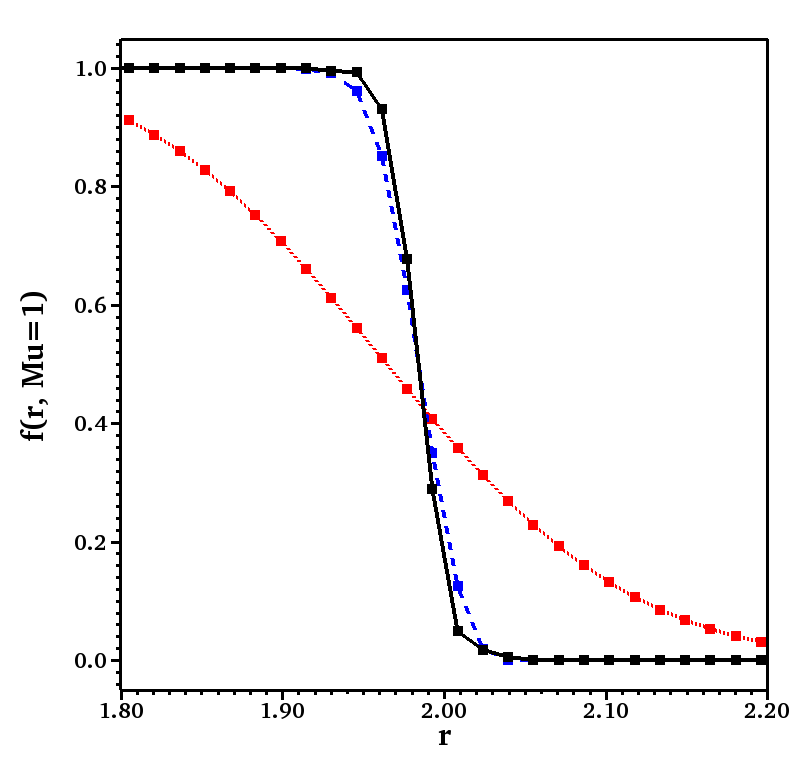} &
    \includegraphics[scale=0.2]{./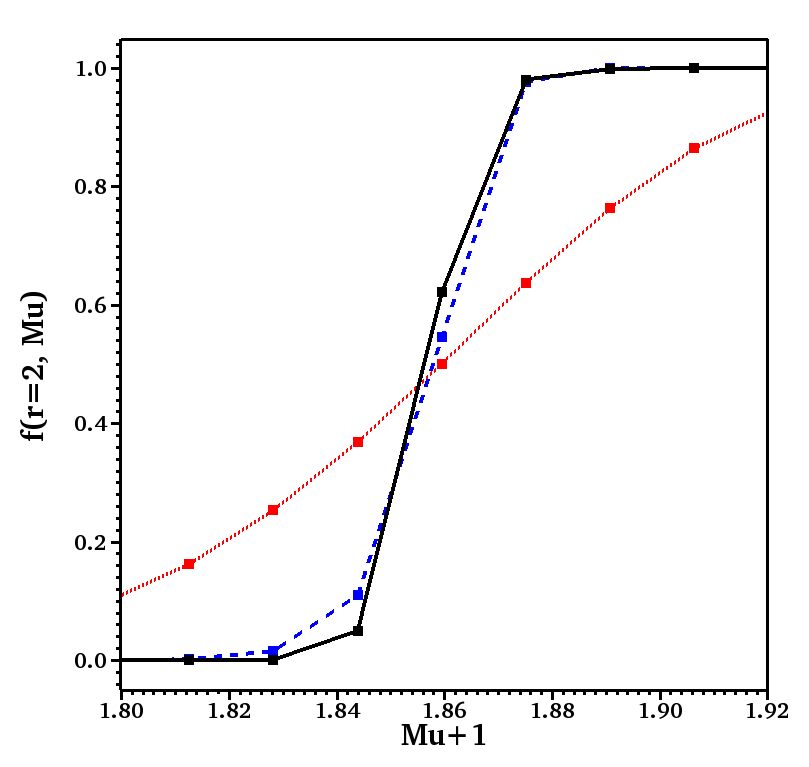}
  \end{tabular}
  \caption{Numerical results from the radiating sphere test (cf. Figure \ref{fig:FreeStreaming1D1D_diagram}) comparing the different schemes: DG(0)+FE (dotted red), DG(1)+RK2 (dashed blue), and DG(2)+RK3 (solid black).  
  In the left panel we plot the distribution function versus radius for constant $\mu=1$, at $t=1$; i.e., $\fDG(r,\mu=1,t=1)$.  
  In the right panel we plot the distribution function versus $\mu$ for constant radius $r=2$, at $t=3$; i.e., $\fDG(r=2,\mu,t=3)$.}
  \label{fig:FreeStreaming1D1D_compareOrderSchemes}
\end{figure}
We compare numerical results obtained with the first, second, and third order schemes in Figure \ref{fig:FreeStreaming1D1D_compareOrderSchemes}.  
In the left panel we plot the distribution function versus radius for $\mu=1$ at time $t=1$, when the radiation front is located at $r\approx2$.  
In the right panel we plot the distribution function versus $\mu$ for constant radius $r=2$ at time $t=3$, when a steady state configuration has been established in $D$.  
The first-order scheme is clearly very diffusive and unable to maintain the sharp edge.  
Both the second-order scheme and the third-order scheme capture the edge with only a few grid cells, with DG(2)+RK3 maintaining the sharpest edge.  

\begin{figure}
  \centering
  \begin{tabular}{cc}
    \includegraphics[scale=0.2]{./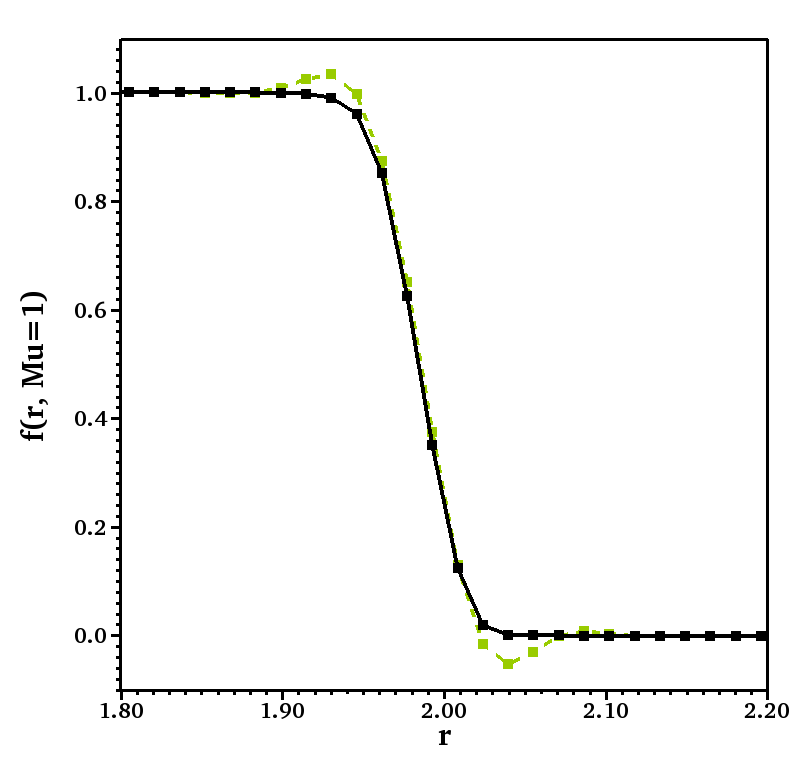} &
    \includegraphics[scale=0.2]{./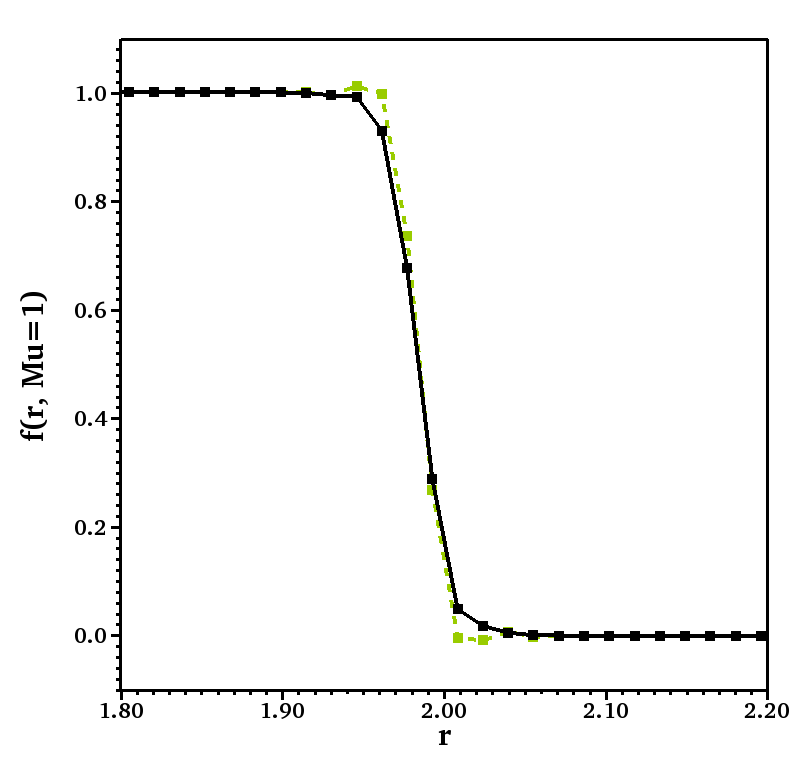}
  \end{tabular}
  \caption{Comparing numerical results obtained when running with (solid black) and without (dashed green) the bound-enforcing limiter.  
  We plot the distribution function versus radius for $\mu=1$ and $t=1$, obtained with the DG(1)+RK2 scheme (left panel) and the DG(2)+RK3 scheme (right panel).}
  \label{fig:FreeStreaming1D1D_LimiterVersusNoLimiter}
\end{figure}
We compare numerical results obtained when running with and without the bound-enforcing limiter (cf. Section \ref{sec:limiter}) in Figure \ref{fig:FreeStreaming1D1D_LimiterVersusNoLimiter}.  
(We use the CFL conditions in \eqref{eq:CFLsphericalSymmetry} for all the runs.)  
Without the limiter, the numerical results exhibit $\fDG<0$ ahead of the radiation front, and $\fDG>1$ behind the radiation front.  
These violations become less severe with the higher-order scheme (right panel).  
With the limiter on, $\fDG\in[0,1]$ for all times.  

\subsection{1D~x+2D~p}
\label{sec:numericalExamplesSphericalSymmetryGR}

In this section we present results obtained by solving the general relativistic phase space advection problem in spherical symmetry as modeled by Equation~\eqref{eq:ConservativeBoltzmannEquationSphericalSymmetryGR}.  
We adopt the Schwarzschild metric (i.e., Eq.~\eqref{eq:metricDiagonalCFC}, with $\alpha$ and $\psi$ given in \eqref{eq:schwarzschildMetric}), and compute results for various spacetime masses $M$.  
For reference, in Figure~\ref{fig:lapseFunctionAndConformalFactor}, we plot the lapse function $\alpha$ (solid lines) and the conformal factor $\psi$ (dashed lines) for $r\in[1,3]$ and $M=0.0, 0.2, 4-2\sqrt{3}$, and $2/3$.  
The Schwarzschild radius $r_{\mbox{\tiny S}}=M/2$ is well inside the inner boundary for all models.  
For $M=0.0$, we have $\alpha=\psi=1$, and Equation~\eqref{eq:ConservativeBoltzmannEquationSphericalSymmetryGR} reduces to the flat spacetime case in \eqref{eq:ConservativeBoltzmannEquationSphericalSymmetryFlat}.  
For $M>0$, we have $\p_{r}\ln\alpha=(M/r^{2})\,(1-(M/2r)^2)^{-1}$, $\p_{r}\ln\psi^{2}=-(M/r^{2})\,(1+M/2r)^{-1}$, so that
\begin{equation}
  \Psi=1-\f{M}{\psi\,r}\Big(\,1+\f{1}{\psi\,\alpha}\,\Big).  
\end{equation}

\subsubsection{Radiating Sphere Test in Schwarzschild Geometry}

The test we consider is an extension of the radiating sphere test in Section~\ref{subsubsec:radiating_sphere} 
However, at the inner radial boundary ($r=1$) we also specify an energy spectrum (Gaussian or Fermi-Dirac) for the isotropic radiation entering the computational domain $D=\{(r,\mu,\tetradEpsilon)\in\bbR^{3} : r\in[1,3],\,\mu\in[-1,1],\,\tetradEpsilon\in[0,1]\}$.  
Since $\p_{r}\ln\alpha>0$, the energy spectrum of radiation propagating out of the gravitational well ($\mu>0$) will be redshifted (cf. the energy derivative term in Equation~\eqref{eq:ConservativeBoltzmannEquationSphericalSymmetryGR}).  
We also expect gravitational corrections to the angular aberration (cf. the angle derivative term in Equation~\eqref{eq:ConservativeBoltzmannEquationSphericalSymmetryGR}).  
In particular, for $M=4-2\sqrt{3}$ we have $\Psi=0$ at $r=1$.  
For larger $M$, $\Psi<0$ near $r=1$ for $\mu>0$, and we expect some of the radiation entering the computational domain at the inner radial boundary to be ``bent inward" and exit the computational domain through the inner radial boundary (cf. the model with $M=2/3$).  

First we consider a Gaussian spectrum for the radiation entering $D$; i.e.,
\begin{equation*}
  \fDG(r=1,\mu,\tetradEpsilon)
  =\exp\big\{\,-100\,\big(0.5-\tetradEpsilon\big)^{2}\,\big\}\quad\text{ for } \mu\ge0.  
\end{equation*}
Initially, the distribution function is set to zero everywhere in the computational domain.  
We use the appropriate bound-preserving CFL conditions in \eqref{eq:CFLSphericalSymmetryRadius}-\eqref{eq:CFLSphericalSymmetryEnergy} with $s_{1}=s_{2}=s_{3}=1/3$, and the phase space resolution is $N_{r}\times N_{\mu}\times N_{\tetradEpsilon}=128\times128\times64$.  
We run the simulations until a steady state in $D$ is reached ($t\approx3$ for $M=0.0$ and $t\approx20$ for $M=2/3$).  
The numerical results are plotted in Figures \ref{fig:energySpectra}-\ref{fig:limiterVsNoLimiter_M0_667}.  

\begin{figure}
  \centering
  \includegraphics[scale=0.25]{./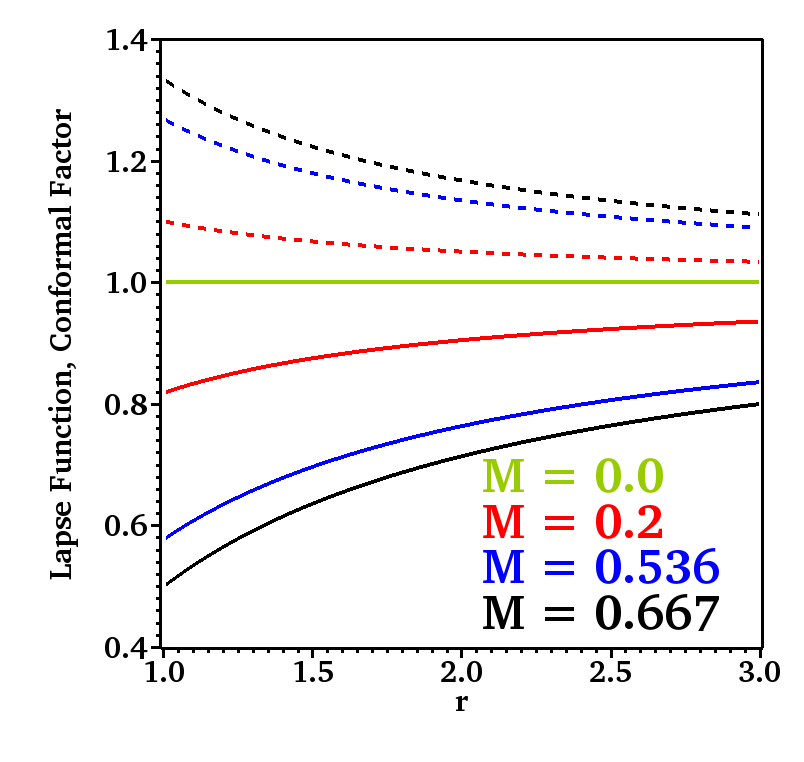} 
  \caption{Plot of the lapse function (solid lines) and the conformal factor (dashed lines) for the Schwarzchild metric (cf. Equation (\ref{eq:schwarzschildMetric})) for various spacetime masses: $M=0.0$ (green), $M=0.2$ (red), $M=4-2\sqrt{3}$ (blue), and $M=2/3$ (black).}
  \label{fig:lapseFunctionAndConformalFactor}
\end{figure}

In the left panel in Figure \ref{fig:energySpectra}, we plot energy spectra at the outer radial boundary (for the angle $\mu=1$) for the model with $M=2/3$.  
Results for the various schemes are plotted; i.e., DG(0)+FE (solid red line), DG(1)+RK2 (solid blue line), and DG(2)+RK3 (solid black line).  
For reference, the spectrum at the inner radial boundary is also plotted (dashed line) --- illustrating the gravitational redshift as the radiation propagates out of the gravitational well.  
As expected, the first-order scheme is more diffusive than the second and third order schemes, while the second and third order schemes are indistinguishable on this plot.  
At the outer radial boundary, we find that the peak of the spectrum has shifted from $\tetradEpsilon=0.5$ to about $\tetradEpsilon=0.3$.  
(Since $\alpha\,\tetradEpsilon=\text{const}.$, $0.5\times\alpha(r=1)/\alpha(r=3)=0.3125$ is expected for $M=2/3$.)  
We also note that the the widths of the spectra have decreased slightly at $r=3$.  
In the right panel of Figure \ref{fig:energySpectra}, we plot energy spectra for various masses $M$, obtained with the second-order scheme (DG(1)+RK2).  
The spectra become increasingly ``redshifted" (i.e., shifted to lower energies) as the mass increases.  
The spectral width also decreases with increasing mass $M$.  
At $r=3$, the width of the spectrum for the model with $M=2/3$ is almost halved when compared with the $M=0.0$ model.  
The effective resolution of the various energy spectra decreases as a result of the decreased spectral width.  
Moreover, the lower effective resolution results in a slight decrease in the spectral peak with increasing $M$.  
For the model with $M=2/3$, we have found that the third-order scheme performs slightly better (i.e., maintains a higher peak) than the second-order scheme.  

\begin{figure}
  \centering
  \begin{tabular}{cc}
    \includegraphics[scale=0.23]{./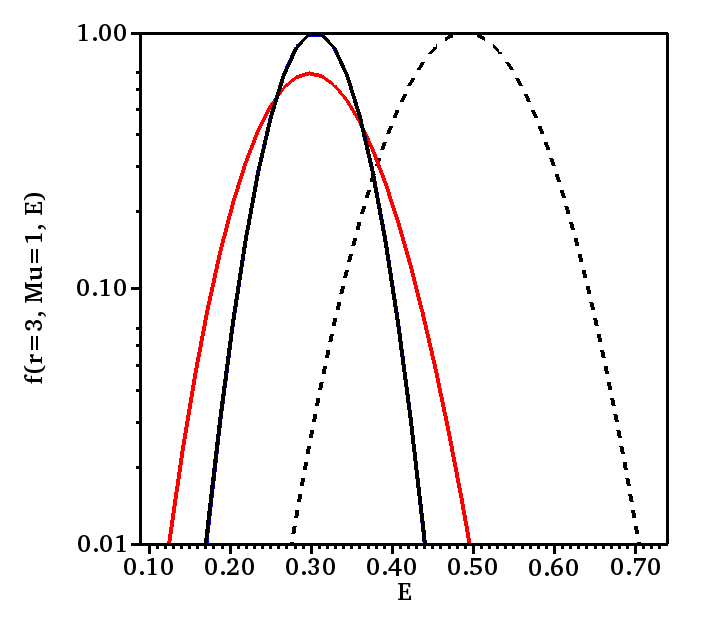} &
    \includegraphics[scale=0.23]{./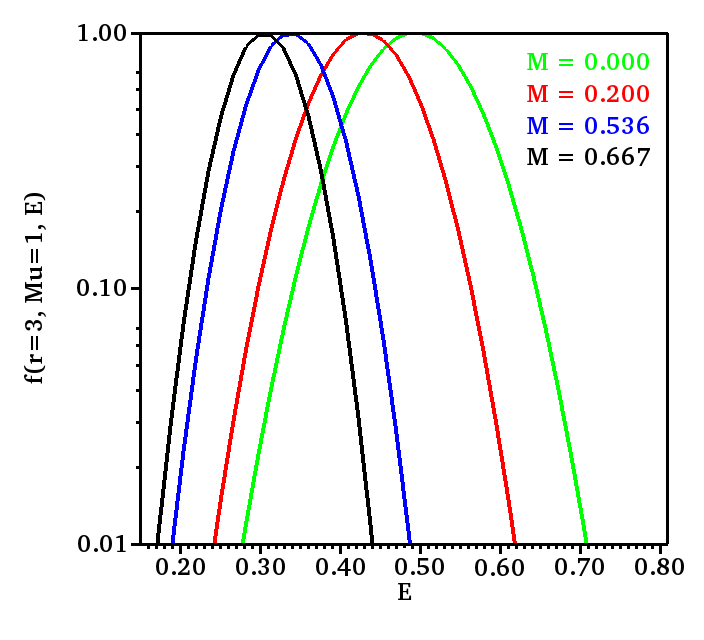} \\
  \end{tabular}
  \caption{Redshifted energy spectra at $r=3$ for various general relativistic computations.  
  Left panel: DG(0)+FE (solid red line), DG(1)+RK2 (solid blue line), and DG(2)+RK3 (solid black line) for mass $M=2/3$.  
  (The ``emitted" spectrum at $r=1$ is also included; dashed black line.)  
  Right panel: results obtained with the second order scheme (DG(1)+RK2) for various masses; $M=0.0$ (green), $M=0.2$ (red), $M=4-2\sqrt{3}$ (blue), and $M=2/3$ (black).}
  \label{fig:energySpectra}
\end{figure}

Figure \ref{fig:distributionsVsRadiusMu} provides a different perspective on the computed models, with color plots of the distribution function versus radius $r$ and angle $\mu$ for a constant energy $\tetradEpsilon$.  
Results are shown after a steady state is reached (similar to Figure \ref{fig:FreeStreaming1D1D_pseudoColor}).  
The results from the $M=0.0$ model for $\tetradEpsilon=0.5$, which correspond to the model in Figure \ref{fig:FreeStreaming1D1D_pseudoColor}, are shown in the upper left panel.  
In particular, the distribution is uniform in radius and angle in two regions, separated by the dashed line predicted by the geometric considerations in Figure \ref{fig:FreeStreaming1D1D_diagram}.  
The effects of gravitational redshift and aberration are visible in the model with $M=0.2$, which is shown in the upper right panel (also for $\tetradEpsilon=0.5$).  
Aberration results in a slightly less forward-peaked distribution function at $r=3$, while the redshift causes a reduction in the amplitude of the distribution near the outer boundary for this particular energy bin.  
The two lower panels show results from the $M=2/3$ model, for energies $\tetradEpsilon=0.5$ (left) and $\tetradEpsilon=0.3$ (right), which exhibits more extreme gravitational effects.  
First, as is also seen in Figure \ref{fig:energySpectra}, the gravitational redshift causes the peak of the distribution to shift from $\tetradEpsilon=0.5$ at $r=1$ to about $\tetradEpsilon=0.3$ at $r=3$.  
Second, at the outer boundary, the distribution function is significantly less forward-peaked than it is in the other models.  
Third, some of the radiation that enters the computational domain at $r=1$ ($\mu\ge0$), exits the computational domain through the inner radial boundary; i.e., $\fDG(r=1,\mu<0,\tetradEpsilon=0.5)>0$.  

\begin{figure}
  \centering
  \begin{tabular}{cc}
    \includegraphics[scale=0.235]{./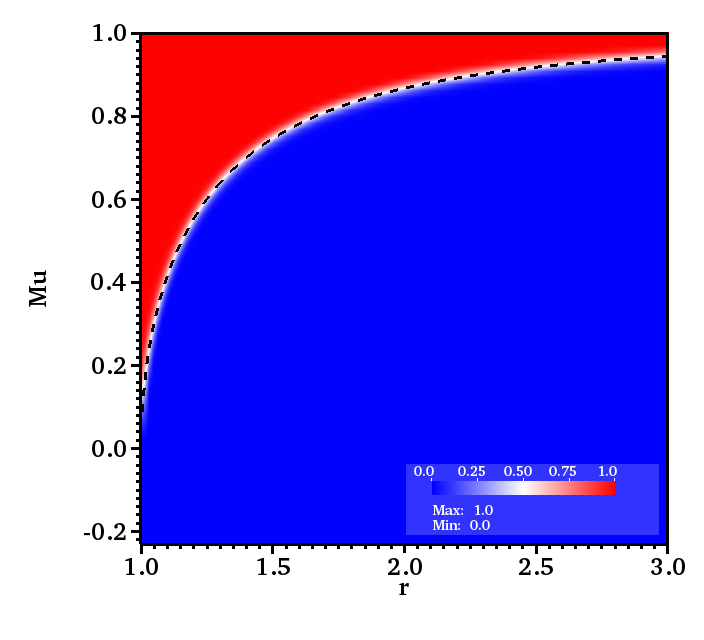} &
    \includegraphics[scale=0.235]{./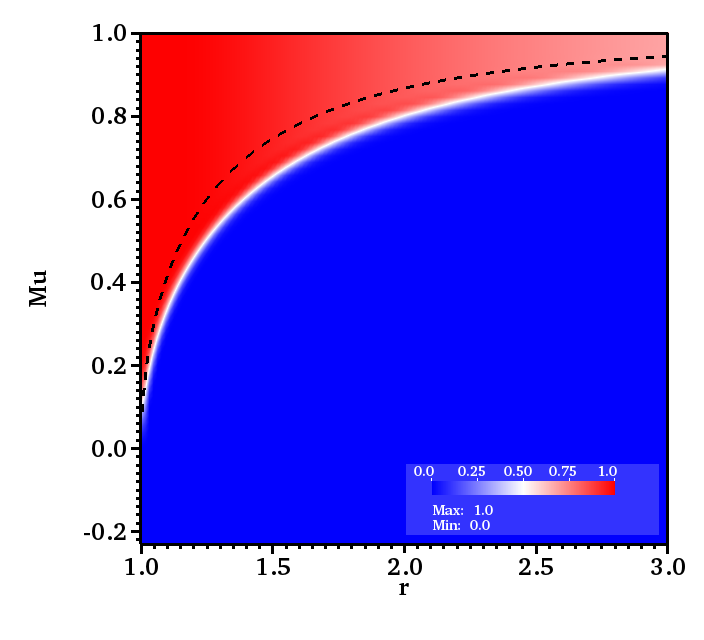} \\
    \includegraphics[scale=0.235]{./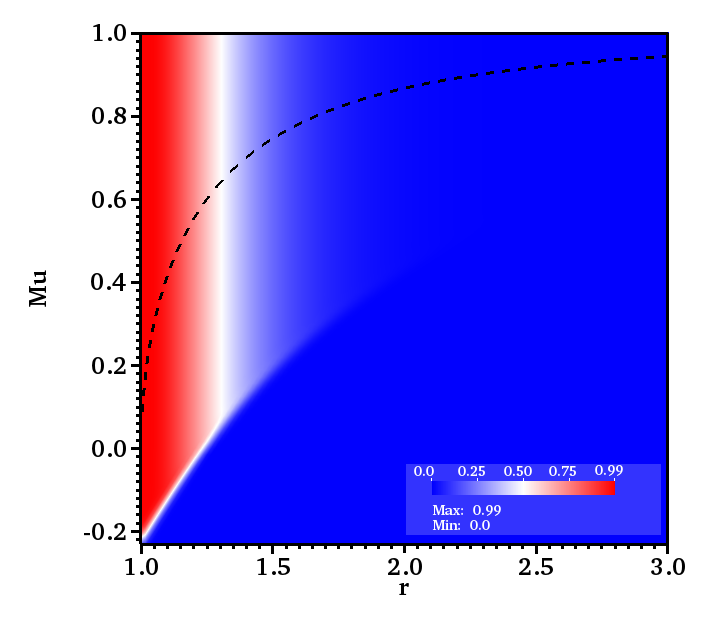} &
    \includegraphics[scale=0.235]{./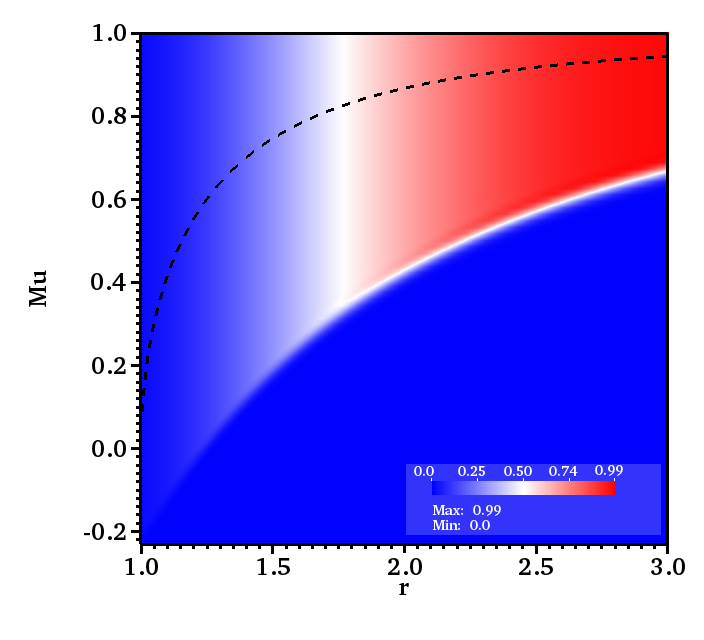} \\
  \end{tabular}
  \caption{Color plots of the distribution function versus radius $r$ and angle $\mu$ for a constant energy $\tetradEpsilon$, computed with the DG(1)+RK2 scheme for various spacetime masses $M$.  
  Selected energy bins are shown:  $M=0.0, \tetradEpsilon=0.5$ (upper left); $M=0.2, \tetradEpsilon=0.5$ (upper right); $M=2/3, \tetradEpsilon=0.5$ (lower left); and $M=2/3, \tetradEpsilon=0.3$ (lower right).}
  \label{fig:distributionsVsRadiusMu}
\end{figure}

In Figure \ref{fig:limiterVsNoLimiter_M0_667} we demonstrate the effect of using the bound-enforcing limiter and the appropriate CLF condition for the model with $M=2/3$.  
We plot the number density 
\begin{equation}
  \cN(r,t^{n})
  =\int_{0}^{1}\int_{-1}^{1}\fDG^{n}(r,\mu,\tetradEpsilon)\,d\mu\,\tetradEpsilon^{2}\,d\tetradEpsilon
\end{equation}
and energy density
\begin{equation}
  \cE(r,t^{n})
  =\int_{0}^{1}\int_{-1}^{1}\fDG^{n}(r,\mu,\tetradEpsilon)\,d\mu\,\tetradEpsilon^{3}\,d\tetradEpsilon
\end{equation}
versus radius (red and black curves, respectively) at time $t=1$, computed using the DG(1)+RK2 scheme, with the limiter (solid lines) and without it (dashed lines).  
The inset illustrates the effect of the limiter, which prevents $\cN$ and $\cE$ from becoming negative.  
Without the limiter both $\cN$ and $\cE$ take on negative values in some places.

\begin{figure}
  \centering
  \includegraphics[scale=0.3]{./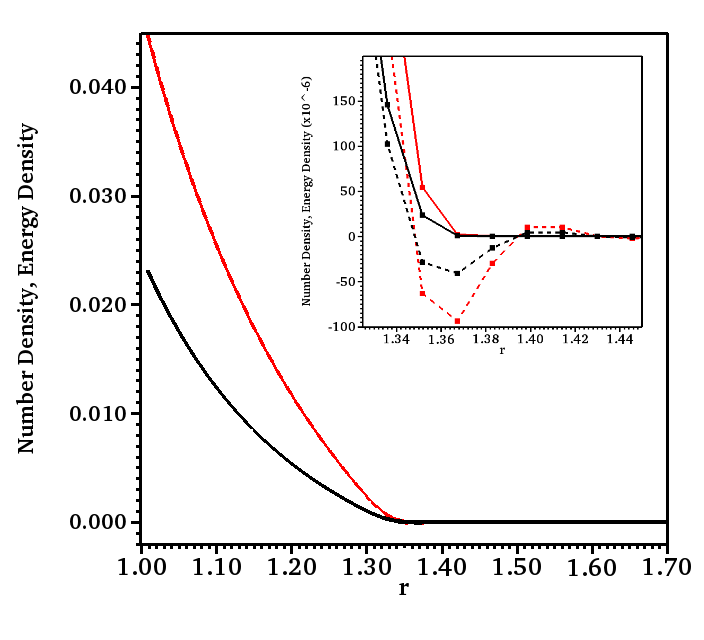} 
  \caption{Number density (red lines) and energy density (black lines) at time $t=1$ for the general relativistic model with $M=2/3$, computed with the DG(1)+RK2 scheme.  
  The model  was computed with and without the bound-enforcing limiter (solid and dashed lines, respectively).  
  The inset is a zoomed-in view to highlight the effect of the limiter.}
  \label{fig:limiterVsNoLimiter_M0_667}
\end{figure}

Finally, we have computed some additional models where we specify a Fermi-Dirac spectrum for the incoming radiation at the inner radial boundary; i.e., 
\begin{equation*}
  \fDG(r=1,\mu,\tetradEpsilon)
  =\big[\,\exp\big\{\,100\,\big(\tetradEpsilon-0.5\big)\,\big\}+1\,\big]^{-1} \quad\text{ for } \mu\ge0.  
\end{equation*}
``Fermi-blocking" plays an important role during the collapse phase of core-collapse supernovae \citep[e.g.,][]{lentz_etal_2012b}, where the neutrino phase space occupation increases with increasing core density due to electron capture on nuclei.  
This process fills up the low-energy portion of the spectrum and prohibits down-scattering of higher energy neutrinos.  
It is important to maintain $\fDG\le1$ during the advection part of the algorithm.  
In Figure \ref{fig:energySpectraFermiDirac} we plot energy spectra at the outer radial boundary for $\mu=1$, for the model with $M=2/3$.  
Results for various schemes are plotted; i.e., DG(0)+FE (solid red line), DG(1)+RK2 (solid blue line), and DG(2)+RK3 (solid black line).  
The spectrum at the inner boundary is also plotted (dashed line).  
As with the results displayed in Figure \ref{fig:energySpectra}, the energy spectra are significantly redshifted at the outer radial boundary.  
The first-order scheme is very diffusive when compared to the second and third order schemes, while the results obtained with the second and third order schemes are similar, and differ only in the high-energy tail.  
All schemes maintain positivity of $\fDG$ and $1-\fDG$.  

\begin{figure}
  \centering
  \includegraphics[scale=0.35]{./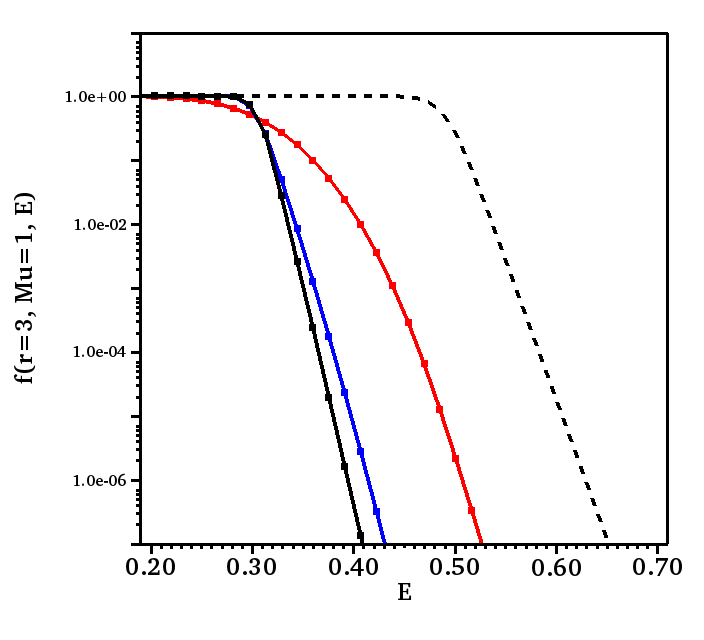} 
  \caption{Energy spectra at $r=3$ for computations with $M=2/3$ and a Fermi-Dirac spectrum specified at the inner radial boundary: DG(0)+FE (solid red line), DG(1)+RK2 (solid blue line), and DG(2)+RK3 (solid black line).  
  The spectrum at $r=1$ is also plotted (dashed black line).}
  \label{fig:energySpectraFermiDirac}
\end{figure}

To further demonstrate the effectiveness of our bound-preserving DG scheme, in Figure \ref{fig:limiterVsNoLimiterFermiDirac} we compare the spectrum at $r=3$ from one model computed with the bound-enforcing limiter on (solid lines) with the spectrum from one model computed without the limiter (dashed lines).  
Both models were computed with the DG(1)+RK2 scheme with $M=2/3$.  
As can be seen, without the limiter, the distribution function overshoots unity (left panel) and becomes negative (right panel).  
For the bound-preserving scheme, we have $0\le\fDG\le1$ at all times.  

\begin{figure}
  \centering
  \begin{tabular}{cc}
    \includegraphics[scale=0.21]{./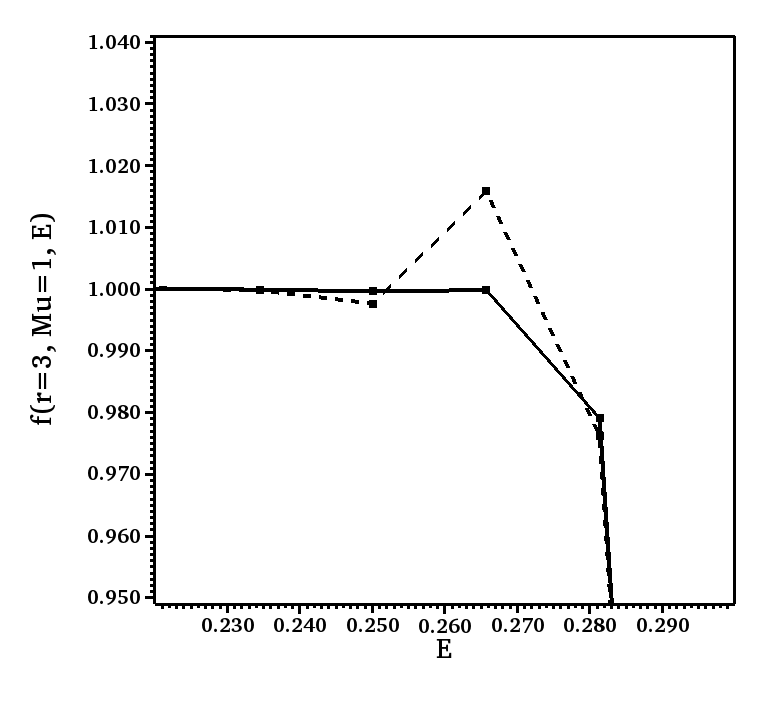} &
    \includegraphics[scale=0.21]{./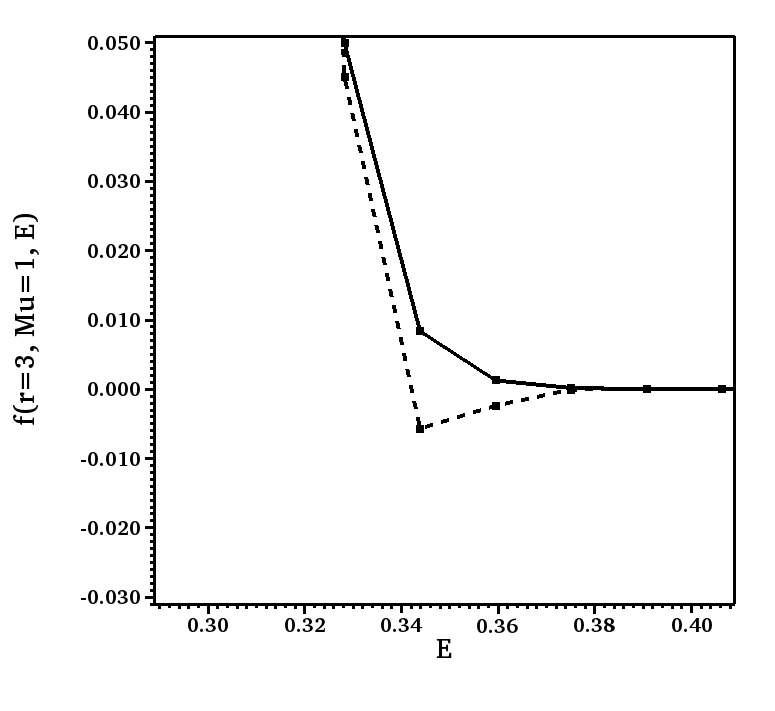} \\
  \end{tabular}
  \caption{Comparison of spectra at $r=3$ near the ``Fermi surface" obtained with the DG(1)+RK2 scheme with $M=2/3$, computed with (solid) and without (dashed) the bound-enforcing limiter.}
  \label{fig:limiterVsNoLimiterFermiDirac}
\end{figure}

\subsection{2D~x+2D~p}
\label{sec:numericalExamplesAxialSymmetry}

The tests in this section involve the axially symmetric phase space for flat spacetimes.  
That is, we employ cylindrical spatial coordinates and spherical momentum coordinates, and solve Equation (\ref{eq:ConservativeBoltzmannEquationAxialSymmetryFlat}) for $\fDG^{n}(\rPerp,z,\mu,\tetradPhi)$ at discrete time levels $t^{n}$.  

\subsubsection{Test with Smooth Analytical Solution}

First we consider a test problem with smooth solutions.  
An analytical solution to Equation (\ref{eq:ConservativeBoltzmannEquationAxialSymmetryFlat}) is given by
\begin{equation}
  f(\rPerp,z,\mu,\tetradPhi,t)=\exp\big(\,\sqrt{1-\mu^{2}}\,\cos\tetradPhi\,\rPerp+\mu\,z-t\,\big).
\end{equation}
This function is not sufficiently smooth at $\mu=\pm1$ to demonstrate high-order accuracy.  
Thus, we reduce angular extent in the $\mu$-direction, and take the computational domain to be given by $D=\{(\rPerp,z,\mu,\tetradPhi)\in\bbR^{4} : \rPerp\in[1,2],\,z\in[-0.5,0.5],\,\mu\in[-0.5,0.5],\,\tetradPhi\in[0,\pi]\}$.  
We evolve from $t=0$ to $t=0.1$, and use the analytical solution to set the boundary conditions for the incoming radiation.  
We use the bound-preserving CFL conditions given in Equation (\ref{eq:CFLaxialSymmetry}) with $s_{1}=s_{2}=s_{3}=1/3$.  
We note again that this test is of purely academic interest and is included to measure the accuracy and convergence rate of the DG schemes in the axially symmetric case.  

To evaluate the accuracy and the convergence rates, we evaluate the $L^{1}$-error norm
\begin{equation}
  E^{1}=\f{1}{V_{D}}
  \sum_{\bK\in\cT}\int_{\bK}|\fDG^{n}(\rPerp,z,\mu,\tetradPhi)-f(\rPerp,z,\mu,\tetradPhi,t^{n})|\,dV
  \label{eq:errorL1_2D2D}
\end{equation}
at $t=0.1$ for various grid resolutions.  
(The integral in \eqref{eq:errorL1_2D2D} is computed with $3$-point Gaussian quadratures in all dimensions, and $V_{D}=8\pi$.)  
Each resolution satisfies $N_{\rPerp}=N_{z}=N_{\mu}=\f{1}{3}\,N_{\tetradPhi}$ (i.e., approximately ``square" phase space cells).  
Results obtained with the DG(0)+FE, DG(1)+RK2, and DG(2)+RK3 schemes are listed in Table \ref{tab:convergenceFreeStreamingExponential2D2D}.  
\begin{table}
  \begin{center}
  \caption{$L^{1}$, $L^{\infty}$ error norms and convergence rates for the smooth 2D~x+2D~p test.  
  \label{tab:convergenceFreeStreamingExponential2D2D}}
  \begin{tabular}{ccccc}
    \midrule
    & \multicolumn{4}{c}{DG(0)+FE} \\
    \cmidrule(r){2-5}  
    $N$ & $L^{1}$ Error & Rate & $L^{\infty}$ Error & Rate \\
    \midrule
    $4^{3}\times12$   & $1.23\times10^{-1}$ &       $-$ & $1.20$                      &   $-$ \\  
    $8^{3}\times24$   & $6.16\times10^{-2}$ & $0.99$ & $6.48\times10^{-1}$ & 0.89 \\  
    $16^{3}\times48$ & $3.09\times10^{-2}$ & $1.00$ & $3.49\times10^{-1}$ & 0.89 \\ 
    $32^{3}\times96$ & $1.55\times10^{-2}$ & $1.00$ & $1.84\times10^{-1}$ & 0.93 \\  
    \midrule
    & \multicolumn{4}{c}{DG(1)+RK2} \\
    \cmidrule(r){2-5}
    $N$ & $L^{1}$ Error & Rate & $L^{\infty}$ Error & Rate \\
    \midrule
    $4^{3}\times12$   & $7.09\times10^{-3}$ &       $-$ & $8.32\times10^{-2}$ &   $-$ \\
    $8^{3}\times24$   & $1.92\times10^{-3}$ & $1.88$ & $2.55\times10^{-2}$ & 1.71 \\
    $16^{3}\times48$ & $5.00\times10^{-4}$ & $1.94$ & $6.97\times10^{-3}$ & 1.87 \\ 
    $32^{3}\times96$ & $1.28\times10^{-4}$ & $1.97$ & $1.82\times10^{-3}$ & 1.94 \\
    \midrule
    & \multicolumn{4}{c}{DG(2)+RK3} \\
    \cmidrule(r){2-5}
    $N$ & $L^{1}$ Error & Rate & $L^{\infty}$ Error & Rate \\
    \midrule
    $4^{3}\times12$   & $1.50\times10^{-4}$   &      $-$  & $1.45\times10^{-3}$ &   $-$ \\
    $8^{3}\times24$   & $2.72\times10^{-5}$   & $2.46$ & $2.94\times10^{-4}$ & 2.30 \\
    $16^{3}\times48$ & $3.82\times10^{-6}$   & $2.83$ & $4.17\times10^{-5}$ & 2.82 \\
    $32^{3}\times96$ & $4.67\times10^{-7}$   & $3.03$ & $4.63\times10^{-6}$ & 3.17 \\
    \midrule
  \end{tabular}
  \end{center}
\end{table}
The numerical results confirm the expected order of accuracy for the different schemes (first, second and third order, respectively).  
For this test, for a given phase space resolution, the additional cost (i.e., increased memory footprint) of the higher-order DG schemes is offset by higher accuracy.  
For example, with a resolution of $16^{3}\times48$, the $L^{1}$-error norm obtained with the second-order scheme ($2^{4}$ degrees of freedom per cell) is $5\times10^{-4}$, while the $L^{1}$ error norm obtained with the third-order scheme ($3^{4}$ degrees of freedom per cell) is reduced by more than two orders of magnitude, to $\sim3.8\times10^{-6}$.  
Moreover, the $L^{1}$ error norm obtained with DG(2)+RK3 using $4^{3}\times12$ cells is of the same order of magnitude as the $L^{1}$ error norm obtained with DG(1)+RK2 using $16^{3}\times48$ cells, but with a factor of 50 reduction in total memory cost to store the distribution function.  

As in the spherically symmetric case, we have computed results for models extending to the symmetry axis, $\rPerp=0$.  
The results are similar to those displayed in Figure \ref{fig:FreeStreamingExponential1D1D_AngularMoment}.  
For the first-order scheme, we observe the ``dip" in the numerical result near $\rPerp=0$, with the largest error in the cell with $\RL=0$.  
The results obtained with DG(1)+RK2 and DG(2)+RK3 do not show such signs of increased error near the $z$-axis.  

\subsubsection{Two-Beam Test}

As a second test we consider two beams with Gaussian shape entering the computational domain at the inner boundary ($\rPerp=\rPerp_{0}=1$); i.e., we set
\begin{eqnarray}
  \fDG\big(\rPerp_{0},z,\mu,\tetradPhi\big)
  &=&
  \exp
  \Big\{
    -\big(z_{1}-z\big)^{2}/L_{z}^{2}
    -\big(\mu_{1}-\mu\big)^{2}/L_{\mu}^{2}
    -\tetradPhi^{2}/L_{\tetradPhi}^{2}
  \Big\} \nonumber \\
  &&
  +\exp
  \Big\{
    -\big(z_{2}-z\big)^{2}/L_{z}^{2}
    -\big(\mu_{2}-\mu\big)^{2}/L_{\mu}^{2}
    -\tetradPhi^{2}/L_{\tetradPhi}^{2}
  \Big\}, 
\end{eqnarray}
with $z_{1}=-49/64$, $z_{2}=39/64$, $\mu_{1}=9/16$, $\mu_{2}=-11/16$, and $L_{z}=L_{\mu}=L_{\tetradPhi}=0.1$.  
Initially, the distribution function is set to $10^{-6}$ in the computational domain, which is given by $D=\{(\rPerp,z,\mu,\tetradPhi)\in\bbR^{4} : \rPerp\in[1,3],\,z\in[-1,1],\,\mu\in[-1,1],\,\tetradPhi\in[0,\pi]\}$.  
We evolve until $t=2.6$, when a steady state is reached.  
We use the positivity-preserving CFL conditions in Equation (\ref{eq:CFLaxialSymmetry}) with $s_{1}=s_{2}=s_{3}=1/3$.  
(To save computational time, we run with a single energy group with $\tetradEpsilon\in[0,1]$.)  
This test is also relevant to core-collapse supernova simulations as ``beams" of neutrino radiation may emanate from localized hotspots on the surface of the proto-neutron star \citep{bruenn_etal_2014}.  

First we compare results obtained with the various schemes, using various resolutions (denoted by ${N_{\rPerp}}\times{N_{z}}\times{N_{\mu}}\times{N_{\tetradPhi}}$).  
In Figure \ref{fig:TwoBeamTest} we display the angular moment of the distribution function,
\begin{equation}
  \mathcal{E}(\rPerp,z,t^{n})
  =\f{\langle\tetradEpsilon\rangle}{2\,\pi}\int_{0}^{\pi}\int_{-1}^{1} \fDG^{n}(\rPerp,z,\mu,\tetradPhi)\,d\mu\,d\tetradPhi,
  \label{eq:angularMomentAxialSymmetry}
\end{equation}
versus radius $\rPerp$ and distance along the symmetry axis $z$, at $t=2.6$.  
In \eqref{eq:angularMomentAxialSymmetry}, $\langle\tetradEpsilon\rangle=3/4$.  
In the two upper panels we plot results obtained with DG(0)+FE using $64^{2}\times24\times36$ cells (upper left) and $256^{2}\times96\times144$ cells (upper right).  
In the two middle panels we display results obtained with DG(1)+RK2 ($64^{2}\times24\times36$; middle left) and DG(2)+RK3 ($64^{2}\times24\times36$; middle right).  
In the two lower panels we plot results where we have increased the spatial resolution by a factor of two in each dimension for DG(1)+RK2 ($128^{2}\times24\times36$; lower left) and DG(2)+RK3 ($64^{2}\times24\times36$; lower right).  
\begin{figure}
  \centering
  \begin{tabular}{cc}
    \includegraphics[scale=0.205]{./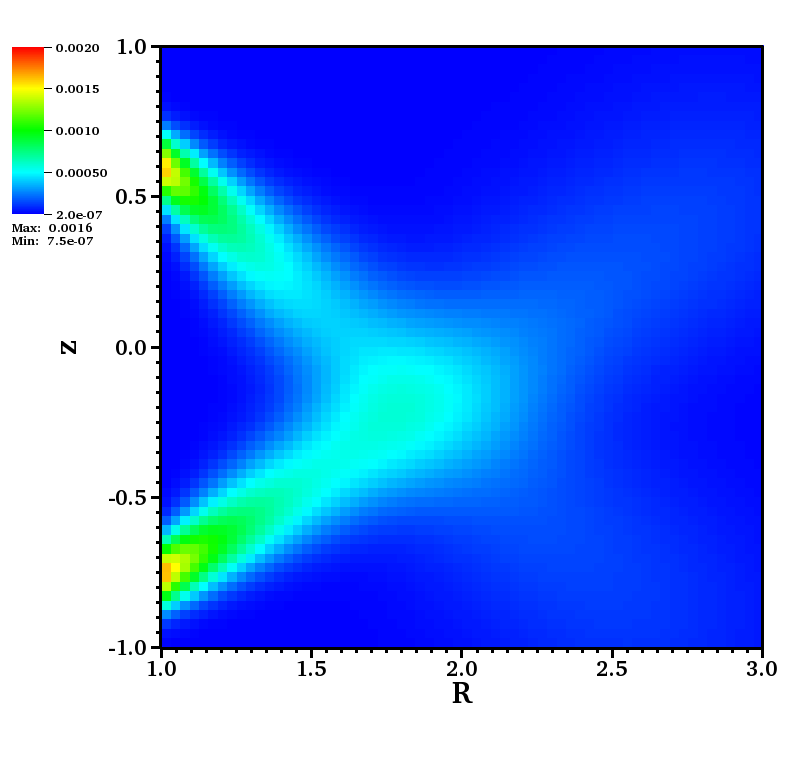} &
    \includegraphics[scale=0.205]{./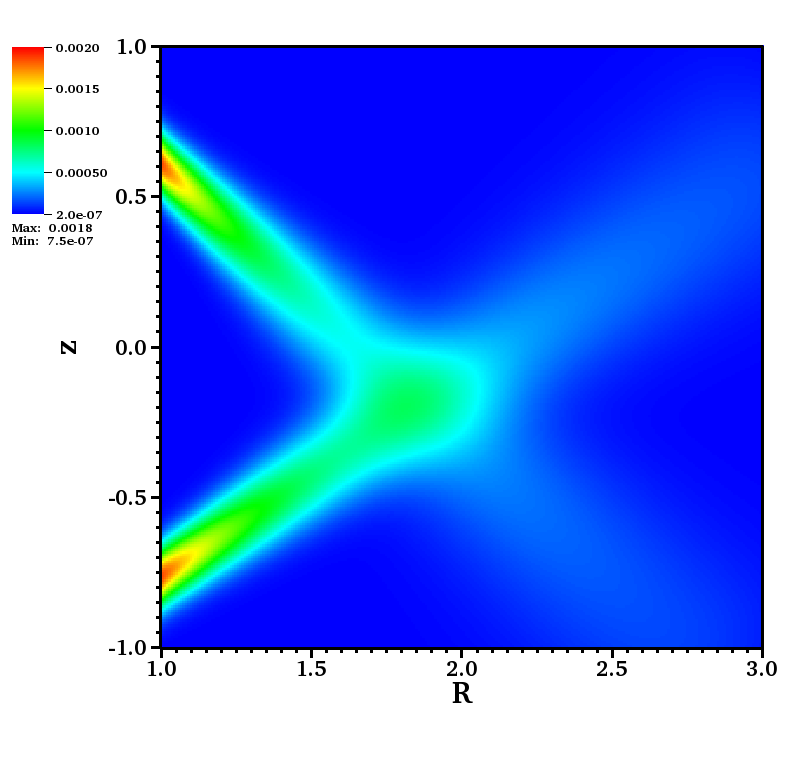} \\
    \includegraphics[scale=0.205]{./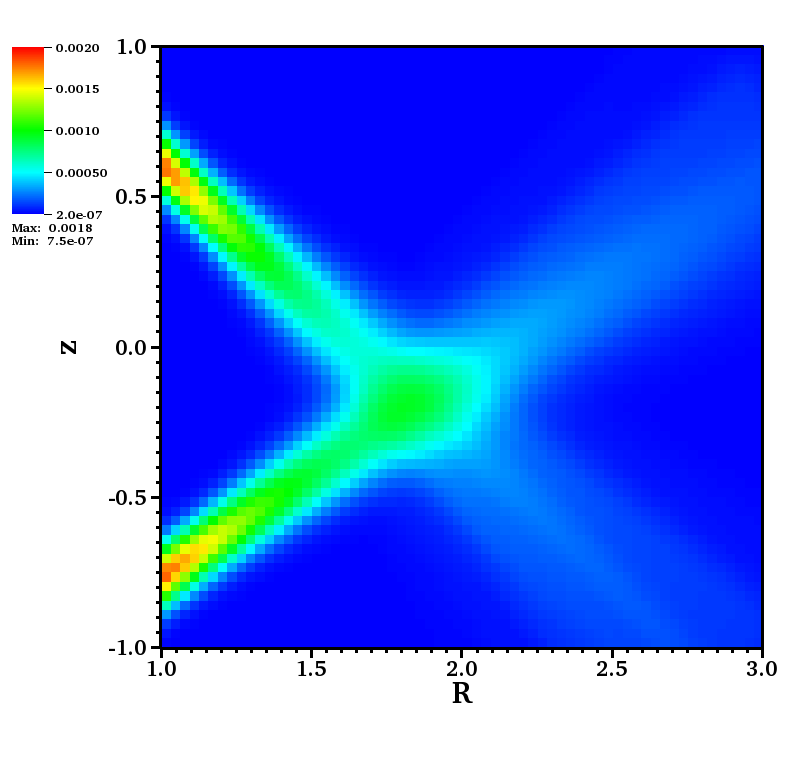} &
    \includegraphics[scale=0.205]{./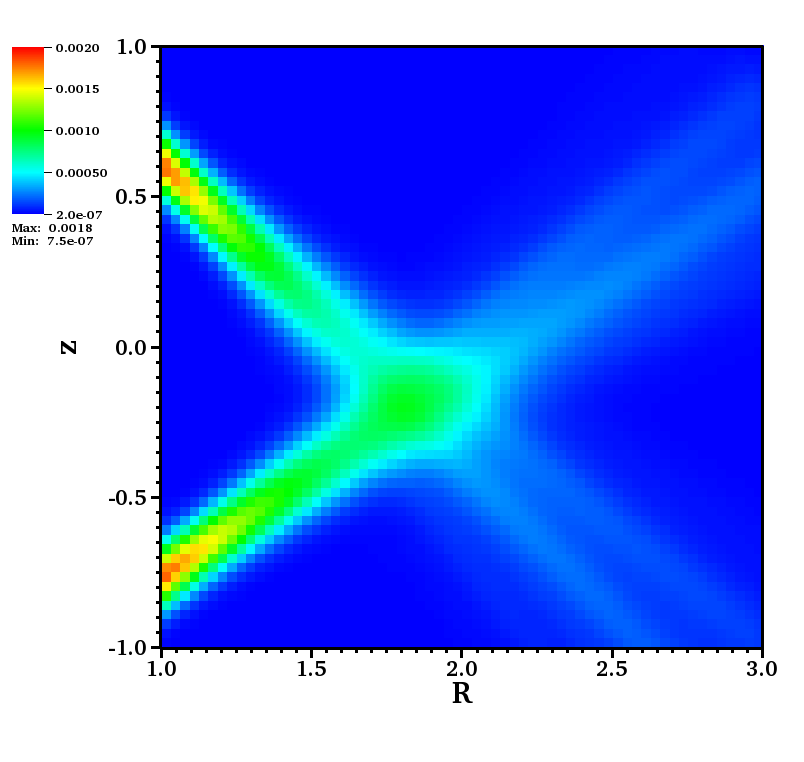} \\
    \includegraphics[scale=0.205]{./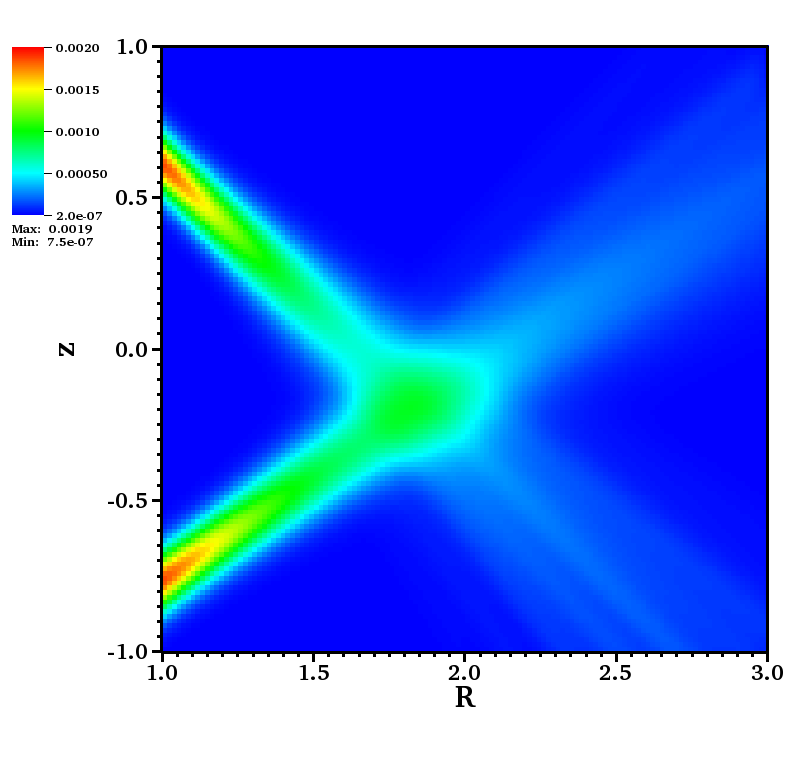} &
    \includegraphics[scale=0.205]{./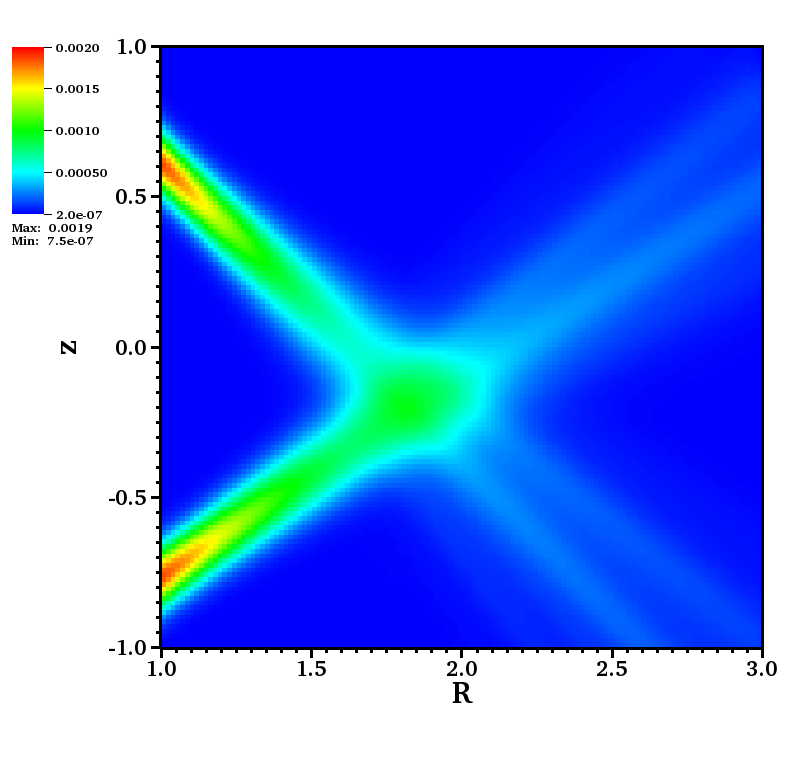}
  \end{tabular}
  \caption{Results from running the ``two-beam" test using various schemes and resolutions (see text for details).  
  The images show the angular moment of the distribution function in \eqref{eq:angularMomentAxialSymmetry} at $t=2.6$.  
  (The pixelation is due to the visualization, which assigns a constant value to each cell.)}
  \label{fig:TwoBeamTest}
\end{figure}

In Figure \ref{fig:TwoBeamTest_LineOuts_1_and_2}, to complement the images in Figure \ref{fig:TwoBeamTest}, we plot horizontal cuts (for $z=0.185$) through the images displayed in Figure \ref{fig:TwoBeamTest}.  
In the left panel we plot the angular moment versus distance from the symmetry axis for DG(0)+FE ($64^{2}\times24\times36$; green), DG(0)+FE ($256^{2}\times96\times144$; red), DG(1)+RK2 ($64^{2}\times24\times36$; blue), and DG(2)+RK3 ($64^{2}\times24\times36$; black).  

All models maintain positivity of the cell averaged distribution function during the evolution.  
For the higher-order schemes, the bound-preserving limiter is required to prevent negative distributions in certain cells, especially near the beam-fronts when they propagate through the computational domain.  
(Maintaining $\fDG\le1$ is not considered an issue in this test.)  
The four upper panels in Figure \ref{fig:TwoBeamTest} and the left panel in Figure \ref{fig:TwoBeamTest_LineOuts_1_and_2} demonstrate the effect of using a high-order method.  
At low resolution, the first-order scheme (green line in Figure \ref{fig:TwoBeamTest_LineOuts_1_and_2}) is clearly too diffusive for this problem.  
The Gaussian peak to the left is reduced by almost a factor of two, when compared to the results obtained with the higher order schemes using the same phase space resolution.  
The Gaussian peak to the right is virtually smeared out.  
The results obtained with the second and third order schemes (blue and black lines, respectively) appear similar for this problem.  
Even, when the phase space resolution is increased by a factor of four in each dimension (red line in Figure \ref{fig:TwoBeamTest_LineOuts_1_and_2}), the results obtained with the first-order method appear smeared out when compared to the results obtained with the second- and third-order schemes, which use factors of $16$ and $\sim3$ fewer total degrees of freedom, respectively.  

We observe ``ray effects" \citep[e.g.,][]{lewisMiller_1993} in the results obtained with the high-order DG schemes.  
The ray effects appear as oscillations in the numerical solution; cf. the black line around the Gaussian peak to the right in the left panel in Figure \ref{fig:TwoBeamTest_LineOuts_1_and_2}.  
In the right panel in Figure \ref{fig:TwoBeamTest_LineOuts_1_and_2} we plot a zoomed-in view of this second Gaussian.  
We plot the results obtained with DG(1)+RK2 ($64^{2}\times24\times36$; dashed blue) and DG(2)+RK3 ($64^{2}\times24\times36$; dashed black).  
We also plot results obtained by increasing the spatial resolution by a factor of two in each position space dimension, while keeping the momentum space angular resolution fixed; i.e., DG(1)+RK2 ($128^{2}\times24\times36$; solid blue) and DG(2)+RK3 ($128^{2}\times24\times36$; solid black).  
From Figure \ref{fig:TwoBeamTest_LineOuts_1_and_2}, and the middle and bottom rows in Figure \ref{fig:TwoBeamTest}, we see that increasing the \emph{spatial} resolution does not reduce the appearance of the ray effects.  

\begin{figure}
  \centering
  \begin{tabular}{cc}
    \includegraphics[scale=0.225]{./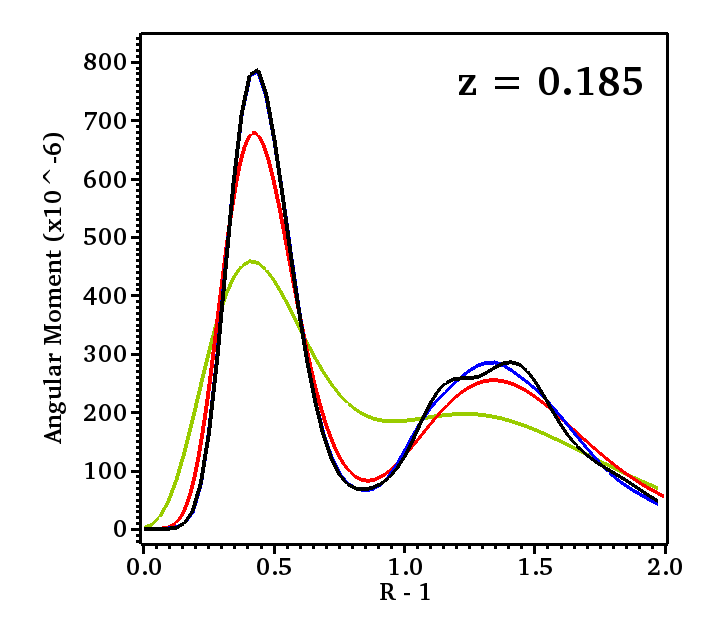} &
    \includegraphics[scale=0.225]{./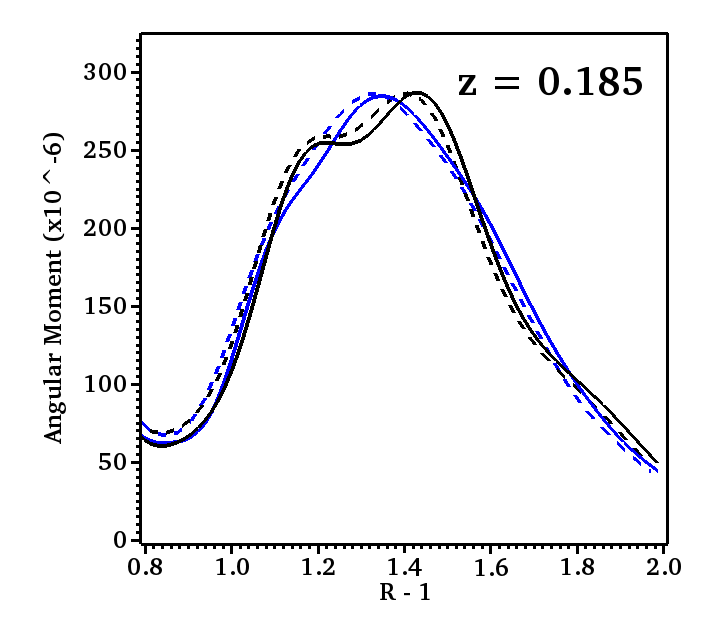}
  \end{tabular}
  \caption{Horizontal cuts ($z=0.185$) through the images displayed in Figure \ref{fig:TwoBeamTest} (see text for details).}
  \label{fig:TwoBeamTest_LineOuts_1_and_2}
\end{figure}

We have computed additional models to examine the appearance of ray effects in the DG scheme.  
In Figures \ref{fig:TwoBeamTestAngularResolutions} and \ref{fig:TwoBeamTest_LineOuts_3} we plot results obtained with the second-order scheme, DG(1)+RK2, using various momentum space angular resolution, while we keep the position space resolution fixed to $128\times128$.  
In Figure \ref{fig:TwoBeamTestAngularResolutions} we display the spatial distribution of the angular moment of the distribution function for the different resolutions: $128^{2}\times16\times24$ (upper left), $128^{2}\times24\times36$ (upper right), $128^{2}\times32\times48$ (lower left), and $128^{2}\times48\times72$ (lower right). 
In Figure \ref{fig:TwoBeamTest_LineOuts_3}, we plot horizontal cuts ($z=0.185$), angular moment versus distance $\rPerp$, through the same models: $128^{2}\times16\times24$ (green), $128^{2}\times24\times36$ (red), $128^{2}\times32\times48$ (blue), and $128^{2}\times48\times72$ (black).  
The appearance of ray effects diminish with increasing momentum space angular resolution.  
Strong ray effects are present in the low-resolution model ($\Delta\mu/L_{\mu}=1.25$).  
However, they are barely noticeable to the eye in the $128^{2}\times32\times48$-model ($\Delta\mu/L_{\mu}=0.625$; cf. lower left panel in Figure \ref{fig:TwoBeamTestAngularResolutions}), while they are not present at all in the $128^{2}\times48\times72$-model ($\Delta\mu/L_{\mu}\approx0.42$).  

\begin{figure}
  \centering
  \begin{tabular}{cc}
    \includegraphics[scale=0.205]{./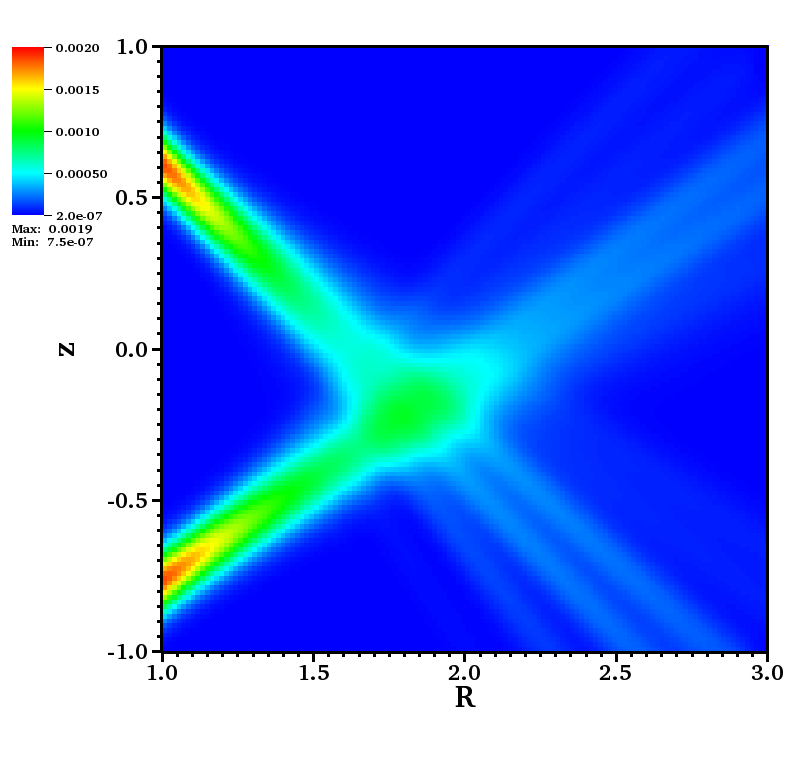} &
    \includegraphics[scale=0.205]{./TwoBeamTest_128x128x24x36_DG1_RK2.png} \\
    \includegraphics[scale=0.205]{./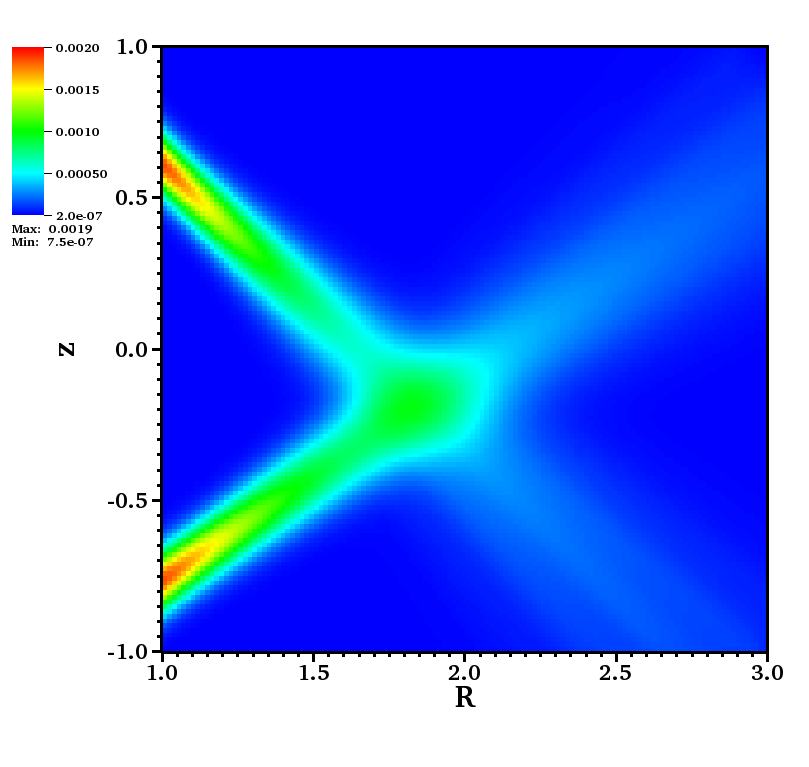} &
    \includegraphics[scale=0.205]{./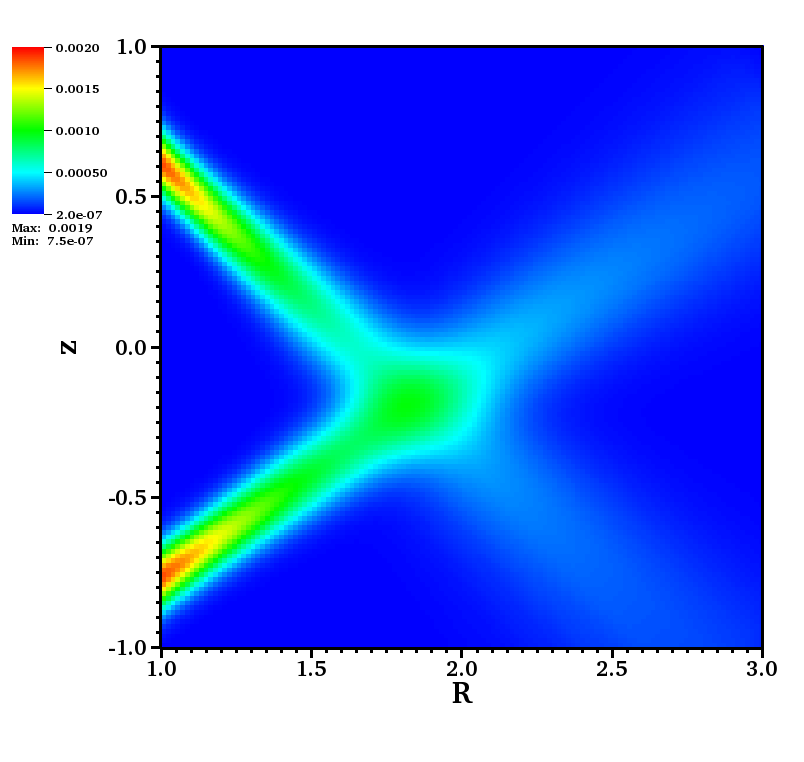}
  \end{tabular}
  \caption{Results from running the ``two-beam" test with DG(1)+RK2 using various momentum space angular resolutions.  
  The images show the angular moment of the distribution function at $t=2.6$.  (See text for details.)}
  \label{fig:TwoBeamTestAngularResolutions}
\end{figure}

\begin{figure}
  \centering
  \includegraphics[scale=0.25]{./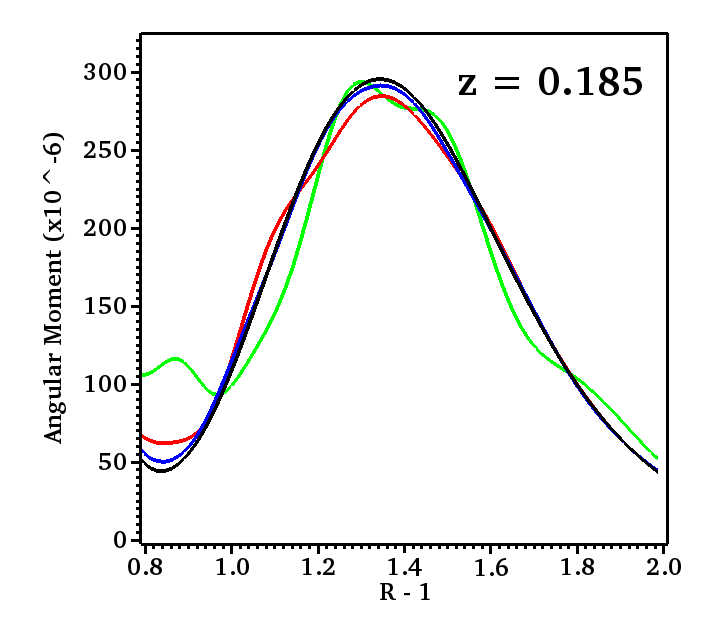}
  \caption{Horizontal cuts ($z=0.185$) through the images displayed in Figure \ref{fig:TwoBeamTestAngularResolutions}.  
  The results were obtained with the DG(1)+RK2 scheme with fixed position space resolution and various momentum space angular resolutions: $128^{2}\times16\times24$ (green), $128^{2}\times24\times36$ (red), $128^{2}\times32\times48$ (blue), and $128^{2}\times48\times72$ (black).}
  \label{fig:TwoBeamTest_LineOuts_3}
\end{figure}

\section{Summary and Conclusions}
\label{sec:conclusions}

We have developed high-order, bound-preserving methods for solving the conservative phase space advection problem for radiation transport.  
We have presented discontinuous Galerkin (DG) methods for solving the conservative, general relativistic collision-less Boltzmann equation in up to six dimensions assuming \emph{time-independent} spacetimes.  
Specific examples are given for problems with reduced dimensionality from imposed symmetries; namely, spherical symmetry in flat and curved spacetime (Sections \ref{sec:maximumPrincipleSphericalSymmetry} and \ref{sec:maximumPrincipleSphericalSymmetryGR}, respectively) and axial symmetry in flat spacetime (Section \ref{sec:maximumPrincipleAxialSymmetry}).  
With the eventual goal of simulating neutrino transport in dense nuclear matter, which obey Fermi-Dirac statistics, we have taken special care to ensure that the high-order DG method preserves the maximum principle for the phase space distribution function; i.e., $f\in[0,1]$. 
The combination of suitable CFL conditions and the use of the conservative, high-order bound-preserving limiter in \citep{ZS2010a} are sufficient to ensure positivity of the distribution function (i.e., $f\ge0$).  
For the conservative formulation we employ, the additional requirement that the phase space discretization preserves the divergence-free character of the Liouville flow is necessary to ensure that the distribution function satisfies the full maximum principle during the evolution (i.e., $0\le f\le1$).  
High-order accuracy, bound-preserving properties, as well as other properties of the DG scheme are demonstrated with numerical examples in Section~\ref{sec:numericalExamples}.  

In our opinion, the DG method is an attractive option for simulating supernova neutrino transport.  
However, several challenges --- which we defer to future studies --- remain to be solved before it can be deployed with confidence in large-scale multiphysics simulations with all the relevant physics included.  
In particular, the bound-preserving DG scheme must be extended to include necessary neutrino-matter interactions.  
Here, the use of implicit-explicit methods may be used in order to bypass timescales imposed by short radiation mean-free paths in neutrino opaque regions (i.e., in the proto-neutron star).  
Our bound-preserving scheme must be extended to the case with time-dependent spacetimes (we assumed $\p_{t}\smdet=0$ in Section~\ref{sec:maximumPrinciple}).  
Moreover, velocity-dependent effects (i.e., Doppler shift and aberration) must be correctly accounted for when the radiation is interacting with a moving stellar fluid \citep[e.g.,][]{mihalasMihalas_1999}.  
One the one hand, the neutrino-matter interactions are most easily handled in a frame that is comoving with the fluid.  
On the other hand, the Liouville equation is mathematically simpler in the so-called laboratory-frame formulation \citep[see discussions in e.g.,][]{mihalasAuer_2001,cardall_etal_2013}.  
Mihalas \& Klein \citep{mihalasKlein_1982} formulated the ``mixed-frame" approach, valid to $\mathcal{O}(v/c)$, which combines the advantages of these two formulations, but this approach is, as far as we know, not extendable to relativistic flows; however see the approach proposed in \citep{nagakura_etal_2014}.  
Finally, we note that the numerical methods must be developed to conserve neutrino four-momentum in limits where such conservation laws can be stated (e.g., flat or asymptotically flat spacetimes).  
Our numerical phase space advection scheme conserves particles by construction, but is in general not guaranteed to conserve energy and momentum.  
The possibility of extending the approach in \citep{liebendorfer_etal_2004} to higher dimensions and high-order accuracy should be investigated.  

We note that high-order DG methods are computationally expensive in terms of memory usage for high-dimensional problems.  
In this paper, the numerical solutions are constructed from the so-called tensor product basis, $\bbQ_{(d)}^{k}$; the $d$-dimensional polynomial space formed from tensor products of one-dimensional polynomials of degree $\le k$.  
The total number of degrees of freedom per phase space cell is then $||\bbQ_{(d)}^{k}||=k^{d}$.  
To save computational resources, one may use the total degree polynomial basis, denoted $\bbP_{(d)}^{k}$, by constructing the numerical solution from multi-dimensional polynomials of total degree $\le k$.  
The number of degrees of freedom per phase space cell is then $||\bbP_{(d)}^{k}||=(k!+d!)/(k!\,d!)$, which is significantly smaller than $||\bbQ_{(d)}^{k}||$ for high-dimensional problems (i.e., $d=6$) when high-order accuracy is desired (i.e., $k>2$).  
In order to further reduce the overall memory footprint, the filtered spherical harmonics approach to momentum space angular discretization \citep{MH2010,radice_etal_2013} may be an attractive option for core-collapse supernova neutrino transport simulations.  
However, proper inclusion of all the relevant physics discussed above remains a forefront research topic in computational physics.  
\appendix

\section{Conservative Boltzmann Equations}
\label{sec:equations}

Our long-term goal is to develop robust and efficient numerical methods for solving the general relativistic Boltzmann equation for neutrino transport, coupled with corresponding fluid and gravitational field equations, to study the explosion mechanism of massive stars.  
This is a formidable task, which is far beyond the scope of this paper.  
In this study we ignore radiation-matter interactions on the right-hand side of the Boltzmann equation, and focus on numerical methods for the left-hand side; i.e., the phase space advection problem.  
To this end, we consider the fully general relativistic case, but assume a \emph{time-independent} spacetime.  
For reference and completeness, we include general and special relativistic Boltzmann equations in this appendix.  
We adopt a `geometrized' unit system in which the vacuum speed of light and the Planck constant are unity.  
Where appropriate, we adopt the usual Einstein summation convention where repeated Greek indices run from $0$ to $3$, and repeated Latin indices run from $1$ to $3$.  
We use the metric signature $(-,+,+,+)$.  

\subsection{General Relativistic Boltzmann Equation}
\label{sec:BoltzmannGR}

It is necessary to employ a general relativistic description in order to study non-equilibrium transport processes in systems involving dynamical spacetimes (e.g., neutrino transport simulations aimed at understanding the explosion mechanism of massive stars).  
General relativistic formulations of kinetic theory (including the Boltzmann transport equation) have been presented in various forms and discussed in detail by several authors \citep[see e.g.,][]{lindquist_1966,ehlers_1971,israel_1972,thorne_1981,riffert_1986,mezzacappaMatzner_1989,cercignaniKremer_2002,cardallMezzacappa_2003,cardall_etal_2013,shibata_etal_2014}.  
Thus, the presentation given here is intentionally brief.  

\paragraph{Conservative General Relativistic Formulation}
For numerical solution we employ the \emph{conservative} form of the Boltzmann equation.  
The conservative form has desirable mathematical properties when the solution can develop discontinuities.  
It is also better suited for tracking conserved quantities (e.g., particle number and energy).  
The conservative, general relativistic Boltzmann equation can be written as \citep[see][for details]{cardallMezzacappa_2003,cardall_etal_2013}
\begin{equation}
  \f{1}{\mdet}\pderiv{}{\xu{\mu}}\Big(\,\mdet\,\f{\pu{\mu}}{\tetradEpsilon}\,f\,\Big)
  +\f{1}{\pmdet}\pderiv{}{\putld{\imath}}\Big(\,\pmdet\,\jacPutlddb{\imath}{\imath}\,\pu{\nu}\,\pu{\rho}\,\big(\covderiv{\rho}\tetubd{\imath}{\nu}\big)\,\f{1}{\tetradEpsilon}\,f\,\Big)
  =\f{1}{\tetradEpsilon}\,\collision{f}.  
  \label{eq:ConservativeBoltzmannEquationGeneralRelativistic}
\end{equation}
Cardall et al. \citep{cardall_etal_2013} derived the conservative form of the Boltzmann equation from the corresponding non-conservative form by showing that the ``Liouville flow" is divergence-free; i.e., 
\begin{equation}
  \f{1}{\mdet}\pderiv{}{\xu{\mu}}\Big(\,\mdet\,\f{\pu{\mu}}{\tetradEpsilon}\,\Big)
  +\f{1}{\pmdet}\pderiv{}{\putld{\imath}}\Big(\,\pmdet\,\jacPutlddb{\imath}{\imath}\,\pu{\nu}\,\pu{\rho}\,\big(\covderiv{\rho}\tetubd{\imath}{\nu}\big)\,\f{1}{\tetradEpsilon}\,\Big)=0.  
  \label{eq:divergenceFreePhaseSpaceFlowGeneralRelativistic}
\end{equation}
In Equation \eqref{eq:ConservativeBoltzmannEquationGeneralRelativistic}, $\{\xu{\mu}\}$ are spacetime position components in a global coordinate basis.  
The geometry of spacetime is encoded in the metric tensor $\gdd{\mu}{\nu}$, whose determinant is denoted $g$.  
The components of the particle four-momentum are $\{\pu{\mu}\}$.  
The collision term on the right-hand side, $\collision{f}$, describes energy and momentum exchange due to point-like collisions (e.g., radiation-matter interactions).  
In Equation \eqref{eq:ConservativeBoltzmannEquationGeneralRelativistic}, the particle distribution function is a function of spacetime position coordinates in the global coordinate basis, while momentum coordinates are defined with respect to a local orthonormal basis\footnote{In the general theory of relativity, the existence of a local orthonormal basis at every spacetime point is assumed.}.  
(We take only the spatial four-momentum components as independent variables due to the mass shell constraint $\pu{\mu}\pd{\mu}=0$.)
The coordinate transformation $\tetudb{\mu}{\mu}=\partial\xu{\mu}/\partial\xub{\mu}$ (and its inverse $\tetubd{\mu}{\mu}$) locally transforms between four-vectors associated with the coordinate basis (unadorned indices) and four-vectors associated with an orthonormal (tetrad) basis (indices adorned with a bar); e.g., $\pu{\mu}=\tetudb{\mu}{\mu}\,\pub{\mu}$.  
Equivalently, $\tetudb{\mu}{\mu}$ locally transforms the spacetime metric into the Minkowskian; i.e.,
\begin{equation}
  \tetudb{\mu}{\mu}\,\tetudb{\nu}{\nu}\,\gdd{\mu}{\nu}=\mbox{diag}\big[-1,1,1,1\big].  
\end{equation}
In Equation \eqref{eq:ConservativeBoltzmannEquationGeneralRelativistic}, we allow for the use of curvilinear three-momentum coordinates (indices adorned with a tilde), defined with respect to the local orthonormal basis.  
The Jacobian matrix $\jacPutlddb{\imath}{\imath}=\partial\putld{\imath}/\partial\pub{\imath}$ is due to a change to curvilinear from ``Cartesian" three-momentum coordinates.  
As an example used in this paper, the Cartesian momentum components can be expressed in terms of spherical momentum coordinates $\{\putld{\imath}\}=\{\tetradEpsilon,\tetradTheta,\tetradPhi\}$ (the energy $\tetradEpsilon$ and two angles $\tetradTheta$ and $\tetradPhi$) as
\begin{equation}
  \{\,\pub{1},\pub{2},\pub{3}\,\}=\tetradEpsilon\,\{\cos\tetradTheta,\sin\tetradTheta\cos\tetradPhi,\sin\tetradTheta\sin\tetradPhi\}, 
\end{equation}
from which the transformation $\jacPubdtld{\imath}{\imath}=\partial\pub{\imath}/\partial\putld{\imath}$ and its inverse $\jacPutlddb{\imath}{\imath}$ can be computed directly \citep[see for example Equations (24) and (25) in][]{cardall_etal_2013}.  
The momentum space three-metric $\lambdadtlddtld{\imath}{\jmath}$ (with inverse $\lambdautldutld{\imath}{\jmath}$ and determinant $\lambda$) provides the proper distance between points in three-dimensional momentum space; i.e., $ds_{\vect{p}}^{2}=\lambdadtlddtld{\imath}{\jmath}\,d\putld{\imath}\,d\putld{\jmath}$.  

We have written the distribution function in terms of spacetime position components in a global coordinate basis and three-momentum components in a local orthonormal basis; i.e., $f=f\big(\,\xu{\mu},\putld{\imath}\,\big)$.  
The use of distinct position and momentum coordinates for radiation transport was discussed in detail in \citep{cardallMezzacappa_2003,cardall_etal_2013}.  
The use of an orthonormal basis for the radiation four-momentum eliminates (locally) the effects of the curved spacetime geometry (i.e., the gravitational field), which is advantageous when describing local physics (i.e., radiation matter interactions).  
However, in curved spacetime it is not possible to globally eliminate the gravitational field by any coordinate transformation.  

\paragraph{Conservative 3+1 Formulation}
For numerical simulations involving dynamical spacetimes, the so-called 3+1 splitting of spacetime \citep[e.g.,][]{misner_etal_1973,gourgoulhon_2007,baumgarteShapiro_2010} is commonly employed.  
In the 3+1 approach, the four-dimensional spacetime is foliated into a ``stack" of three-dimensional spatial hypersurfaces $\Sigma_{t}$ labeled with time coordinate $t$.  
The 3+1 form of the invariant interval between neighboring points in four-dimensional spacetime is given by
\begin{equation}
  ds^{2}=-\alpha^{2}\,dt^{2}+\gmdd{i}{j}\,\big(d\xu{i}+\betau{i}dt\big)\big(d\xu{j}+\betau{j}dt\big), 
  \label{eq:threePlusOneMetric}
\end{equation}
where $\alpha\,dt$ is the proper time between spatial hypersurfaces $\Sigma_{t}$ and $\Sigma_{t+dt}$, $\gmdd{i}{j}$ is the spatial three-metric, and $ds_{\vect{x}}^{2}=\gmdd{i}{j}\,\big(d\xu{i}+\betau{i}dt\big)\big(d\xu{j}+\betau{j}dt\big)$ gives the proper distance within a spatial hypersurface \citep[e.g.,][]{baumgarteShapiro_2010}.  
The lapse function $\alpha$ and the (spatial) shift vector $\betau{i}$ are freely specifiable functions associated with the freedom to arbitrarily specify time and space coordinates.  
A straightforward calculation of the determinant of the spacetime metric gives $\mdet=\alpha\smdet$, where $\gamma$ is the determinant of the spatial metric.  

The normal vector to a spacelike hypersurface can be written in terms of coordinate basis metric components as
\begin{equation}
  \fourvelocityEu{\mu}=\alpha^{-1}\big(1,-\betau{i}\big), 
  \label{eq:eulerianFourvelocity}
\end{equation}
where the normalization condition $\fourvelocityEd{\mu}\fourvelocityEu{\mu}=-1$ implies $\fourvelocityEd{\mu}=\big(-\alpha,0,0,0\big)$.  
For the derivation of the $3+1$ form of the Boltzmann equation, we use the ``Eulerian" decomposition of the four-momentum,
\begin{equation}
  \pu{\mu}=\tetradEpsilon\,\big(\,\fourvelocityEu{\mu}+\lu{\mu}\,\big),
  \label{eq:fourMomentumEulerian}
\end{equation}
where $\tetradEpsilon=-\fourvelocityEd{\mu}\,\pu{\mu}$ is the particle energy seen by an `Eulerian observer' with timelike four-velocity $\fourvelocityEu{\mu}$, and $\lu{\mu}$ is a spacelike coordinate basis unit four-vector orthogonal to $\fourvelocityEu{\mu}$ (i.e., $\ld{\mu}\lu{\mu}=1$ and $\fourvelocityEd{\mu}\lu{\mu}=0$).  
Then, the conservative general relativistic 3+1 Boltzmann equation can be written as
\begin{eqnarray}
  \f{1}{\alpha\smdet}
  \Big[\,
    \pderiv{}{t}\Big(\,\smdet\,f\,\Big)
    +\pderiv{}{\xu{i}}\Big(\,\smdet\,\big[\,\alpha\,\lu{i}-\betau{i}\,\big]\,f\,\Big)
  \,\Big]
  +\f{1}{\pmdet}\pderiv{}{\putld{\imath}}\Big(\,\pmdet\,\mathcal{R}^{\tilde{\imath}}\,f\,\Big)
  =\f{1}{\tetradEpsilon}\,\collision{f}, 
  \label{eq:ConservativeBoltzmannEquationGeneralRelativisticThreePlusOne}
\end{eqnarray}
where
\begin{eqnarray}
  \mathcal{R}^{\tilde{\imath}}
  &=&
  \jacPutlddb{\imath}{\imath}\,\pu{\nu}\,\pu{\rho}\,\big(\covderiv{\rho}\tetubd{\imath}{\nu}\big)\,\f{1}{\tetradEpsilon} \nonumber \\
  &=&
  -\tetradEpsilon\,\lambdautldutld{\imath}{\jmath}\,\pderiv{\tetradEpsilon}{\putld{\jmath}}\,\lu{i}\,
  \Big\{\,
    \f{1}{\alpha}\pderiv{\alpha}{\xu{i}}
    -\lu{j}\,K_{ij}
  \,\Big\} \nonumber \\
  && \hspace{12pt}
  -\tetradEpsilon^{2}\,\lambdautldutld{\imath}{\jmath}\,\pderiv{\lu{i}}{\putld{\jmath}}\,
  \Big\{\,
    \tauderivc{\ld{i}}
    +\f{1}{\alpha}\pderiv{\alpha}{\xu{i}}
    -\f{\ld{j}}{\alpha}\pderiv{\betau{j}}{\xu{i}}
    -\f{1}{2}\,\lu{j}\,\lu{k}\,\pderiv{\gmdd{j}{k}}{\xu{i}}
  \,\Big\}.
  \label{eq:gravitationalRedShiftThreePlusOne}
\end{eqnarray}
describes momentum space advection (e.g., redshift and angular aberration) due to gravitational (i.e., curved spacetime) and other geometric effects (arising from the use of curvilinear coordinates).  
In Equation \eqref{eq:gravitationalRedShiftThreePlusOne}, $K_{ij}$ is the extrinsic curvature tensor \citep{baumgarteShapiro_2010}, and we have defined the derivative
\begin{equation}
  \tauderivc{}
  =\tauderiv{}+\lu{j}\,\pderiv{}{\xu{j}}
  =\f{1}{\alpha}\,\Big\{\,\pderiv{}{t}+\big(\,\alpha\,\lu{j}-\betau{j}\,\big)\,\pderiv{}{\xu{j}}\,\Big\}.
\end{equation}
The spacetime divergence part of Equation \eqref{eq:ConservativeBoltzmannEquationGeneralRelativisticThreePlusOne} arises easily from Equation \eqref{eq:ConservativeBoltzmannEquationGeneralRelativistic} with the Eulerian decomposition of $\pu{\mu}$ and the specification of $\fourvelocityEu{\mu}$, while the momentum space divergence is more complicated.  
We include details of the derivation of Equation \eqref{eq:gravitationalRedShiftThreePlusOne} in Section \ref{sec:gravitationalRedShiftThreePlusOne}.  

Note that our use of the Eulerian decomposition of the four-momentum as given in Equation \eqref{eq:fourMomentumEulerian} differs slightly from the formalism used in \citep{cardall_etal_2013}, where Eulerian decompositions of the ``tetrad" transformation (e.g., $L^{\mu}_{~\hat{\mu}}$ in their notation) was employed.  
Also note that we have expressed the radiation four-momentum in terms of an orthonormal ``lab-frame" basis, while an orthonormal ``comoving" basis was used in \citep{cardall_etal_2013}.  
This distinction is very important to consider when the radiation interacts with a moving fluid \citep[e.g.,][]{mihalasMihalas_1999,mihalasAuer_2001}.  
However, for a static fluid, the two formulations coincide.  
We defer the case where the radiation interacts with a moving fluid to a future study.  

\paragraph{Spherically Symmetric Spacetime}
As a simplification used for numerical implementation in this study, we adopt spherical polar spatial coordinates $\{\xu{i}\}=\{r,\theta,\phi\}$ and spherical polar momentum coordinates $\{\putld{\imath}\}=\{\tetradEpsilon,\tetradTheta,\tetradPhi\}$, and specialize Equation \eqref{eq:ConservativeBoltzmannEquationGeneralRelativisticThreePlusOne} to a spherically symmetric spacetime with a metric of the following form
\begin{equation}
  ds^{2}=-\alpha^{2}\,dt^{2}+\gmdd{i}{j}\,d\xu{i}\,d\xu{j}, 
  \label{eq:metricDiagonalCFC}
\end{equation}
(i.e., $\betau{i}=0$) with $\gmdd{i}{j}=\psi^{4}\,\mbox{diag}[1,r^{2},r^{2}\,\sin^{2}\theta]$, $\smdet=\psi^{6}\,r^{2}\,\sin\theta$, and where $\psi$ is the ``conformal factor."  
Furthermore we assume that the metric components are independent of the time coordinate, and we write $\alpha=\alpha(r)$ and $\psi=\psi(r)$.  
Then, all the components of the extrinsic curvature tensor vanish (i.e., $K_{ij}=0$).  

With the diagonal metric tensor in Equation \eqref{eq:metricDiagonalCFC}, we can easily write the transformation between the coordinate basis and the orthonormal tetrad basis as $\tetudb{\mu}{\mu}=\mbox{diag}[\,\alpha^{-1},\tetudb{i}{\imath}\,]$, where $\tetudb{i}{\imath}=\psi^{-2}\,\mbox{diag}[1,r^{-1},(r\,\sin\theta)^{-1}]$.  
We then obtain the conservative Boltzmann equation valid for spherically symmetric spacetimes under the assumptions stated above
\begin{align}
  &
  \f{1}{\alpha}\pderiv{f}{t}
  +\f{1}{\alpha\,\psi^{6}\,r^{2}}\pderiv{}{r}\Big(\,\alpha\,\psi^{4}\,r^{2}\,\mu\,f\,\Big)
  -\f{1}{\tetradEpsilon^{2}}\pderiv{}{\tetradEpsilon}
  \Big(\,\tetradEpsilon^{3}\,\f{1}{\psi^{2}\,\alpha}\pderiv{\alpha}{r}\,\mu\,f\,\Big) \nonumber \\
  & \hspace{12pt}
  +\pderiv{}{\mu}
  \Big(\,\big(1-\mu^{2}\big)\,\psi^{-2}\,
    \Big\{\,
      \f{1}{r}
      +\f{1}{\psi^{2}}\pderiv{\psi^{2}}{r}
      -\f{1}{\alpha}\pderiv{\alpha}{r}
    \,\Big\}\,f
  \,\Big)
  =\f{1}{\tetradEpsilon}\,\collision{f}, 
  \label{eq:ConservativeBoltzmannEquationSphericalSymmetryGRApp}
\end{align}
where the angle cosine is defined as $\mu=\cos\tetradTheta$.  
In particular, Equation \eqref{eq:ConservativeBoltzmannEquationSphericalSymmetryGRApp} is sufficiently general to accommodate the Schwarzschild metric (an exact solution of Einstein's field equations), where
\begin{equation}
  \alpha=\f{1-\f{M}{2\,r}}{1+\f{M}{2\,r}}
  \quad\mbox{and}\quad
  \psi=1+\f{M}{2\,r}, 
  \label{eq:schwarzschildMetric}
\end{equation}
and $M$ is the spacetime mass observed by a distant static observer \citep{baumgarteShapiro_2010}.  
We adopt the Schwarzschild metric and solve Equation \eqref{eq:ConservativeBoltzmannEquationSphericalSymmetryGRApp} numerically in Section \ref{sec:numericalExamplesSphericalSymmetryGR}.  

\subsection{Boltzmann Equation in Flat Spacetimes}
\label{sec:BoltzmannSR}

In this section, we present conservative Boltzmann equations which are considered in the numerical simulations where we use a flat spacetime metric.  
The equations presented here follow directly from simplification of the general relativistic equations in the previous section.   

\paragraph{Conservative Formulation for General Phase Space Coordinates}
For a flat spacetime, but allowing for general curvilinear phase space (spatial and momentum) coordinates, we write the spacetime metric as (i.e., obtained by setting $\alpha=1$ and $\betau{i}=0$ in Equation \eqref{eq:threePlusOneMetric})
\begin{equation}
  \gdd{\mu}{\nu}
  =\left(\begin{array}{cc}
    -1 & 0 \\
    0 & \gmdd{i}{j}
  \end{array}\right), 
\end{equation}
where the spatial metric $\gmdd{i}{j}$ provides the proper distance between points in three-dimensional position space; i.e., $ds_{\vect{x}}^{2}=\gmdd{i}{j}\,d\xu{i}\,d\xu{j}$.  
In this case, Equation \eqref{eq:ConservativeBoltzmannEquationGeneralRelativisticThreePlusOne} can be written as
\begin{eqnarray}
  &&
  \pderiv{f}{t}
  +\f{1}{\smdet}\pderiv{}{\xu{i}}\Big(\,\smdet\,\lu{i}\,f\,\Big)
  +\f{1}{\pmdet}\pderiv{}{\putld{\imath}}\Big(\,\pmdet\,\mathcal{R}^{\tilde{\imath}}\,f\,\Big)
  =\f{1}{\tetradEpsilon}\,\collision{f}, 
  \label{eq:ConservativeBoltzmannEquationCurvilinearFlat}
\end{eqnarray}
where ``geometric" terms describing momentum space advection due to the use of curvilinear coordinates (cf. Equation \eqref{eq:gravitationalRedShiftThreePlusOne}) are given by
\begin{equation}
  \mathcal{R}^{\tilde{\imath}}
  =
  -\tetradEpsilon^{2}\,\lambdautldutld{\imath}{\jmath}\,\pderiv{\lu{i}}{\putld{\jmath}}\,
    \Big\{\,
      \lu{j}\,\pderiv{\ld{i}}{\xu{j}}
      -\f{1}{2}\,\lu{j}\,\lu{k}\,\pderiv{\gmdd{j}{k}}{\xu{i}}
    \,\Big\}.
\end{equation}
Note that $\mathcal{R}^{\tilde{\imath}}=0$ when Cartesian coordinates are used; i.e., $\gmdd{i}{j}=\mbox{diag}[1,1,1]$.  

Below, we adopt spherical polar momentum coordinates $(\tetradEpsilon,\tetradTheta,\tetradPhi)$ and consider two specializations of Equation \eqref{eq:ConservativeBoltzmannEquationCurvilinearFlat}.  

\paragraph{Spherical Symmetry (Spherical Polar Spatial Coordinates)}
By adopting spherical polar spatial coordinates $\{\xu{i}\}=\{r,\theta,\phi\}$, the spatial metric tensor is given by $\gmdd{i}{j}=\mbox{diag}\big[\,1,r^{2},r^{2}\,\sin^{2}\theta\,\big]$.  
Then, by imposing spherical symmetry ($\p_{\theta},\p_{\phi}=0$), Equation \eqref{eq:ConservativeBoltzmannEquationCurvilinearFlat} becomes
\begin{equation}
  \pderiv{f}{t}
  +\f{1}{r^{2}}\pderiv{}{r}\Big(\,r^{2}\,\mu\,f\,\Big)
  +\pderiv{}{\mu}\Big(\,\big(1-\mu^{2}\big)\,\f{1}{r}\,f\,\Big)
  =0.  
  \label{eq:ConservativeBoltzmannEquationSphericalSymmetryFlatApp}
\end{equation}
We solve Equation \eqref{eq:ConservativeBoltzmannEquationSphericalSymmetryFlat} numerically in Section \ref{sec:numericalExamplesSphericalSymmetry}.

\paragraph{Axial Symmetry (Cylindrical Spatial Coordinates)}
In cylindrical spatial coordinates $\{\xu{i}\}=\{\rPerp,z,\phi\}$ the metric tensor is given by $\gmdd{i}{j}=\mbox{diag}\big[\,1,1,\rPerp^{2}\,\big]$.  
By imposing axial symmetry ($\p_{\phi}=0$), Equation \eqref{eq:ConservativeBoltzmannEquationCurvilinearFlat} becomes
\begin{equation}
  \pderiv{f}{t}
  +\f{1}{\rPerp}\pderiv{}{\rPerp}\Big(\,\rPerp\,\sqrt{1-\mu^{2}}\,\cos\tetradPhi\,f\,\Big)
  +\pderiv{}{z}\Big(\,\mu\,f\,\Big)
  -\f{1}{\rPerp}\pderiv{}{\tetradPhi}
  \Big(\,
    \sqrt{1-\mu^{2}}\,\sin\tetradPhi\,f
  \,\Big)
  =\f{1}{\tetradEpsilon}\,\collision{f}.  
  \label{eq:ConservativeBoltzmannEquationAxialSymmetryFlatApp}
\end{equation}
We solve Equation \eqref{eq:ConservativeBoltzmannEquationAxialSymmetryFlat} numerically in Section \ref{sec:numericalExamplesAxialSymmetry}.  

\subsection{General Relativistic 3+1 Momentum Space Flux}
\label{sec:gravitationalRedShiftThreePlusOne}

Here we provide details on the derivation of the momentum space flux appearing in the conservative 3+1 general relativistic Boltzmann equation given in Section \ref{sec:BoltzmannGR} (Equation \eqref{eq:ConservativeBoltzmannEquationGeneralRelativisticThreePlusOne}). 
Some useful relations we use are \citep[cf.][]{cardall_etal_2013}
\begin{eqnarray}
  \fourvelocityEu{\mu}\,\covderiv{\mu}\fourvelocityEd{\nu}
  &=&\f{1}{\alpha}\pderiv{\alpha}{\xu{\nu}}, \label{eq:nDotGradn} \\
  \gmud{\mu}{i}\,\gmud{\nu}{j}\,\covderiv{\mu}\fourvelocityEd{\nu}
  &=&-K_{ij}, \label{eq:extrinsicCurvature} \\
  z_{\mu}\,\pderiv{\fourvelocityEu{\mu}}{\xu{\nu}}
  &=&-\f{z_{i}}{\alpha}\,\pderiv{\betau{i}}{\xu{\nu}}. \quad (\mbox{for $z^{\mu}$ spacelike}). \label{eq:spacelikeDotGradn}
\end{eqnarray}

We elaborate on the term appearing in the momentum space divergence in Equation \eqref{eq:ConservativeBoltzmannEquationGeneralRelativistic}; i.e.,
\begin{equation}
  \jacPutlddb{\imath}{\imath}\,\pu{\nu}\,\pu{\rho}\,\covderiv{\rho}\tetubd{\imath}{\nu}.  
  \label{eq:momentumSpaceNumberFluxGeneralRelativistic}
\end{equation}
We have
\begin{equation}
  \jacPutlddb{\imath}{\imath}
  =\pderiv{\putld{\imath}}{\pub{\imath}}
  =\lambdautldutld{\imath}{\jmath}\,\pderiv{\pdb{\imath}}{\putld{\jmath}}. 
\end{equation}
Then, by employing the Eulerian decomposition of the four-momentum in \eqref{eq:fourMomentumEulerian}, and noting that $\tetubd{\imath}{\nu}\,\pdb{\imath}=\tetradEpsilon\,\ld{\nu}$, we write Equation \eqref{eq:momentumSpaceNumberFluxGeneralRelativistic} as
\begin{equation}
  -\lambdautldutld{\imath}{\jmath}\,\pderiv{\big(\tetradEpsilon\,\lu{\nu}\big)}{\putld{\jmath}}\,\pu{\rho}\,\covderiv{\rho}\pd{\nu}
  =
  -\lambdautldutld{\imath}{\jmath}\,\pderiv{\tetradEpsilon}{\putld{\jmath}}\,\lu{\nu}\,\pu{\rho}\,\covderiv{\rho}\pd{\nu}
  -\tetradEpsilon\lambdautldutld{\imath}{\jmath}\,\pderiv{\lu{\nu}}{\putld{\jmath}}\,\pu{\rho}\,\covderiv{\rho}\pd{\nu}, 
  \label{eq:momentumSpaceNumberFluxIntermediate}
\end{equation}
where we have expanded with the product rule to get two expressions; one parallel and one perpendicular to $\lu{\nu}$ \citep[cf.][]{cardall_etal_2013}, since
\begin{equation}
  \ld{\nu}\,\pderiv{\lu{\nu}}{\putld{\jmath}}
  =\tetubd{\imath}{\nu}\,\tetudb{\nu}{\jmath}\,\ldb{\imath}\,\pderiv{\lub{\jmath}}{\putld{\jmath}}
  =\ldb{\imath}\,\pderiv{\lub{\imath}}{\putld{\jmath}}=0.  
\end{equation}

We can write the term $\lu{\nu}\,\pu{\rho}\,\covderiv{\rho}\pd{\nu}$ appearing on the right-hand side of Equation \eqref{eq:momentumSpaceNumberFluxIntermediate} as
\begin{equation}
  \tetradEpsilon^{2}\,\lu{\nu}\,
  \Big\{\,
    \fourvelocityEu{\rho}\,\covderiv{\rho}\fourvelocityEd{\nu}
    +\lu{\rho}\,\covderiv{\rho}\fourvelocityEd{\nu}
  \,\Big\}
  +\tetradEpsilon\,\pu{\rho}\,\lu{\nu}\,\covderiv{\rho}\ld{\nu}
  =\tetradEpsilon^{2}\,\lu{i}\,
  \Big\{\,
    \f{1}{\alpha}\pderiv{\alpha}{\xu{i}}
    -\lu{j}\,K_{ij}
  \,\Big\}, 
\end{equation}
where we have used the fact that $\lu{\nu}\,\covderiv{\rho}\ld{\nu}=0$ and Equations \eqref{eq:nDotGradn} and \eqref{eq:extrinsicCurvature}.  

For the second term on the right-hand side of Equation \eqref{eq:momentumSpaceNumberFluxIntermediate} we write
\begin{equation}
  \pderiv{\lu{\nu}}{\putld{\jmath}}\,\pu{\rho}\,\covderiv{\rho}\pd{\nu}
  =\tetradEpsilon^{2}\,\pderiv{\lu{\nu}}{\putld{\jmath}}\,
  \Big\{\,
    \fourvelocityEu{\rho}\,\covderiv{\rho}\fourvelocityEd{\nu}
    +\lu{\rho}\,\covderiv{\rho}\fourvelocityEd{\nu}
    +\fourvelocityEu{\rho}\,\covderiv{\rho}\ld{\nu}
    +\lu{\rho}\,\covderiv{\rho}\ld{\nu}
    \label{eq:momentumSpaceNumberFluxIntermediatePerp}
  \,\Big\}.
\end{equation}
We use Equation \eqref{eq:nDotGradn} to rewrite the first term on the right-hand side of Equation \eqref{eq:momentumSpaceNumberFluxIntermediatePerp}; i.e.,
\begin{equation}
  \pderiv{\lu{\nu}}{\putld{\jmath}}\,\fourvelocityEu{\rho}\,\covderiv{\rho}\fourvelocityEd{\nu}
  =\pderiv{\lu{i}}{\putld{\jmath}}\,\f{1}{\alpha}\pderiv{\alpha}{\xu{i}}.  
\end{equation}
Similarly, since both $\partial\lu{\nu}/\partial\putld{\jmath}$ and $\lu{\nu}$ are spacelike, we use Equation \eqref{eq:extrinsicCurvature} to rewrite the second term on the right-hand side of Equation \eqref{eq:momentumSpaceNumberFluxIntermediatePerp}; i.e., 
\begin{equation}
  \pderiv{\lu{\nu}}{\putld{\jmath}}\,\lu{\rho}\,\covderiv{\rho}\fourvelocityEd{\nu}
  =\pderiv{\lu{i}}{\putld{\jmath}}\,\lu{j}\,K_{ij}.  
\end{equation}
For the third term we have
\begin{eqnarray}
  \pderiv{\lu{\nu}}{\putld{\jmath}}\,\fourvelocityEu{\rho}\,\covderiv{\rho}\ld{\nu}
  &=&
  \pderiv{\lu{\nu}}{\putld{\jmath}}\,\fourvelocityEu{\rho}\,
  \Big\{\,
    \pderiv{\ld{\nu}}{\xu{\rho}}
    -\cudd{\mu}{\nu}{\rho}\,\ld{\mu}
  \,\Big\} \nonumber \\
  &=&
  \pderiv{\lu{\nu}}{\putld{\jmath}}\,
  \Big\{\,
    \fourvelocityEu{\rho}\,\pderiv{\ld{\nu}}{\xu{\rho}}
    +\ld{\rho}\,\pderiv{\fourvelocityEu{\rho}}{\xu{\nu}}
    -\lu{\rho}\,\covderiv{\nu}\fourvelocityEd{\rho}
  \,\Big\} \nonumber \\
  &=&
  \pderiv{\lu{i}}{\putld{\jmath}}\,
  \Big\{\,
    \tauderiv{\ld{i}}
    -\f{\ld{j}}{\alpha}\,\pderiv{\betau{j}}{\xu{i}}
    +\lu{j}\,K_{ij}
  \,\Big\}, 
\end{eqnarray}
where we have used Equation \eqref{eq:spacelikeDotGradn}, and defined the ``proper time derivative" along constant coordinate lines
\begin{equation}
  \tauderiv{}=\f{1}{\alpha}\pderiv{}{t}-\f{\betau{i}}{\alpha}\,\pderiv{}{\xu{i}}. 
\end{equation}
Finally, for the fourth term on the right-hand side of Equation \eqref{eq:momentumSpaceNumberFluxIntermediatePerp} we have
\begin{eqnarray}
  \pderiv{\lu{\nu}}{\putld{\jmath}}\,\lu{\rho}\,\covderiv{\rho}\ld{\nu}
  &=&
  \pderiv{\lu{\nu}}{\putld{\jmath}}\,\lu{\rho}\,
  \Big\{\,
  \pderiv{\ld{\nu}}{\xu{\rho}}-\cudd{\mu}{\nu}{\rho}\,\ld{\mu}
  \,\Big\} \\
  &=&
  \pderiv{\lu{i}}{\putld{\jmath}}\,
  \Big\{\,
    \lu{j}\,\pderiv{\ld{i}}{\xu{j}}
    -\f{1}{2}\,\lu{j}\,\lu{k}\,\pderiv{\gmdd{j}{k}}{\xu{i}}
  \,\Big\}.  
\end{eqnarray}
Combining all the terms we obtain the momentum space flux appearing in Equation \eqref{eq:ConservativeBoltzmannEquationGeneralRelativisticThreePlusOne}. 
\clearpage

\bibliographystyle{plain}
\bibliography{./refs/dgCurv.bib}

\end{document}